\newtheorem{theorem}{Theorem}
\newtheorem{lemma}[theorem]{Lemma}
\newcommand{\oa}[1]{\ensuremath{\vec{#1}}}
\newcommand{\iv}[1]{\ensuremath{[\![{#1}]\!]}}
\newcommand{\ham}[1]{\ensuremath{\mathrm{H}[{#1}]}}
\newcommand{\hp}[1]{\ensuremath{\mathrm{P}[{#1}]}}
\newcommand{\td}{\ensuremath{\mathrm{td}}}
\newcommand{\pd}{\ensuremath{\mathrm{pd}}}
\newcommand{\bd}{\ensuremath{\mathrm{bd}}}
\newcommand{\qedwhite}{\hfill \ensuremath{\Box}}
\newcommand{\mineq}{\ensuremath{:=_{\operatorname{min}}}}
\newcommand{\acceq}{\ensuremath{:=_+}}
\newcommand{\cubegraph}{%
	\tikzstyle{node_stylek} = [circle,draw=black,fill=black, inner sep=0pt]
	\tikzstyle{node_stylew} = [circle,draw=black,fill=black, inner sep=0pt]
	
	\tikzstyle{nhidden_style} = [inner sep=0pt, minimum size=0pt]
	\tikzstyle{s} = [draw=black, line width=1, thin]
	\tikzstyle{t} = [draw=black, line width=1.5]
	
	\node[nhidden_style](h11) at (-2,-2) {};
	\node[nhidden_style](h12) at (-2,2) {};
	\node[nhidden_style](h21) at (2,-2) {};
	\node[nhidden_style](h22) at (2,2) {};
	\node[nhidden_style](b11) at (-1,-1) {};
	\node[nhidden_style](b12) at (-1,1) {};
	
	\node[nhidden_style](b21) at (1,-1) {};
	\node[nhidden_style](b22) at (1,1) {};
	
	\begin{pgfonlayer}{bg} 
		
		\draw[s,-]  (h11.center) to [out=100,in=260] (h12.center);
		\draw[s,-]  (h12.center) to [out=10,in=170] (h22.center);
		\draw[s,-]  (h21.center) to [out=80,in=280] (h22.center);
		\draw[s,-]  (h11.center) to [out=350,in=190] (h21.center);
		
		\draw[s,-]  (b21.center) to [out=85,in=275] (b22.center);
		\draw[s,-]  (b12.center) to [out=265,in=95] (b11.center);
		
		\draw[s,-]  (b11.center) to [out=350,in=190] (b21.center);
		\draw[s,-]  (b12.center) to [out=10,in=170] (b22.center);   
		\draw[s,-]  (b11.center) edge (h11.center);
		\draw[s,-]  (h12.center) edge (b12.center);
		\draw[s,-]  (h21.center) edge (b21.center);
		\draw[s,-]  (b22.center) edge (h22.center);
		
	\end{pgfonlayer}
}
\newcommand{\pfaffian}{%
    \tikzstyle{node_stylek} = [rectangle,draw=black,fill=black, inner sep=0pt, minimum size=4pt, line width=.5]
    \tikzstyle{node_stylew} = [circle,draw=black,fill=white, inner sep=0pt, minimum size=4pt, line width=.5]

    \tikzstyle{nhidden_style} = [inner sep=0pt, minimum size=0pt]
    \tikzstyle{s} = [draw=black, line width=1, thin, dashed]
    \tikzstyle{t} = [draw=black, line width=6, thick]
 
    \node[node_stylek](h11) at (-3,-2) {};
    \node[node_stylew](h12) at (-3,2) {};
    \node[node_stylew](h21) at (2,-2) {};
    \node[node_stylek](h22) at (2,2) {};
    \node[node_stylew](b11) at (-2,-1) {};
    \node[node_stylek](b12) at (-2,1) {};
    \node[node_stylek](e11) at (-1,-1.1) {};
    \node[node_stylew](e12) at (-1,1.1) {};
    \node[node_stylew](e21) at (0,-1.1) {};
    \node[node_stylek](e22) at (0,1.1) {};
    \node[node_stylek](b21) at (1,-1) {};
    \node[node_stylew](b22) at (1,1) {};
    
    \begin{pgfonlayer}{bg} 
    
    \draw[->]  (h11) to [out=100,in=260] (h12);
    \draw[->]  (h12) to [out=10,in=170] (h22);
    \draw[->]  (h21) to [out=80,in=280] (h22);
    \draw[->]  (h11) to [out=350,in=190] (h21);
    
    \draw[->]  (b11) edge (e11);
    \draw[->]  (e11) edge (e21);
    \draw[->]  (e21) edge (b21);
    \draw[->]  (b21) to [out=85,in=275] (b22);
  
    \draw[->]  (e22) edge (b22);
    \draw[->]  (e12) edge (e22);
    \draw[->]  (b12) edge (e12);
    \draw[->]  (b12) to [out=265,in=95] (b11);
    
    \draw[->]  (e11) edge (e12);
    \draw[->]  (e22) edge (e21);
    
    \draw[->]  (b11) edge (h11);
    \draw[->]  (h12) edge (b12);
    \draw[->]  (h21) edge (b21);
    \draw[->]  (b22) edge (h22);
    
    \end{pgfonlayer}
}
\newcommand{\coloredone}{%
    \definecolor{orange}{rgb}{1.0,0.64,0}

    \tikzstyle{node_styleb} = [circle,draw=black,fill=blue, inner sep=0pt, minimum size=4pt, line width=.5]
    \tikzstyle{node_styleo} = [circle,draw=black,fill=orange, inner sep=0pt, minimum size=4pt, line width=.5]
    \tikzstyle{node_stylebB} = [rectangle,draw=black,fill=blue, inner sep=0pt, minimum size=4pt, line width=.5]
    \tikzstyle{node_styleoB} = [rectangle,draw=black,fill=orange, inner sep=0pt, minimum size=4pt, line width=.5]

    \tikzstyle{nhidden_style} = [inner sep=0pt, minimum size=0pt]
    \tikzstyle{t} = [draw=black, line width=1.5]
 
    \node[node_styleoB](h11) at (-3,-2) {};
    \node[node_styleb](h12) at (-3,2) {};
    \node[node_styleo](h21) at (2,-2) {};
    \node[node_styleoB](h22) at (2,2) {};
    \node[node_styleb](b11) at (-2,-1) {};
    \node[node_styleoB](b12) at (-2,1) {};
    \node[node_styleoB](e11) at (-1,-1.1) {};
    \node[node_styleb](e12) at (-1,1.1) {};
    \node[node_styleb](e21) at (0,-1.1) {};
    \node[node_styleoB](e22) at (0,1.1) {};
    \node[node_styleoB](b21) at (1,-1) {};
    \node[node_styleb](b22) at (1,1) {};
    
    \begin{pgfonlayer}{bg} 
    
    \draw[t,->]  (h11) to [out=100,in=260] (h12);
    \draw[->]  (h12) to [out=10,in=170] (h22);
    \draw[t,->]  (h22) to [out=280,in=80] (h21);
    \draw[t,->]  (h21) to [out=190,in=350] (h11);
    
    \draw[t,->]  (b11) edge (e11);
    \draw[->]  (e11) edge (e21);
    \draw[t,->]  (e21) edge (b21);
    \draw[t,->]  (b21) to [out=85,in=275] (b22);
  
    \draw[->]  (e22) edge (b22);
    \draw[t,->]  (e12) edge (e22);
    \draw[->]  (b12) edge (e12);
    \draw[t,->]  (b12) to [out=265,in=95] (b11);
    
    \draw[t,->]  (e11) edge (e12);
    \draw[t,->]  (e22) edge (e21);
    
    \draw[->]  (b11) edge (h11);
    \draw[t,->]  (h12) edge (b12);
    \draw[->]  (b21) edge (h21);
    \draw[t,->]  (b22) edge (h22);
    
    \end{pgfonlayer}
}
\newcommand{\coloredtwo}{%
    \definecolor{orange}{rgb}{1.0,0.64,0}
    \tikzstyle{node_styleb} = [circle,draw=black,fill=blue, inner sep=0pt, minimum size=4pt, line width=.5]
    \tikzstyle{node_styleo} = [circle,draw=black,fill=orange, inner sep=0pt, minimum size=4pt, line width=.5]
    \tikzstyle{node_stylebB} = [rectangle,draw=black,fill=blue, inner sep=0pt, minimum size=4pt, line width=.5]
    \tikzstyle{node_styleoB} = [rectangle,draw=black,fill=orange, inner sep=0pt, minimum size=4pt, line width=.5]

    \tikzstyle{nhidden_style} = [inner sep=0pt, minimum size=0pt]
    \tikzstyle{t} = [draw=black, line width=1.5]
 
    \node[node_styleoB](h11) at (-3,-2) {};
    \node[node_styleb](h12) at (-3,2) {};
    \node[node_styleo](h21) at (2,-2) {};
    \node[node_styleoB](h22) at (2,2) {};
    \node[node_styleb](b11) at (-2,-1) {};
    \node[node_styleoB](b12) at (-2,1) {};
    \node[node_styleoB](e11) at (-1,-1.1) {};
    \node[node_styleo](e12) at (-1,1.1) {};
    \node[node_styleb](e21) at (0,-1.1) {};
    \node[node_styleoB](e22) at (0,1.1) {};
    \node[node_styleoB](b21) at (1,-1) {};
    \node[node_styleo](b22) at (1,1) {};
    
    \begin{pgfonlayer}{bg} 
    
    \draw[t,->]  (h11) to [out=100,in=260] (h12);
    \draw[t,->]  (h12) to [out=10,in=170] (h22);
    \draw[->]  (h22) to [out=280,in=80] (h21);
    \draw[t,->]  (h21) to [out=190,in=350] (h11);
    
    \draw[t,->]  (b11) edge (e11);
    \draw[t,->]  (e11) edge (e21);
    \draw[t,->]  (e21) edge (b21);
    \draw[->]  (b22) to [out=275,in=85] (b21);
  
    \draw[t,->]  (b22) edge (e22);
    \draw[t,->]  (e22) edge (e12);
    \draw[t,->]  (e12) edge (b12);
    \draw[t,->]  (b12) to [out=265,in=95] (b11);
    
    \draw[->]  (e12) edge (e11);
    \draw[->]  (e22) edge (e21);
    
    \draw[->]  (b11) edge (h11);
    \draw[->]  (h12) edge (b12);
    \draw[t,->]  (b21) edge (h21);
    \draw[t,->]  (h22) edge (b22);
    
    \end{pgfonlayer}
}
\newcommand{\matching}{%
    \definecolor{orange}{rgb}{1.0,0.64,0}
    \tikzstyle{node_styleb} = [circle,draw=black,fill=blue, inner sep=0pt, minimum size=4pt, line width=.5]
    \tikzstyle{node_styleo} = [circle,draw=black,fill=orange, inner sep=0pt, minimum size=4pt, line width=.5]
    \tikzstyle{node_stylebB} = [rectangle,draw=black,fill=blue, inner sep=0pt, minimum size=4pt, line width=.5]
    \tikzstyle{node_styleoB} = [rectangle,draw=black,fill=orange, inner sep=0pt, minimum size=4pt, line width=.5]

    \tikzstyle{nhidden_style} = [inner sep=0pt, minimum size=0pt]
    \tikzstyle{s} = [draw=black, line width=1, thin, dashed]
    \tikzstyle{t} = [draw=black, line width=1.5]
 
    \node[node_styleoB](h11o) at (-2.85,-2) {};
    \node[node_stylebB](h11b) at (-3.15,-2) {};
    
    \node[node_styleb](h12) at (-3.15,2) {};
    \node[node_styleo](h21) at (2.4,-2) {};
    \node[node_styleoB](h22o) at (2.1,2) {};
     \node[node_stylebB](h22b) at (2.4,2) {};
    \node[node_styleb](b11) at (-2,-1) {};
    \node[node_styleoB](b12o) at (-2.3,1) {};
    \node[node_stylebB](b12b) at (-2,1) {};
    
    \node[node_stylebB](e11b) at (-0.9,-1.1) {};
    \node[node_styleoB](e11o) at (-1.2,-1.1) {};
    
    \node[node_styleb](e12) at (-0.9,1.1) {};
    \node[node_styleb](e21) at (0.2,-1.1) {};
    \node[node_styleoB](e22o) at (-0.1,1.1) {};
    \node[node_stylebB](e22b) at (0.2,1.1) {};
    
    \node[node_styleoB](b21o) at (1.0,-1) {};
    \node[node_stylebB](b21b) at (1.3,-1) {};

    \node[node_styleb](b22) at (1.3,1) {};
    
    \begin{pgfonlayer}{bg} 
    
    \draw[t,->]  (h12) to [out=260,in=100] (h11b);
    \draw[->]  (h12) to [out=10,in=170] (h22o);
    \draw[t,->]  (h21) to [out=80,in=280] (h22b);
    \draw[t,->]  (h21)  to [out=190,in=350] (h11o);
    
    \draw[t,->]  (b11) edge (e11o);
    \draw[->]  (e21) edge (e11b);
    \draw[t,->]  (e21) edge (b21o);
    \draw[t,->]  (b22) to [out=275,in=85] (b21b);
  
    \draw[->]  (b22) edge (e22b);
    \draw[t,->]  (e12) edge (e22o);
    \draw[->]  (e12) edge (b12b);
    \draw[t,->]  (b11) to [out=95,in=265]  (b12b);
    
    \draw[t,->]  (e12) edge (e11b);
    \draw[t,->]  (e21) edge (e22b);
    
    \draw[->]  (b11) edge (h11o);
    \draw[t,->]  (h12) edge (b12o);
    \draw[->]  (h21) edge (b21b);
    \draw[t,->]  (b22) edge (h22o);
    
    \end{pgfonlayer}
}
\newcommand{\overlayerI}{%
    \definecolor{scarlet}{rgb}{1.0,0.14,0}
    \definecolor{indigo}{rgb}{0.57,0,1.0}
    \definecolor{orange}{rgb}{1.0,0.64,0}
    \definecolor{green}{rgb}{0.0,0.5,0.0}
    \tikzstyle{node_styleb} = [circle,draw=black,fill=blue, inner sep=0pt, minimum size=4pt, line width=.5]
    \tikzstyle{node_styleo} = [circle,draw=black,fill=orange, inner sep=0pt, minimum size=4pt, line width=.5]
    \tikzstyle{node_stylebB} = [rectangle,draw=black,fill=blue, inner sep=0pt, minimum size=4pt, line width=.5]
    \tikzstyle{node_styleoB} = [rectangle,draw=black,fill=orange, inner sep=0pt, minimum size=4pt, line width=.5]

    \tikzstyle{nhidden_style} = [inner sep=0pt, minimum size=0pt]
    \tikzstyle{t} = [draw=black, line width=1.5]
    \tikzstyle{s} = [draw=green, line width=1.5]
 
    \node[nhidden_style](b) at (0,-3) {};

    \node[node_styleoB](h11o) at (-2.85,-2) {};
    \node[node_stylebB](h11b) at (-3.15,-2) {};
    
    \node[node_styleb](h12) at (-3.15,2) {};
    \node[node_styleo](h21) at (2.4,-2) {};
    \node[node_styleoB](h22o) at (2.1,2) {};
     \node[node_stylebB](h22b) at (2.4,2) {};
    \node[node_styleb](b11) at (-2,-1) {};
    \node[node_styleoB](b12o) at (-2.3,1) {};
    \node[node_stylebB](b12b) at (-2,1) {};
    
    \node[node_stylebB](e11b) at (-0.9,-1.1) {};
    \node[node_styleoB](e11o) at (-1.2,-1.1) {};
    
    \node[node_styleb](e12) at (-0.9,1.1) {};
    \node[node_styleb](e21) at (0.2,-1.1) {};
    \node[node_styleoB](e22o) at (-0.1,1.1) {};
    \node[node_stylebB](e22b) at (0.2,1.1) {};
    
    \node[node_styleoB](b21o) at (1.0,-1) {};
    \node[node_stylebB](b21b) at (1.3,-1) {};

    \node[node_styleb](b22) at (1.3,1) {};
    
    \begin{pgfonlayer}{bg} 

    \draw[t,->]  (h12) to [out=260,in=100]  (h11b);
    \draw[->]  (h12) to [out=10,in=170]  (h22o);
    \draw[t,->]  (h21) to [out=80,in=280]  (h22b);
    \draw[t,->]  (h21) to [out=190,in=350] (h11o);
    
    \draw[t,->]  (b11) edge (e11o);
    \draw[->]  (e21) edge (e11b);
    \draw[t,->]  (e21) edge (b21o);
    \draw[t,->]  (b22) to [out=275,in=85] (b21b);
  
    \draw[->]  (b22) edge (e22b);
    \draw[t,->]  (e12) edge (e22o);
    \draw[->]  (e12) edge (b12b);
    \draw[t,->]  (b11) to [out=95,in=265] (b12b);
    
    \draw[t,->]  (e12) edge (e11b);
    \draw[t,->]  (e21) edge (e22b);
    
    \draw[->]  (b11) edge (h11o);
    \draw[t,->]  (h12) edge (b12o);
    \draw[->]  (h21) edge (b21b);
    \draw[t,->]  (b22) edge (h22o);
    
    \draw[s,->] (h12) to [out=330, in=100] (b22);
     
    \draw[s,->] (h21) to [out=130, in=350] (b22);
    \draw[s,->] (b22) to [out=190, in=60] (e21);
    \draw[s,->] (e21) to [out=160, in=320] (e12);
    \draw[s,->] (e12) to [out=190, in=60] (b11);
    \draw[s,->] (b11) to [out=270, in=150] (h21);

    \end{pgfonlayer}
}
\newcommand{\overlayerII}{%
    \definecolor{scarlet}{rgb}{1.0,0.14,0}
    \definecolor{indigo}{rgb}{0.57,0,1.0}
    \definecolor{orange}{rgb}{1.0,0.64,0}
    \definecolor{green}{rgb}{0.0,0.5,0.0}
    \tikzstyle{node_styleb} = [circle,draw=black,fill=blue, inner sep=0pt, minimum size=4pt, line width=.5]
    \tikzstyle{node_styleo} = [circle,draw=black,fill=orange, inner sep=0pt, minimum size=4pt, line width=.5]
    \tikzstyle{node_stylebB} = [rectangle,draw=black,fill=blue, inner sep=0pt, minimum size=4pt, line width=.5]
    \tikzstyle{node_styleoB} = [rectangle,draw=black,fill=orange, inner sep=0pt, minimum size=4pt, line width=.5]

    \tikzstyle{nhidden_style} = [inner sep=0pt, minimum size=0pt]
    \tikzstyle{t} = [draw=black, line width=1.5]
    \tikzstyle{s} = [draw=green, line width=1.5]
 
   \node[nhidden_style](b) at (0,-3) {};

    \node[node_styleoB](h11o) at (-2.85,-2) {};
    \node[node_stylebB](h11b) at (-3.15,-2) {};
    
    \node[node_styleb](h12) at (-3.15,2) {};
    \node[node_styleo](h21) at (2.4,-2) {};
    \node[node_styleoB](h22o) at (2.1,2) {};
     \node[node_stylebB](h22b) at (2.4,2) {};
    \node[node_styleb](b11) at (-2,-1) {};
    \node[node_styleoB](b12o) at (-2.3,1) {};
    \node[node_stylebB](b12b) at (-2,1) {};
    
    \node[node_stylebB](e11b) at (-0.9,-1.1) {};
    \node[node_styleoB](e11o) at (-1.2,-1.1) {};
    
    \node[node_styleb](e12) at (-0.9,1.1) {};
    \node[node_styleb](e21) at (0.2,-1.1) {};
    \node[node_styleoB](e22o) at (-0.1,1.1) {};
    \node[node_stylebB](e22b) at (0.2,1.1) {};
    
    \node[node_styleoB](b21o) at (1.0,-1) {};
    \node[node_stylebB](b21b) at (1.3,-1) {};

    \node[node_styleb](b22) at (1.3,1) {};
    
    \begin{pgfonlayer}{bg} 
    
    \draw[t,->]  (h12) to [out=260,in=100]  (h11b);
    \draw[->]  (h12)  to [out=10,in=170]   (h22o);
    \draw[t,->]  (h21) to [out=80,in=280]  (h22b);
    \draw[->]  (h21) to [out=190,in=350]  (h11o);
    
    \draw[t,->]  (b11) edge (e11o);
    \draw[t,->]  (e21) edge (e11b);
    \draw[t,->]  (e21) edge (b21o);
    \draw[->]  (b22) to [out=275,in=85]  (b21b);
  
    \draw[t,->]  (b22) edge (e22b);
    \draw[t,->]  (e12) edge (e22o);
    \draw[t,->]  (e12) edge (b12b);
    \draw[->]  (b11)  to [out=95,in=265] (b12b);
    
    \draw[->]  (e12) edge (e11b);
    \draw[->]  (e21) edge (e22b);
    
    \draw[t,->]  (b11) edge (h11o);
    \draw[t,->]  (h12) edge (b12o);
    \draw[t,->]  (h21) edge (b21b);
    \draw[t,->]  (b22) edge (h22o);
    
    \draw[s,->] (h12) to [out=330, in=100] (b22);
     
    \draw[s,->] (b22) to [out=350, in=130] (h21);
    \draw[s,->] (e21) to [out=60, in=190] (b22);
    \draw[s,->] (e12) to [out=320, in=160] (e21);
    \draw[s,->] (b11) to [out=60, in=190] (e12);
    \draw[s,->] (h21) to [out=150, in=270] (b11);

    \end{pgfonlayer}
}
\newcommand{\solI}{%
    \definecolor{orange}{rgb}{1.0,0.64,0}

    \tikzstyle{node_styleb} = [circle,draw=black,fill=blue, inner sep=0pt, minimum size=4pt, line width=.5]
    \tikzstyle{node_styleo} = [circle,draw=black,fill=orange, inner sep=0pt, minimum size=4pt, line width=.5]
    \tikzstyle{node_stylebB} = [rectangle,draw=black,fill=blue, inner sep=0pt, minimum size=4pt, line width=.5]
    \tikzstyle{node_styleoB} = [rectangle,draw=black,fill=orange, inner sep=0pt, minimum size=4pt, line width=.5]

    \tikzstyle{nhidden_style} = [inner sep=0pt, minimum size=0pt]
    \tikzstyle{t} = [draw=black, line width=1.5]
 
   \node[nhidden_style](b) at (0,-3) {};

    \node[node_styleoB](h11) at (-3,-2) {};
    \node[node_styleb](h12) at (-3,2) {};
    \node[node_styleo](h21) at (2,-2) {};
    \node[node_styleoB](h22) at (2,2) {};
    \node[node_styleb](b11) at (-2,-1) {};
    \node[node_styleoB](b12) at (-2,1) {};
    \node[node_styleoB](e11) at (-1,-1.1) {};
    \node[node_styleb](e12) at (-1,1.1) {};
    \node[node_styleb](e21) at (0,-1.1) {};
    \node[node_styleoB](e22) at (0,1.1) {};
    \node[node_styleoB](b21) at (1,-1) {};
    \node[node_styleb](b22) at (1,1) {};
    
    \begin{pgfonlayer}{bg} 
    
    \draw[t,->]  (h11) to [out=100,in=260] (h12);
    \draw[->]  (h12) to [out=10,in=170]  (h22);
    \draw[t,->]  (h22) to [out=280,in=80] (h21);
    \draw[t,->]  (h21) to [out=190,in=350] (h11);
    
    \draw[t,->]  (b11) edge (e11);
    \draw[->]  (e11) edge (e21);
    \draw[t,->]  (e21) edge (b21);
    \draw[t,->]  (b21) to [out=85,in=275] (b22);
  
    \draw[->]  (e22) edge (b22);
    \draw[t,->]  (e12) edge (e22);
    \draw[->]  (b12) edge (e12);
    \draw[t,->]  (b12)  to [out=265,in=95] (b11);
    
    \draw[t,->]  (e11) edge (e12);
    \draw[t,->]  (e22) edge (e21);
    
    \draw[->]  (b11) edge (h11);
    \draw[t,->]  (h12) edge (b12);
    \draw[->]  (b21) edge (h21);
    \draw[t,->]  (b22) edge (h22);
    
    \end{pgfonlayer}
}
\newcommand{\solII}{%
    \definecolor{orange}{rgb}{1.0,0.64,0}

    \tikzstyle{node_styleb} = [circle,draw=black,fill=blue, inner sep=0pt, minimum size=4pt, line width=.5]
    \tikzstyle{node_styleo} = [circle,draw=black,fill=orange, inner sep=0pt, minimum size=4pt, line width=.5]
    \tikzstyle{node_stylebB} = [rectangle,draw=black,fill=blue, inner sep=0pt, minimum size=4pt, line width=.5]
    \tikzstyle{node_styleoB} = [rectangle,draw=black,fill=orange, inner sep=0pt, minimum size=4pt, line width=.5]

    \tikzstyle{nhidden_style} = [inner sep=0pt, minimum size=0pt]
    \tikzstyle{t} = [draw=black, line width=1.5]
 
    \node[nhidden_style](b) at (0,-3) {};
 
    \node[node_styleoB](h11) at (-3,-2) {};
    \node[node_styleb](h12) at (-3,2) {};
    \node[node_styleo](h21) at (2,-2) {};
    \node[node_styleoB](h22) at (2,2) {};
    \node[node_styleb](b11) at (-2,-1) {};
    \node[node_styleoB](b12) at (-2,1) {};
    \node[node_stylebB](e11) at (-1,-1.1) {};
    \node[node_styleb](e12) at (-1,1.1) {};
    \node[node_styleb](e21) at (0,-1.1) {};
    \node[node_styleoB](e22) at (0,1.1) {};
    \node[node_stylebB](b21) at (1,-1) {};
    \node[node_styleb](b22) at (1,1) {};
    
    \begin{pgfonlayer}{bg} 
    
    \draw[t,->]  (h11) to [out=100,in=260] (h12);
    \draw[->]  (h12) to [out=10,in=170] (h22);
    \draw[t,->]  (h22) to [out=280,in=80]  (h21);
    \draw[->]  (h21)  to [out=190,in=350]  (h11);
    
    \draw[t,<-]  (b11) edge (e11);
    \draw[t,<-]  (e11) edge (e21);
    \draw[t,<-]  (e21) edge (b21);
    \draw[<-]  (b21) to [out=85,in=275] (b22);
  
    \draw[t,->]  (e22) edge (b22);
    \draw[t,->]  (e12) edge (e22);
    \draw[t,->]  (b12) edge (e12);
    \draw[->]  (b12)  to [out=265,in=95]  (b11);
    
    \draw[<-]  (e11) edge (e12);
    \draw[->]  (e22) edge (e21);
    
    \draw[t,->]  (b11) edge (h11);
    \draw[t,->]  (h12) edge (b12);
    \draw[t,<-]  (b21) edge (h21);
    \draw[t,->]  (b22) edge (h22);
    
    \end{pgfonlayer}
}
\newcommand{\avoidone}{%
    \tikzstyle{node_stylek} = [circle,draw=black,fill=black, inner sep=0pt, minimum size=4pt, line width=.5]
    \tikzstyle{node_stylew} = [circle,draw=black,fill=white, inner sep=0pt, minimum size=4pt, line width=.5]

    \tikzstyle{nhidden_style} = [inner sep=0pt, minimum size=0pt]
    \tikzstyle{s} = [.,draw=black, line width=.5]
    \tikzstyle{t} = [draw=black, line width=2]
 
   \node[nhidden_style](h02) at (-3,1) {};

    \node[node_stylek](h11) at (-2,0) {};
    \node[node_stylew](h12) at (-2,1) {};
    \node[node_stylew](h21) at (-1,0) {};
    \node[node_stylek](h22) at (-1,1) {};
    \node[node_stylek](h31) at (0,0) {};
    \node[node_stylew](h32) at (0,1) {};
    \node[node_stylew](h41) at (1,0) {};
    \node[node_stylek](h42) at (1,1) {};
    \node[nhidden_style](h4e1) at (1.7,0) {};
    \node[nhidden_style](h4e2) at (1.7,1) {};
    \node[nhidden_style](h5b1) at (2.3,0) {};
    \node[nhidden_style](h5b2) at (2.3,1) {};
    \node[nhidden_style](hm) at (2.0,.5) {$\dots$};
     
    \node[node_stylek](h51) at (3,0) {};
    \node[node_stylew](h52) at (3,1) {};
    \node[node_stylew](h61) at (4,0) {};
    \node[node_stylek](h62) at (4,1) {};  
    \node[nhidden_style](h72) at (5,1) {};
        
    \begin{pgfonlayer}{bg} 
    
    \draw[t]  (h12) edge (h02.center);
    
    \draw[t]  (h21) edge (h31.center);
    \draw[t]  (h22) edge (h32.center);
    \draw[t]  (h41) edge (h4e1.center);
    \draw[t]  (h42) edge (h4e2.center);
    \draw[t]  (h51) edge (h5b1.center);
    \draw[t]  (h52) edge (h5b2.center);
    
    \draw[t]  (h62) edge (h72.center);
    
    \draw[t] (h11) to [out=270, in=270] (h61.center);

    \draw[s]  (h11) edge (h21.center);
    \draw[s]  (h31) edge (h41.center);
    \draw[s]  (h51) edge (h61.center);
    \draw[s]  (h12) edge (h22.center);
    \draw[s]  (h32) edge (h42.center);
    \draw[s]  (h52) edge (h62.center);
    
    \draw[s]  (h11) edge (h12.center);
    \draw[s]  (h21) edge (h22.center);
    \draw[s]  (h31) edge (h32.center);
    \draw[s]  (h41) edge (h42.center);
    \draw[s]  (h51) edge (h52.center);
    \draw[s]  (h61) edge (h62.center);
       
    \end{pgfonlayer}
    \node[rectangle](R) at (-1.5,1.5) {$c_1$};
    \node[rectangle](R) at (0.5,1.5) {$c_2$};  
    \node[rectangle](R) at (3.5,1.5) {$c_k$}; 
    
    \node[rectangle](R) at (-3.5,1.0) {$a$}; 
    \node[rectangle](R) at (5.5,1.0) {$b$}; 
     
}
\newcommand{\avoidoneex}{%
    \tikzstyle{node_stylek} = [circle,draw=black,fill=black, inner sep=0pt, minimum size=4pt, line width=.5]
    \tikzstyle{node_stylew} = [circle,draw=black,fill=white, inner sep=0pt, minimum size=4pt, line width=.5]

    \tikzstyle{nhidden_style} = [inner sep=0pt, minimum size=0pt]
    \tikzstyle{s} = [.,draw=black, line width=.5]
    \tikzstyle{t} = [draw=black, line width=2]
    \tikzstyle{g} = [draw=gray, line width=2]
 
   \node[nhidden_style](h02) at (-3,1) {};

    \node[node_stylek](h11) at (-2,0) {};
    \node[node_stylew](h12) at (-2,1) {};
    \node[node_stylew](h21) at (-1,0) {};
    \node[node_stylek](h22) at (-1,1) {};
    \node[node_stylek](h31) at (0,0) {};
    \node[node_stylew](h32) at (0,1) {};
    \node[node_stylew](h41) at (1,0) {};
    \node[node_stylek](h42) at (1,1) {};
    \node[nhidden_style](h4e1) at (1.7,0) {};
    \node[nhidden_style](h4e2) at (1.7,1) {};
    \node[nhidden_style](h5b1) at (2.3,0) {};
    \node[nhidden_style](h5b2) at (2.3,1) {};
    \node[nhidden_style](hm) at (2.0,.5) {$\dots$};
     
    \node[node_stylek](h51) at (3,0) {};
    \node[node_stylew](h52) at (3,1) {};
    \node[node_stylew](h61) at (4,0) {};
    \node[node_stylek](h62) at (4,1) {};  
    \node[nhidden_style](h72) at (5,1) {};
        
    \begin{pgfonlayer}{bg} 
    
    \draw[t]  (h12) edge (h02.center);
    
    \draw[t]  (h21) edge (h31.center);
    \draw[t]  (h22) edge (h32.center);
    \draw[t]  (h41) edge (h4e1.center);
    \draw[t]  (h42) edge (h4e2.center);
    \draw[t]  (h51) edge (h5b1.center);
    \draw[t]  (h52) edge (h5b2.center);
    
    \draw[t]  (h62) edge (h72.center);
    
    \draw[t] (h11) to [out=270, in=270] (h61.center);

    \draw[g]  (h11) edge (h21.center);
    \draw[s]  (h31) edge (h41.center);
    \draw[g]  (h51) edge (h61.center);
    \draw[g]  (h12) edge (h22.center);
    \draw[s]  (h32) edge (h42.center);
    \draw[g]  (h52) edge (h62.center);
    
    \draw[s]  (h11) edge (h12.center);
    \draw[s]  (h21) edge (h22.center);
    \draw[g]  (h31) edge (h32.center);
    \draw[g]  (h41) edge (h42.center);
    \draw[s]  (h51) edge (h52.center);
    \draw[s]  (h61) edge (h62.center);
       
    \end{pgfonlayer}
    \node[rectangle](R) at (-1.5,1.5) {$c_1$};
    \node[rectangle](R) at (0.5,1.5) {$c_2$};  
    \node[rectangle](R) at (3.5,1.5) {$c_k$}; 
    
    \node[rectangle](R) at (-3.5,1.0) {$a$}; 
    \node[rectangle](R) at (5.5,1.0) {$b$}; 
     
}
\newcommand{\avoidoneschematic}{%
    \tikzstyle{node_stylek} = [circle,draw=black,fill=black, inner sep=0pt, minimum size=4pt, line width=.5]
    \tikzstyle{node_stylew} = [circle,draw=black,fill=white, inner sep=0pt, minimum size=4pt, line width=.5]

    \tikzstyle{nhidden_style} = [inner sep=0pt, minimum size=0pt]
    \tikzstyle{s} = [dashed,draw=black, line width=.5]
    \tikzstyle{t} = [draw=black, line width=2]
 
    \node[circle, draw=black, inner sep=0pt, minimum size=10pt, line width=1.5](C) at (0,0) {$\dot{\vee}$};   
    \node[nhidden_style](l) at (-2,0) {};
    \node[nhidden_style](r) at (2,0) {};
    
    \node[nhidden_style](h1) at (-1.5,2) {$c_1$};
    \node[nhidden_style](h2) at (-.5,2) {$c_2$};
    \node[nhidden_style](hm) at (.5,2) {$\dots$};

    \node[nhidden_style](hk) at (1.5,2) {$c_k$};

    \node[nhidden_style](hq) at (1.5,-2) {};

    \begin{pgfonlayer}{bg} 
    
    \draw[s]  (C) edge (h1);
    \draw[s]  (C) edge (h2);
    \draw[s]  (C) edge (hk);
    \draw[t]  (C) edge (l);
    \draw[t]  (C) edge (r);

    \end{pgfonlayer}
}
\newcommand{\xor}{%
    \tikzstyle{node_stylek} = [circle,draw=black,fill=black, inner sep=0pt, minimum size=4pt, line width=.5]
    \tikzstyle{node_stylew} = [circle,draw=black,fill=white, inner sep=0pt, minimum size=4pt, line width=.5]

    \tikzstyle{nhidden_style} = [inner sep=0pt, minimum size=0pt]
    \tikzstyle{s} = [.,draw=black, line width=.5]
    \tikzstyle{t} = [draw=black, line width=2]
 
   \node[nhidden_style](h02) at (-3,1) {};

    \node[nhidden_style](h11) at (-3,1) {};
    \node[node_stylew](h12) at (-2,1) {};
    \node[node_stylek](h13) at (-1,1) {};
    \node[node_stylew](h14) at (0,1) {};
    \node[node_stylek](h15) at (1,1) {};
    \node[nhidden_style](h16) at (2,1) {};
    
    \node[nhidden_style](h21) at (-3,-1) {};
    \node[node_stylek](h22) at (-2,-1) {};
    \node[node_stylew](h23) at (-1,-1) {};
    \node[node_stylek](h24) at (0,-1) {};
    \node[node_stylew](h25) at (1,-1) {};
    \node[nhidden_style](h26) at (2,-1) {};
            
    \begin{pgfonlayer}{bg} 
    
    \draw[s]  (h11) edge (h12.center);
    \draw[s]  (h12) edge (h13.center);
    \draw[s]  (h13) edge (h14.center);
    \draw[s]  (h14) edge (h15.center);
    \draw[s]  (h15) edge (h16.center); 
    \draw[t]  (h12) edge (h22.center);
    \draw[t]  (h13) edge (h23.center);
    \draw[t]  (h14) edge (h24.center);
    \draw[t]  (h15) edge (h25.center);
    \draw[s]  (h21) edge (h22.center);
    \draw[s]  (h22) edge (h23.center);
    \draw[s]  (h23) edge (h24.center);
    \draw[s]  (h24) edge (h25.center);
    \draw[s]  (h25) edge (h26.center);

    \end{pgfonlayer}
    \node[rectangle](R) at (-4,-1) {$b_1$};
    \node[rectangle](R) at (3,-1) {$b_2$};  
    \node[rectangle](R) at (-4,1) {$a_1$};  
    \node[rectangle](R) at (3,1) {$a_2$};  
}
\newcommand{\xorschematic}{%
    \tikzstyle{node_stylek} = [circle,draw=black,fill=black, inner sep=0pt, minimum size=4pt, line width=.5]
    \tikzstyle{node_stylew} = [circle,draw=black,fill=white, inner sep=0pt, minimum size=4pt, line width=.5]
    \tikzstyle{node_stylex} = [rectangle,draw=black,fill=black, inner sep=0pt, minimum size=4pt, line width=.5]

    \tikzstyle{nhidden_style} = [inner sep=0pt, minimum size=0pt]
    \tikzstyle{s} = [draw=black, line width=.5]
    \tikzstyle{t} = [draw=black, line width=2]
   
    \node[nhidden_style](h11) at (-3,1) {};
    \node[nhidden_style](h12) at (2,1) {};
    
    \node[nhidden_style](h21) at (-3,-1) {};
    \node[nhidden_style](h22) at (2,-1) {};
    
    \node[node_stylex](h1m) at (-.5,0.8) {};
    \node[node_stylex](h2m) at (-.5,-0.8) {};

    \begin{pgfonlayer}{bg} 
    
    \draw[s]  (h11) edge (h12.center);
    \draw[s]  (h21) edge (h22.center);
    \draw[s]  (h1m) edge (h2m.center);
    
    \node[rectangle](R) at (-4,-1) {$b_1$};
    \node[rectangle](R) at (3,-1) {$b_2$};  
    \node[rectangle](R) at (-4,1) {$a_1$};  
    \node[rectangle](R) at (3,1) {$a_2$};  
    \end{pgfonlayer}
}
\newcommand{\xorcrossing}{%
    \tikzstyle{node_stylek} = [circle,draw=black,fill=black, inner sep=0pt, minimum size=4pt, line width=.5]
    \tikzstyle{node_stylew} = [circle,draw=black,fill=white, inner sep=0pt, minimum size=4pt, line width=.5]
    \tikzstyle{node_stylex} = [rectangle,draw=black,fill=black, inner sep=0pt, minimum size=4pt, line width=.5]

    \tikzstyle{nhidden_style} = [inner sep=0pt, minimum size=0pt]
    \tikzstyle{s} = [draw=black, line width=.5]
    \tikzstyle{t} = [draw=black, line width=2]
   
    \node[nhidden_style](h11) at (-1,2) {};
    \node[nhidden_style](h12) at (1,2) {};
    
    \node[nhidden_style](h21) at (-1,-2) {};
    \node[nhidden_style](h22) at (1,-2) {};
    
    \node[node_stylex](h1m) at (0,1.8) {};
    \node[node_stylex](h2m) at (0,-1.8) {};

    \node[nhidden_style](v11) at (2,-1) {};
    \node[nhidden_style](v12) at (2,1) {};
    
    \node[nhidden_style](v21) at (-2,-1) {};
    \node[nhidden_style](v22) at (-2,1) {};
    
    \node[node_stylex](v1m) at (1.8,0) {};
    \node[node_stylex](v2m) at (-1.8,0) {};
    
    \begin{pgfonlayer}{bg} 
    
    \draw[s]  (h11) edge (h12.center);
    \draw[s]  (h21) edge (h22.center);
    \draw[s]  (h1m) edge (h2m.center);
    
    \draw[s]  (v11) edge (v12.center);
    \draw[s]  (v21) edge (v22.center);
    \draw[s]  (v1m) edge (v2m.center);
   
    \end{pgfonlayer}
}
\newcommand{\xoruncrossing}{%
    \tikzstyle{node_stylek} = [circle,draw=black,fill=black, inner sep=0pt, minimum size=4pt, line width=.5]
    \tikzstyle{node_stylew} = [circle,draw=black,fill=white, inner sep=0pt, minimum size=4pt, line width=.5]
    \tikzstyle{node_stylex} = [rectangle,draw=black,fill=black, inner sep=0pt, minimum size=4pt, line width=.5]

    \tikzstyle{nhidden_style} = [inner sep=0pt, minimum size=0pt]
    \tikzstyle{s} = [draw=black, line width=.5]
    \tikzstyle{t} = [draw=black, line width=2]
   
    \node[nhidden_style](h11) at (-6,2) {};
    \node[nhidden_style](h12) at (6,2) {};
    
    \node[nhidden_style](h21) at (-6,-2) {};
    \node[nhidden_style](h22) at (6,-2) {};
    
    \node[node_stylew](h1m1) at (-4.5,2) {};
    \node[node_stylek](h1c1) at (-4.5,1) {};
    \node[node_stylew](h2c1) at (-4.5,-1) {};
    \node[node_stylek](h2m1) at (-4.5,-2) {};

    \node[node_stylek](h1m2) at (-1.5,2) {};
    \node[node_stylew](h1c2) at (-1.5,1) {};
    \node[node_stylek](h2c2) at (-1.5,-1) {};
    \node[node_stylew](h2m2) at (-1.5,-2) {};

    \node[node_stylew](h1m3) at (1.5,2) {};
    \node[node_stylek](h1c3) at (1.5,1) {};
    \node[node_stylew](h2c3) at (1.5,-1) {};
    \node[node_stylek](h2m3) at (1.5,-2) {};

    \node[node_stylek](h1m4) at (4.5,2) {};
    \node[node_stylew](h1c4) at (4.5,1) {};
    \node[node_stylek](h2c4) at (4.5,-1) {};
    \node[node_stylew](h2m4) at (4.5,-2) {};

    \node[nhidden_style](v11) at (7,-1) {};
    \node[nhidden_style](v12) at (7,1) {};
    
    \node[nhidden_style](v21) at (-7,-1) {};
    \node[nhidden_style](v22) at (-7,1) {};
    
    \node[node_stylex](v1) at (6.8,0) {};
    \node[node_stylex](v2) at (5.05,0) {};
    \node[node_stylex](v3) at (3.95,0) {};
    \node[node_stylex](v4) at (2.05,0) {};
    \node[node_stylex](v5) at (0.95,0) {};
    \node[node_stylex](v6) at (-0.95,0) {};
    \node[node_stylex](v7) at (-2.05,0) {};
    \node[node_stylex](v8) at (-3.95,0) {};
    \node[node_stylex](v9) at (-5.05,0) {};
    \node[node_stylex](v10) at (-6.8,0) {};
    \begin{pgfonlayer}{bg} 
    
    \draw[s]  (h11) edge (h12.center);
    \draw[s]  (h21) edge (h22.center);
    \draw[s]  (h1m1) edge (h1c1.center);
    \draw[s]  (h2m1) edge (h2c1.center);
    \draw[s]  (h1c1) to [out=240, in=120] (h2c1.center);
    \draw[s]  (h2c1) to [out=60, in=300] (h1c1.center);

    \draw[s]  (h1m2) edge (h1c2.center);
    \draw[s]  (h2m2) edge (h2c2.center);
    \draw[s]  (h1c2) to [out=240, in=120] (h2c2.center);
    \draw[s]  (h2c2) to [out=60, in=300] (h1c2.center);
    
    \draw[s]  (h1m3) edge (h1c3.center);
    \draw[s]  (h2m3) edge (h2c3.center);
    \draw[s]  (h1c3) to [out=240, in=120] (h2c3.center);
    \draw[s]  (h2c3) to [out=60, in=300] (h1c3.center);

    \draw[s]  (h1m4) edge (h1c4.center);
    \draw[s]  (h2m4) edge (h2c4.center);
    \draw[s]  (h1c4) to [out=240, in=120] (h2c4.center);
    \draw[s]  (h2c4) to [out=60, in=300] (h1c4.center);

    \draw[s]  (v11) edge (v12.center);
    \draw[s]  (v21) edge (v22.center);
    \draw[s]  (v1) edge (v2.center);
    \draw[s]  (v3) edge (v4.center);
    \draw[s]  (v5) edge (v6.center);
    \draw[s]  (v7) edge (v8.center);
    \draw[s]  (v9) edge (v10.center);
    
    \end{pgfonlayer}
}
\newcommand{\instance}{%
    \tikzstyle{node_stylek} = [circle,draw=black,fill=black, inner sep=0pt, minimum size=4pt, line width=.5]
    \tikzstyle{node_stylew} = [circle,draw=black,fill=white, inner sep=0pt, minimum size=4pt, line width=.5]
    \tikzstyle{node_stylex} = [rectangle,draw=black,fill=black, inner sep=0pt, minimum size=4pt, line width=.5]

    \tikzstyle{nhidden_style} = [inner sep=0pt, minimum size=0pt]
    \tikzstyle{s} = [draw=black, line width=.5]
    \tikzstyle{t} = [draw=black, line width=2]

    \node[circle, draw=black, fill=white, inner sep=0pt, minimum size=10pt, line width=1.5](u1) at (-12.5,0) {$\dot{\vee}$};   
    \node[circle, draw=black, fill=white, inner sep=0pt, minimum size=10pt, line width=1.5](u2) at (-7.5,0) {$\dot{\vee}$};   
    \node[circle, draw=black, fill=white, inner sep=0pt, minimum size=10pt, line width=1.5](u3) at (-2.5,0) {$\dot{\vee}$};   
    \node[circle, draw=black, fill=white, inner sep=0pt, minimum size=10pt, line width=1.5](u4) at (2.5,0) {$\dot{\vee}$};   
    \node[circle, draw=black, fill=white, inner sep=0pt, minimum size=10pt, line width=1.5](u5) at (7.5,0) {$\dot{\vee}$};   
    \node[circle, draw=black, fill=white, inner sep=0pt, minimum size=10pt, line width=1.5](u6) at (12.5,0) {$\dot{\vee}$};   
 
    \node[node_stylek](s11) at (-14,4) {};
    \node[node_stylew](s12) at (-11,4) {};

    \node[node_stylek](s21) at (-9,4) {};
    \node[node_stylew](s22) at (-6,4) {};

    \node[node_stylek](s31) at (-4,4) {};
    \node[node_stylew](s32) at (-1,4) {};
    
    \node[node_stylek](s41) at (1,4) {};
    \node[node_stylew](s42) at (4,4) {};

    \node[node_stylek](s51) at (6,4) {};
    \node[node_stylew](s52) at (9,4) {};

    \node[node_stylek](s61) at (11,4) {};
    \node[node_stylew](s62) at (14,4) {};

    \node[node_stylex](c1) at (-3,3.5) {};
    \node[node_stylex](c2) at (-2.5,3.5) {};
    \node[node_stylex](c3) at (-2.0,3.5) {};
    \node[node_stylex](d1) at (-12.5,0.5) {};
    \node[node_stylex](d2) at (2.25,0.5) {};
    \node[node_stylex](d3) at (7.5,0.5) {};

    \node[node_stylex](e1) at (2,3.5) {};
    \node[node_stylex](e2) at (2.5,3.5) {};
    \node[node_stylex](e3) at (3,3.5) {};
    \node[node_stylex](f1) at (-7.5,0.5) {};
    \node[node_stylex](f2) at (-2.5,0.5) {};
    \node[node_stylex](f3) at (2.75,0.5) {};
  
    \begin{pgfonlayer}{bg} 
    
    \draw[t]  (u1) edge (u2.center);
    \draw[t]  (u2) edge (u3.center);
    \draw[t]  (u3) edge (u4.center);
    \draw[t]  (u4) edge (u5.center);
    \draw[t]  (u5) edge (u6.center);
    
    \draw[t] (u6) to [out=30, in=330] (s62.center);
    \draw[t] (s11) to [out=210, in=150] (u1.center);
    
    \draw[s] (s11) to [out=30, in=150] (s12.center);
    \draw[s] (s11) to [out=-30, in=-150] (s12.center);
    \draw[t] (s12) edge (s21.center);   
    \draw[s] (s21) to [out=30, in=150] (s22.center);
    \draw[s] (s21) to [out=-30, in=-150] (s22.center);
    \draw[t] (s22) edge (s31.center);   
    \draw[s] (s31) to [out=30, in=150] (s32.center);
    \draw[s] (s31) to [out=-30, in=-150] (s32.center);
    \draw[t] (s32) edge (s41.center);   
    \draw[s] (s41) to [out=30, in=150] (s42.center);
    \draw[s] (s41) to [out=-30, in=-150] (s42.center);
    \draw[t] (s42) edge (s51.center);   
    \draw[s] (s51) to [out=30, in=150] (s52.center);
    \draw[s] (s51) to [out=-30, in=-150] (s52.center);
    \draw[t] (s52) edge (s61.center);   
    \draw[s] (s61) to [out=30, in=150] (s62.center);
    \draw[s] (s61) to [out=-30, in=-150] (s62.center);
   
    \draw[s](c1) edge (d1.center);
    \draw[s](c2) edge (d2.center);
    \draw[s](c3) edge (d3.center);

    \draw[s](e1) edge (f1.center);
    \draw[s](e2) edge (f2.center);
    \draw[s](e3) edge (f3.center);

    \node[rectangle](R) at (-12.5,-1) {$u_1$};
    \node[rectangle](R) at (-7.5,-1) {$u_2$};  
    \node[rectangle](R) at (-2.5,-1) {$u_3$};  
    \node[rectangle](R) at (2.5,-1) {$u_4$};  
    \node[rectangle](R) at (7.5,-1) {$u_5$};  
    \node[rectangle](R) at (12.5,-1) {$u_6$};  
   
    \node[rectangle](R) at (-12.5,5) {$S_1$};
    \node[rectangle](R) at (-7.5,5) {$S_2$};  
    \node[rectangle](R) at (-2.5,5) {$S_3$};  
    \node[rectangle](R) at (2.5,5) {$S_4$};  
    \node[rectangle](R) at (7.5,5) {$S_5$};  
    \node[rectangle](R) at (12.5,5) {$S_6$};   
    \end{pgfonlayer}
}
\newcommand{\Hlp}{%
    \definecolor{scarlet}{rgb}{1.0,0.14,0}
    \definecolor{orange}{rgb}{1.0,0.64,0}
    \definecolor{green}{rgb}{0.0,0.5,0.0}
    \tikzstyle{node_styleb} = [circle,draw=black,fill=blue, inner sep=0pt, minimum size=4pt, line width=.5]
    \tikzstyle{node_styleo} = [rectangle,draw=black,fill=orange, inner sep=0pt, minimum size=4pt, line width=.5]
 
    \tikzstyle{nhidden_style} = [inner sep=0pt, minimum size=0pt]
    \tikzstyle{t} = [draw=black, line width=1.5]
    \tikzstyle{s} = [draw=green, line width=1.5]
 
    \node[nhidden_style](b) at (0,-3) {};

    \node[node_styleb](h1) at (-3.5,2.3) {};
    \node[node_styleo](h2) at (-1.2,3) {};
    \node[node_styleb](h3) at (1.2,3) {};
    \node[node_styleo](h4) at (3.5,2.3) {};
  
    \node[node_styleo](h5) at (-2,0.9) {};
    \node[node_styleb](h6) at (-0.7,1.4) {};
    \node[node_styleo](h7) at (0.7,1.4) {};
    \node[node_styleb](h8) at (2,0.9) {};
  
    \node[node_styleb](h9) at (-2,-0.9) {};
    \node[node_styleo](h10) at (-0.7,-1.4) {};
    \node[node_styleb](h11) at (0.7,-1.4) {};
    \node[node_styleo](h12) at (2,-0.9) {};
  
    \node[node_styleo](h13) at (-3.5,-2.3) {};
    \node[node_styleb](h14) at (-1.2,-3) {};
    \node[node_styleo](h15) at (1.2,-3) {};
    \node[node_styleb](h16) at (3.5,-2.3) {};
    
    \begin{pgfonlayer}{bg} 

    \draw[t,->]  (h13) to [out=105,in=255]  (h1);
    \draw[t,->]  (h1) to [out=30,in=190]  (h2);
    \draw[t,->]  (h2) to [out=300,in=100]  (h6);
    \draw[t,->]  (h6) to [out=190,in=20]  (h5);
    \draw[t,->]  (h5) to [out=262,in=98]  (h9);
    \draw[t,->]  (h9) to [out=-20,in=170]  (h10);
    \draw[t,->]  (h10) to [out=0,in=180]  (h11);
    \draw[t,->]  (h11) to [out=10,in=210]  (h12); 
    \draw[t,->]  (h12) to [out=82,in=278]  (h8); 
    \draw[t,->]  (h8) to [out=150,in=-10]  (h7); 
    \draw[t,->]  (h7) to [out=80,in=240]  (h3); 
    \draw[t,->]  (h3) to [out=-10,in=150]  (h4); 
    \draw[t,->]  (h4) to [out=285,in=75]  (h16); 
    \draw[t,->]  (h16) to [out=200,in=10]  (h15); 
    \draw[t,->]  (h15) to [out=190,in=-10]  (h14); 
    \draw[t,->]  (h14) to [out=170,in=-30]  (h13);
    
    \draw[->]  (h1) to [out=-40,in=130]  (h5);
    \draw[->]  (h2) to [out=10,in=170]  (h3);  
    \draw[->]  (h6) to [out=10,in=170]  (h7);
    \draw[->]  (h8) to [out=40,in=230]  (h4);
   
    \draw[->]  (h9) to [out=230,in=40]  (h13);
    \draw[->]  (h10) to [out=260,in=60]  (h14);  
    \draw[->]  (h11) to [out=280,in=120]  (h15);
    \draw[->]  (h12) to [out=-50,in=150]  (h16);     
 
    \end{pgfonlayer}
}
\newcommand{\PlpO}{%
    \definecolor{scarlet}{rgb}{1.0,0.14,0}
    \definecolor{orange}{rgb}{1.0,0.64,0}
    \definecolor{green}{rgb}{0.0,0.5,0.0}
    \definecolor{lgreen}{rgb}{0.2,0.9,0.2}
    \tikzstyle{node_styleb} = [circle,draw=black,fill=blue, inner sep=0pt, minimum size=4pt, line width=.5]
    \tikzstyle{node_styleo} = [rectangle,draw=black,fill=orange, inner sep=0pt, minimum size=4pt, line width=.5]
    \tikzstyle{node_styles} = [circle,draw=black,fill=scarlet, inner sep=0pt, minimum size=4pt, line width=.5]
    
    \tikzstyle{nhidden_style} = [inner sep=0pt, minimum size=0pt]
    \tikzstyle{t} = [draw=black, line width=1.5]
    \tikzstyle{s} = [draw=green, line width=1.5]
    \tikzstyle{sx} = [draw=lgreen, line width=1.5]
      
    \node[nhidden_style](b) at (0,-3) {};

    \node[node_styleb](h1) at (-3.5,2.3) {};
    \node[node_styleo](h2) at (-1.2,3) {};
    \node[node_styleb](h3) at (1.2,3) {};
    \node[node_styleo](h4) at (3.5,2.3) {};
  
    \node[node_styleo](h5) at (-2,0.9) {};
    \node[node_styleb](h6) at (-0.7,1.4) {};
    \node[node_styleo](h7) at (0.7,1.4) {};
    \node[node_styleb](h8) at (2,0.9) {};
  
    \node[node_styleb](h9) at (-2,-0.9) {};
    \node[node_styleo](h10) at (-0.7,-1.4) {};
    \node[node_styleb](h11) at (0.7,-1.4) {};
    \node[node_styleo](h12) at (2,-0.9) {};
  
    \node[node_styleo](h13) at (-3.5,-2.3) {};
    \node[node_styles](h14) at (-1.2,-3) {};
    \node[node_styleo](h15) at (1.2,-3) {};
    \node[node_styleb](h16) at (3.5,-2.3) {};
    
    \begin{pgfonlayer}{bg} 

    \draw[t,->]  (h13) to [out=105,in=255]  (h1);
    \draw[t,->]  (h1) to [out=30,in=190]  (h2);
    \draw[t,->]  (h2) to [out=300,in=100]  (h6);
    \draw[t,->]  (h6) to [out=190,in=20]  (h5);
    \draw[t,->]  (h5) to [out=262,in=98]  (h9);
    \draw[t,->]  (h9) to [out=-20,in=170]  (h10);
    \draw[t,->]  (h10) to [out=0,in=180]  (h11);
    \draw[t,->]  (h11) to [out=10,in=210]  (h12); 
    \draw[t,->]  (h12) to [out=82,in=278]  (h8); 
    \draw[t,->]  (h8) to [out=150,in=-10]  (h7); 
    \draw[t,->]  (h7) to [out=80,in=240]  (h3); 
    \draw[t,->]  (h3) to [out=-10,in=150]  (h4); 
    \draw[t,->]  (h4) to [out=285,in=75]  (h16); 
    \draw[t,->]  (h16) to [out=200,in=10]  (h15); 
    \draw[t,->]  (h15) to [out=190,in=-10]  (h14); 
    \draw[->]  (h14) to [out=170,in=-30]  (h13);
    
    \draw[->]  (h1) to [out=-40,in=130]  (h5);
    \draw[->]  (h2) to [out=10,in=170]  (h3);  
    \draw[->]  (h6) to [out=10,in=170]  (h7);
    \draw[->]  (h8) to [out=40,in=230]  (h4);
   
    \draw[->]  (h9) to [out=230,in=40]  (h13);
    \draw[->]  (h10) to [out=260,in=60]  (h14);  
    \draw[->]  (h11) to [out=280,in=120]  (h15);
    \draw[->]  (h12) to [out=-50,in=150]  (h16);     
    
    \draw[s,->] (h1) to [out=-30, in=180] (h6);
    \draw[s,->] (h6) to [out=-30, in=180] (h8);
    \draw[s,->] (h8) to [out=90, in=-60] (h3);
    \draw[s,->] (h3) to [out=190, in=60] (h6);
    
    \draw[s,->] (h9) to [out=260, in=150] (h14);
    \draw[sx,->] (h14) to [out=60, in=200] (h11);
    \draw[s,->] (h11) to [out=-60, in=170] (h16);
    \draw[s,->] (h16) to [out=120, in=280] (h8);

    \end{pgfonlayer}
}
\newcommand{\PlpI}{%
    \definecolor{scarlet}{rgb}{1.0,0.14,0}
    \definecolor{orange}{rgb}{1.0,0.64,0}
    \definecolor{green}{rgb}{0.0,0.5,0.0}
    \definecolor{lgreen}{rgb}{0.2,0.9,0.2} 
    \tikzstyle{node_styleb} = [circle,draw=black,fill=blue, inner sep=0pt, minimum size=4pt, line width=.5]
    \tikzstyle{node_styleo} = [rectangle,draw=black,fill=orange, inner sep=0pt, minimum size=4pt, line width=.5]
    \tikzstyle{node_styles} = [circle,draw=black,fill=scarlet, inner sep=0pt, minimum size=4pt, line width=.5]
    
    \tikzstyle{nhidden_style} = [inner sep=0pt, minimum size=0pt]
    \tikzstyle{t} = [draw=black, line width=1.5]
    \tikzstyle{s} = [draw=green, line width=1.5]
    \tikzstyle{sx} = [draw=lgreen, line width=1.5]
    
    \node[nhidden_style](b) at (0,-3) {};

    \node[node_styleb](h1) at (-3.5,2.3) {};
    \node[node_styleo](h2) at (-1.2,3) {};
    \node[node_styleb](h3) at (1.2,3) {};
    \node[node_styleo](h4) at (3.5,2.3) {};
  
    \node[node_styleo](h5) at (-2,0.9) {};
    \node[node_styleb](h6) at (-0.7,1.4) {};
    \node[node_styleo](h7) at (0.7,1.4) {};
    \node[node_styleb](h8) at (2,0.9) {};
  
    \node[node_styleb](h9) at (-2,-0.9) {};
    \node[node_styleo](h10) at (-0.7,-1.4) {};
    \node[node_styles](h11) at (0.7,-1.4) {};
    \node[node_styleo](h12) at (2,-0.9) {};
  
    \node[node_styleo](h13) at (-3.5,-2.3) {};
    \node[node_styleb](h14) at (-1.2,-3) {};
    \node[node_styleo](h15) at (1.2,-3) {};
    \node[node_styleb](h16) at (3.5,-2.3) {};
    
    \begin{pgfonlayer}{bg} 

    \draw[t,->]  (h13) to [out=105,in=255]  (h1);
    \draw[t,->]  (h1) to [out=30,in=190]  (h2);
    \draw[t,->]  (h2) to [out=300,in=100]  (h6);
    \draw[t,->]  (h6) to [out=190,in=20]  (h5);
    \draw[t,->]  (h5) to [out=262,in=98]  (h9);
    \draw[t,->]  (h9) to [out=-20,in=170]  (h10);
    \draw[->]  (h10) to [out=0,in=180]  (h11);
    \draw[t,->]  (h11) to [out=10,in=210]  (h12); 
    \draw[t,->]  (h12) to [out=82,in=278]  (h8); 
    \draw[t,->]  (h8) to [out=150,in=-10]  (h7); 
    \draw[t,->]  (h7) to [out=80,in=240]  (h3); 
    \draw[t,->]  (h3) to [out=-10,in=150]  (h4); 
    \draw[t,->]  (h4) to [out=285,in=75]  (h16); 
    \draw[t,->]  (h16) to [out=200,in=10]  (h15); 
    \draw[t,->]  (h15) to [out=190,in=-10]  (h14); 
    \draw[->]  (h14) to [out=170,in=-30]  (h13);
    
    \draw[->]  (h1) to [out=-40,in=130]  (h5);
    \draw[->]  (h2) to [out=10,in=170]  (h3);  
    \draw[->]  (h6) to [out=10,in=170]  (h7);
    \draw[->]  (h8) to [out=40,in=230]  (h4);
   
    \draw[->]  (h9) to [out=230,in=40]  (h13);
    \draw[t,->]  (h10) to [out=260,in=60]  (h14);  
    \draw[->]  (h11) to [out=280,in=120]  (h15);
    \draw[->]  (h12) to [out=-50,in=150]  (h16);     
    
    \draw[s,->] (h1) to [out=-30, in=180] (h6);
    \draw[s,->] (h6) to [out=-30, in=180] (h8);
    \draw[s,->] (h8) to [out=90, in=-60] (h3);
    \draw[s,->] (h3) to [out=190, in=60] (h6);
    
    \draw[s,->] (h9) to [out=260, in=150] (h14);
    \draw[s,->] (h11) to [out=200, in=60] (h14);
    \draw[sx,->] (h11) to [out=-60, in=170] (h16);
    \draw[s,->] (h16) to [out=120, in=280] (h8);

    \end{pgfonlayer}
}
\newcommand{\PlpII}{%
    \definecolor{scarlet}{rgb}{1.0,0.14,0}
    \definecolor{orange}{rgb}{1.0,0.64,0}
    \definecolor{green}{rgb}{0.0,0.5,0.0}
    \definecolor{lgreen}{rgb}{0.2,0.9,0.2}
    \tikzstyle{node_styleb} = [circle,draw=black,fill=blue, inner sep=0pt, minimum size=4pt, line width=.5]
    \tikzstyle{node_styleo} = [rectangle,draw=black,fill=orange, inner sep=0pt, minimum size=4pt, line width=.5]
    \tikzstyle{node_styles} = [circle,draw=black,fill=scarlet, inner sep=0pt, minimum size=4pt, line width=.5]
    
    \tikzstyle{nhidden_style} = [inner sep=0pt, minimum size=0pt]
    \tikzstyle{t} = [draw=black, line width=1.5]
    \tikzstyle{s} = [draw=green, line width=1.5]
    \tikzstyle{sx} = [draw=lgreen, line width=1.5]
    
    \node[nhidden_style](b) at (0,-3) {};

    \node[node_styleb](h1) at (-3.5,2.3) {};
    \node[node_styleo](h2) at (-1.2,3) {};
    \node[node_styleb](h3) at (1.2,3) {};
    \node[node_styleo](h4) at (3.5,2.3) {};
  
    \node[node_styleo](h5) at (-2,0.9) {};
    \node[node_styleb](h6) at (-0.7,1.4) {};
    \node[node_styleo](h7) at (0.7,1.4) {};
    \node[node_styleb](h8) at (2,0.9) {};
  
    \node[node_styleb](h9) at (-2,-0.9) {};
    \node[node_styleo](h10) at (-0.7,-1.4) {};
    \node[node_styleb](h11) at (0.7,-1.4) {};
    \node[node_styleo](h12) at (2,-0.9) {};
  
    \node[node_styleo](h13) at (-3.5,-2.3) {};
    \node[node_styleb](h14) at (-1.2,-3) {};
    \node[node_styleo](h15) at (1.2,-3) {};
    \node[node_styles](h16) at (3.5,-2.3) {};
    
    \begin{pgfonlayer}{bg} 

    \draw[t,->]  (h13) to [out=105,in=255]  (h1);
    \draw[t,->]  (h1) to [out=30,in=190]  (h2);
    \draw[t,->]  (h2) to [out=300,in=100]  (h6);
    \draw[t,->]  (h6) to [out=190,in=20]  (h5);
    \draw[t,->]  (h5) to [out=262,in=98]  (h9);
    \draw[t,->]  (h9) to [out=-20,in=170]  (h10);
    \draw[->]  (h10) to [out=0,in=180]  (h11);
    \draw[t,->]  (h11) to [out=10,in=210]  (h12); 
    \draw[t,->]  (h12) to [out=82,in=278]  (h8); 
    \draw[t,->]  (h8) to [out=150,in=-10]  (h7); 
    \draw[t,->]  (h7) to [out=80,in=240]  (h3); 
    \draw[t,->]  (h3) to [out=-10,in=150]  (h4); 
    \draw[t,->]  (h4) to [out=285,in=75]  (h16); 
    \draw[->]  (h16) to [out=200,in=10]  (h15); 
    \draw[t,->]  (h15) to [out=190,in=-10]  (h14); 
    \draw[->]  (h14) to [out=170,in=-30]  (h13);
    
    \draw[->]  (h1) to [out=-40,in=130]  (h5);
    \draw[->]  (h2) to [out=10,in=170]  (h3);  
    \draw[->]  (h6) to [out=10,in=170]  (h7);
    \draw[->]  (h8) to [out=40,in=230]  (h4);
   
    \draw[->]  (h9) to [out=230,in=40]  (h13);
    \draw[t,->]  (h10) to [out=260,in=60]  (h14);  
    \draw[t,->]  (h11) to [out=280,in=120]  (h15);
    \draw[->]  (h12) to [out=-50,in=150]  (h16);     
    
    \draw[s,->] (h1) to [out=-30, in=180] (h6);
    \draw[s,->] (h6) to [out=-30, in=180] (h8);
    \draw[s,->] (h8) to [out=90, in=-60] (h3);
    \draw[s,->] (h3) to [out=190, in=60] (h6);
    
    \draw[s,->] (h9) to [out=260, in=150] (h14);
    \draw[s,->] (h11) to [out=200, in=60] (h14);
    \draw[s,->] (h16) to [out=170, in=-60] (h11);
    \draw[sx,->] (h16) to [out=120, in=280] (h8);

    \end{pgfonlayer}
}
\newcommand{\PlpIII}{%
    \definecolor{scarlet}{rgb}{1.0,0.14,0}
    \definecolor{orange}{rgb}{1.0,0.64,0}
    \definecolor{green}{rgb}{0.0,0.5,0.0}
    \definecolor{lgreen}{rgb}{0.2,0.9,0.2}
    \tikzstyle{node_styleb} = [circle,draw=black,fill=blue, inner sep=0pt, minimum size=4pt, line width=.5]
    \tikzstyle{node_styleo} = [rectangle,draw=black,fill=orange, inner sep=0pt, minimum size=4pt, line width=.5]
    \tikzstyle{node_styles} = [circle,draw=black,fill=scarlet, inner sep=0pt, minimum size=4pt, line width=.5]
    
    \tikzstyle{nhidden_style} = [inner sep=0pt, minimum size=0pt]
    \tikzstyle{t} = [draw=black, line width=1.5]
    \tikzstyle{s} = [draw=green, line width=1.5]
    \tikzstyle{sx} = [draw=lgreen, line width=1.5]
      
    \node[nhidden_style](b) at (0,-3) {};

    \node[node_styleb](h1) at (-3.5,2.3) {};
    \node[node_styleo](h2) at (-1.2,3) {};
    \node[node_styleb](h3) at (1.2,3) {};
    \node[node_styleo](h4) at (3.5,2.3) {};
  
    \node[node_styleo](h5) at (-2,0.9) {};
    \node[node_styleb](h6) at (-0.7,1.4) {};
    \node[node_styleo](h7) at (0.7,1.4) {};
    \node[node_styles](h8) at (2,0.9) {};
  
    \node[node_styleb](h9) at (-2,-0.9) {};
    \node[node_styleo](h10) at (-0.7,-1.4) {};
    \node[node_styleb](h11) at (0.7,-1.4) {};
    \node[node_styleo](h12) at (2,-0.9) {};
  
    \node[node_styleo](h13) at (-3.5,-2.3) {};
    \node[node_styleb](h14) at (-1.2,-3) {};
    \node[node_styleo](h15) at (1.2,-3) {};
    \node[node_styleb](h16) at (3.5,-2.3) {};
    
    \begin{pgfonlayer}{bg} 

    \draw[t,->]  (h13) to [out=105,in=255]  (h1);
    \draw[t,->]  (h1) to [out=30,in=190]  (h2);
    \draw[t,->]  (h2) to [out=300,in=100]  (h6);
    \draw[t,->]  (h6) to [out=190,in=20]  (h5);
    \draw[t,->]  (h5) to [out=262,in=98]  (h9);
    \draw[t,->]  (h9) to [out=-20,in=170]  (h10);
    \draw[->]  (h10) to [out=0,in=180]  (h11);
    \draw[t,->]  (h11) to [out=10,in=210]  (h12); 
    \draw[->]  (h12) to [out=82,in=278]  (h8); 
    \draw[t,->]  (h8) to [out=150,in=-10]  (h7); 
    \draw[t,->]  (h7) to [out=80,in=240]  (h3); 
    \draw[t,->]  (h3) to [out=-10,in=150]  (h4); 
    \draw[t,->]  (h4) to [out=285,in=75]  (h16); 
    \draw[->]  (h16) to [out=200,in=10]  (h15); 
    \draw[t,->]  (h15) to [out=190,in=-10]  (h14); 
    \draw[->]  (h14) to [out=170,in=-30]  (h13);
    
    \draw[->]  (h1) to [out=-40,in=130]  (h5);
    \draw[->]  (h2) to [out=10,in=170]  (h3);  
    \draw[->]  (h6) to [out=10,in=170]  (h7);
    \draw[->]  (h8) to [out=40,in=230]  (h4);
   
    \draw[->]  (h9) to [out=230,in=40]  (h13);
    \draw[t,->]  (h10) to [out=260,in=60]  (h14);  
    \draw[t,->]  (h11) to [out=280,in=120]  (h15);
    \draw[t,->]  (h12) to [out=-50,in=150]  (h16);     
    
    \draw[s,->] (h1) to [out=-30, in=180] (h6);
    \draw[s,->] (h6) to [out=-30, in=180] (h8);
    \draw[sx,->] (h8) to [out=90, in=-60] (h3);
    \draw[s,->] (h3) to [out=190, in=60] (h6);
    
    \draw[s,->] (h9) to [out=260, in=150] (h14);
    \draw[s,->] (h11) to [out=200, in=60] (h14);
    \draw[s,->] (h16) to [out=170, in=-60] (h11);
    \draw[s,->] (h8) to [out=280, in=120] (h16);

    \end{pgfonlayer}
}
\newcommand{\PlpIV}{%
    \definecolor{scarlet}{rgb}{1.0,0.14,0}
    \definecolor{orange}{rgb}{1.0,0.64,0}
    \definecolor{green}{rgb}{0.0,0.5,0.0}
    \definecolor{lgreen}{rgb}{0.2,0.9,0.2}
    \tikzstyle{node_styleb} = [circle,draw=black,fill=blue, inner sep=0pt, minimum size=4pt, line width=.5]
    \tikzstyle{node_styleo} = [rectangle,draw=black,fill=orange, inner sep=0pt, minimum size=4pt, line width=.5]
    \tikzstyle{node_styles} = [circle,draw=black,fill=scarlet, inner sep=0pt, minimum size=4pt, line width=.5]
    
    \tikzstyle{nhidden_style} = [inner sep=0pt, minimum size=0pt]
    \tikzstyle{t} = [draw=black, line width=1.5]
    \tikzstyle{s} = [draw=green, line width=1.5]
    \tikzstyle{sx} = [draw=lgreen, line width=1.5]
     
    \node[nhidden_style](b) at (0,-3) {};

    \node[node_styleb](h1) at (-3.5,2.3) {};
    \node[node_styleo](h2) at (-1.2,3) {};
    \node[node_styles](h3) at (1.2,3) {};
    \node[node_styleo](h4) at (3.5,2.3) {};
  
    \node[node_styleo](h5) at (-2,0.9) {};
    \node[node_styleb](h6) at (-0.7,1.4) {};
    \node[node_styleo](h7) at (0.7,1.4) {};
    \node[node_styleb](h8) at (2,0.9) {};
  
    \node[node_styleb](h9) at (-2,-0.9) {};
    \node[node_styleo](h10) at (-0.7,-1.4) {};
    \node[node_styleb](h11) at (0.7,-1.4) {};
    \node[node_styleo](h12) at (2,-0.9) {};
  
    \node[node_styleo](h13) at (-3.5,-2.3) {};
    \node[node_styleb](h14) at (-1.2,-3) {};
    \node[node_styleo](h15) at (1.2,-3) {};
    \node[node_styleb](h16) at (3.5,-2.3) {};
    
    \begin{pgfonlayer}{bg} 

    \draw[t,->]  (h13) to [out=105,in=255]  (h1);
    \draw[t,->]  (h1) to [out=30,in=190]  (h2);
    \draw[t,->]  (h2) to [out=300,in=100]  (h6);
    \draw[t,->]  (h6) to [out=190,in=20]  (h5);
    \draw[t,->]  (h5) to [out=262,in=98]  (h9);
    \draw[t,->]  (h9) to [out=-20,in=170]  (h10);
    \draw[->]  (h10) to [out=0,in=180]  (h11);
    \draw[t,->]  (h11) to [out=10,in=210]  (h12); 
    \draw[->]  (h12) to [out=82,in=278]  (h8); 
    \draw[t,->]  (h8) to [out=150,in=-10]  (h7); 
    \draw[t,->]  (h7) to [out=80,in=240]  (h3); 
    \draw[->]  (h3) to [out=-10,in=150]  (h4); 
    \draw[t,->]  (h4) to [out=285,in=75]  (h16); 
    \draw[->]  (h16) to [out=200,in=10]  (h15); 
    \draw[t,->]  (h15) to [out=190,in=-10]  (h14); 
    \draw[->]  (h14) to [out=170,in=-30]  (h13);
    
    \draw[->]  (h1) to [out=-40,in=130]  (h5);
    \draw[->]  (h2) to [out=10,in=170]  (h3);  
    \draw[->]  (h6) to [out=10,in=170]  (h7);
    \draw[t,->]  (h8) to [out=40,in=230]  (h4);
   
    \draw[->]  (h9) to [out=230,in=40]  (h13);
    \draw[t,->]  (h10) to [out=260,in=60]  (h14);  
    \draw[t,->]  (h11) to [out=280,in=120]  (h15);
    \draw[t,->]  (h12) to [out=-50,in=150]  (h16);     
    
    \draw[s,->] (h1) to [out=-30, in=180] (h6);
    \draw[s,->] (h6) to [out=-30, in=180] (h8);
    \draw[s,->] (h3) to [out=-60, in=90] (h8);
    \draw[sx,->] (h3) to [out=190, in=60] (h6);
    
    \draw[s,->] (h9) to [out=260, in=150] (h14);
    \draw[s,->] (h11) to [out=200, in=60] (h14);
    \draw[s,->] (h16) to [out=170, in=-60] (h11);
    \draw[s,->] (h8) to [out=280, in=120] (h16);

    \end{pgfonlayer}
}
\newcommand{\PlpV}{%
    \definecolor{scarlet}{rgb}{1.0,0.14,0}
    \definecolor{orange}{rgb}{1.0,0.64,0}
    \definecolor{green}{rgb}{0.0,0.5,0.0}
    \definecolor{lgreen}{rgb}{0.2,0.9,0.2}
    \tikzstyle{node_styleb} = [circle,draw=black,fill=blue, inner sep=0pt, minimum size=4pt, line width=.5]
    \tikzstyle{node_styleo} = [rectangle,draw=black,fill=orange, inner sep=0pt, minimum size=4pt, line width=.5]
    \tikzstyle{node_styles} = [circle,draw=black,fill=scarlet, inner sep=0pt, minimum size=4pt, line width=.5]
    
    \tikzstyle{nhidden_style} = [inner sep=0pt, minimum size=0pt]
    \tikzstyle{t} = [draw=black, line width=1.5]
    \tikzstyle{s} = [draw=green, line width=1.5]
    \tikzstyle{sx} = [draw=lgreen, line width=1.5]
     
    \node[nhidden_style](b) at (0,-3) {};

    \node[node_styleb](h1) at (-3.5,2.3) {};
    \node[node_styleo](h2) at (-1.2,3) {};
    \node[node_styleb](h3) at (1.2,3) {};
    \node[node_styleo](h4) at (3.5,2.3) {};
  
    \node[node_styleo](h5) at (-2,0.9) {};
    \node[node_styles](h6) at (-0.7,1.4) {};
    \node[node_styleo](h7) at (0.7,1.4) {};
    \node[node_styleb](h8) at (2,0.9) {};
  
    \node[node_styleb](h9) at (-2,-0.9) {};
    \node[node_styleo](h10) at (-0.7,-1.4) {};
    \node[node_styleb](h11) at (0.7,-1.4) {};
    \node[node_styleo](h12) at (2,-0.9) {};
  
    \node[node_styleo](h13) at (-3.5,-2.3) {};
    \node[node_styleb](h14) at (-1.2,-3) {};
    \node[node_styleo](h15) at (1.2,-3) {};
    \node[node_styleb](h16) at (3.5,-2.3) {};
    
    \begin{pgfonlayer}{bg} 

    \draw[t,->]  (h13) to [out=105,in=255]  (h1);
    \draw[t,->]  (h1) to [out=30,in=190]  (h2);
    \draw[->]  (h2) to [out=300,in=100]  (h6);
    \draw[t,->]  (h6) to [out=190,in=20]  (h5);
    \draw[t,->]  (h5) to [out=262,in=98]  (h9);
    \draw[t,->]  (h9) to [out=-20,in=170]  (h10);
    \draw[->]  (h10) to [out=0,in=180]  (h11);
    \draw[t,->]  (h11) to [out=10,in=210]  (h12); 
    \draw[->]  (h12) to [out=82,in=278]  (h8); 
    \draw[t,->]  (h8) to [out=150,in=-10]  (h7); 
    \draw[t,->]  (h7) to [out=80,in=240]  (h3); 
    \draw[->]  (h3) to [out=-10,in=150]  (h4); 
    \draw[t,->]  (h4) to [out=285,in=75]  (h16); 
    \draw[->]  (h16) to [out=200,in=10]  (h15); 
    \draw[t,->]  (h15) to [out=190,in=-10]  (h14); 
    \draw[->]  (h14) to [out=170,in=-30]  (h13);
    
    \draw[->]  (h1) to [out=-40,in=130]  (h5);
    \draw[t,->]  (h2) to [out=10,in=170]  (h3);  
    \draw[->]  (h6) to [out=10,in=170]  (h7);
    \draw[t,->]  (h8) to [out=40,in=230]  (h4);
   
    \draw[->]  (h9) to [out=230,in=40]  (h13);
    \draw[t,->]  (h10) to [out=260,in=60]  (h14);  
    \draw[t,->]  (h11) to [out=280,in=120]  (h15);
    \draw[t,->]  (h12) to [out=-50,in=150]  (h16);     
    
    \draw[s,->] (h1) to [out=-30, in=180] (h6);
    \draw[sx,->] (h6) to [out=-30, in=180] (h8);
    \draw[s,->] (h3) to [out=-60, in=90] (h8);
    \draw[s,->] (h6) to [out=60, in=190] (h3);
    
    \draw[s,->] (h9) to [out=260, in=150] (h14);
    \draw[s,->] (h11) to [out=200, in=60] (h14);
    \draw[s,->] (h16) to [out=170, in=-60] (h11);
    \draw[s,->] (h8) to [out=280, in=120] (h16);

    \end{pgfonlayer}
}
\newcommand{\PlpVI}{%
    \definecolor{scarlet}{rgb}{1.0,0.14,0}
    \definecolor{orange}{rgb}{1.0,0.64,0}
    \definecolor{green}{rgb}{0.0,0.5,0.0}
    \definecolor{lgreen}{rgb}{0.2,0.9,0.2}
    \tikzstyle{node_styleb} = [circle,draw=black,fill=blue, inner sep=0pt, minimum size=4pt, line width=.5]
    \tikzstyle{node_styleo} = [rectangle,draw=black,fill=orange, inner sep=0pt, minimum size=4pt, line width=.5]
    \tikzstyle{node_styles} = [circle,draw=black,fill=scarlet, inner sep=0pt, minimum size=4pt, line width=.5]
    
    \tikzstyle{nhidden_style} = [inner sep=0pt, minimum size=0pt]
    \tikzstyle{t} = [draw=black, line width=1.5]
    \tikzstyle{s} = [draw=green, line width=1.5]
    \tikzstyle{sx} = [draw=lgreen, line width=1.5]
 
    \node[nhidden_style](b) at (0,-3) {};

    \node[node_styleb](h1) at (-3.5,2.3) {};
    \node[node_styleo](h2) at (-1.2,3) {};
    \node[node_styleb](h3) at (1.2,3) {};
    \node[node_styleo](h4) at (3.5,2.3) {};
  
    \node[node_styleo](h5) at (-2,0.9) {};
    \node[node_styleb](h6) at (-0.7,1.4) {};
    \node[node_styleo](h7) at (0.7,1.4) {};
    \node[node_styles](h8) at (2,0.9) {};
  
    \node[node_styleb](h9) at (-2,-0.9) {};
    \node[node_styleo](h10) at (-0.7,-1.4) {};
    \node[node_styleb](h11) at (0.7,-1.4) {};
    \node[node_styleo](h12) at (2,-0.9) {};
  
    \node[node_styleo](h13) at (-3.5,-2.3) {};
    \node[node_styleb](h14) at (-1.2,-3) {};
    \node[node_styleo](h15) at (1.2,-3) {};
    \node[node_styleb](h16) at (3.5,-2.3) {};
    
    \begin{pgfonlayer}{bg} 

    \draw[t,->]  (h13) to [out=105,in=255]  (h1);
    \draw[t,->]  (h1) to [out=30,in=190]  (h2);
    \draw[->]  (h2) to [out=300,in=100]  (h6);
    \draw[t,->]  (h6) to [out=190,in=20]  (h5);
    \draw[t,->]  (h5) to [out=262,in=98]  (h9);
    \draw[t,->]  (h9) to [out=-20,in=170]  (h10);
    \draw[->]  (h10) to [out=0,in=180]  (h11);
    \draw[t,->]  (h11) to [out=10,in=210]  (h12); 
    \draw[->]  (h12) to [out=82,in=278]  (h8); 
    \draw[->]  (h8) to [out=150,in=-10]  (h7); 
    \draw[t,->]  (h7) to [out=80,in=240]  (h3); 
    \draw[->]  (h3) to [out=-10,in=150]  (h4); 
    \draw[t,->]  (h4) to [out=285,in=75]  (h16); 
    \draw[->]  (h16) to [out=200,in=10]  (h15); 
    \draw[t,->]  (h15) to [out=190,in=-10]  (h14); 
    \draw[->]  (h14) to [out=170,in=-30]  (h13);
    
    \draw[->]  (h1) to [out=-40,in=130]  (h5);
    \draw[t,->]  (h2) to [out=10,in=170]  (h3);  
    \draw[t,->]  (h6) to [out=10,in=170]  (h7);
    \draw[t,->]  (h8) to [out=40,in=230]  (h4);
   
    \draw[->]  (h9) to [out=230,in=40]  (h13);
    \draw[t,->]  (h10) to [out=260,in=60]  (h14);  
    \draw[t,->]  (h11) to [out=280,in=120]  (h15);
    \draw[t,->]  (h12) to [out=-50,in=150]  (h16);     
    
    \draw[s,->] (h1) to [out=-30, in=180] (h6);
    \draw[s,->] (h8) to [out=180, in=-30] (h6);
    \draw[s,->] (h3) to [out=-60, in=90] (h8);
    \draw[s,->] (h6) to [out=60, in=190] (h3);
    
    \draw[s,->] (h9) to [out=260, in=150] (h14);
    \draw[s,->] (h11) to [out=200, in=60] (h14);
    \draw[s,->] (h16) to [out=170, in=-60] (h11);
    \draw[sx,->] (h8) to [out=280, in=120] (h16);

    \end{pgfonlayer}
}
\newcommand{\PlpVII}{%
    \definecolor{scarlet}{rgb}{1.0,0.14,0}
    \definecolor{orange}{rgb}{1.0,0.64,0}
    \definecolor{green}{rgb}{0.0,0.5,0.0}
    \definecolor{lgreen}{rgb}{0.2,0.9,0.2}
    \tikzstyle{node_styleb} = [circle,draw=black,fill=blue, inner sep=0pt, minimum size=4pt, line width=.5]
    \tikzstyle{node_styleo} = [rectangle,draw=black,fill=orange, inner sep=0pt, minimum size=4pt, line width=.5]
    \tikzstyle{node_styles} = [circle,draw=black,fill=scarlet, inner sep=0pt, minimum size=4pt, line width=.5]
    
    \tikzstyle{nhidden_style} = [inner sep=0pt, minimum size=0pt]
    \tikzstyle{t} = [draw=black, line width=1.5]
    \tikzstyle{s} = [draw=green, line width=1.5]
    \tikzstyle{sx} = [draw=lgreen, line width=1.5]
    \node[nhidden_style](b) at (0,-3) {};

    \node[node_styleb](h1) at (-3.5,2.3) {};
    \node[node_styleo](h2) at (-1.2,3) {};
    \node[node_styleb](h3) at (1.2,3) {};
    \node[node_styleo](h4) at (3.5,2.3) {};
  
    \node[node_styleo](h5) at (-2,0.9) {};
    \node[node_styleb](h6) at (-0.7,1.4) {};
    \node[node_styleo](h7) at (0.7,1.4) {};
    \node[node_styleb](h8) at (2,0.9) {};
  
    \node[node_styleb](h9) at (-2,-0.9) {};
    \node[node_styleo](h10) at (-0.7,-1.4) {};
    \node[node_styleb](h11) at (0.7,-1.4) {};
    \node[node_styleo](h12) at (2,-0.9) {};
  
    \node[node_styleo](h13) at (-3.5,-2.3) {};
    \node[node_styleb](h14) at (-1.2,-3) {};
    \node[node_styleo](h15) at (1.2,-3) {};
    \node[node_styles](h16) at (3.5,-2.3) {};
    
    \begin{pgfonlayer}{bg} 

    \draw[t,->]  (h13) to [out=105,in=255]  (h1);
    \draw[t,->]  (h1) to [out=30,in=190]  (h2);
    \draw[->]  (h2) to [out=300,in=100]  (h6);
    \draw[t,->]  (h6) to [out=190,in=20]  (h5);
    \draw[t,->]  (h5) to [out=262,in=98]  (h9);
    \draw[t,->]  (h9) to [out=-20,in=170]  (h10);
    \draw[->]  (h10) to [out=0,in=180]  (h11);
    \draw[t,->]  (h11) to [out=10,in=210]  (h12); 
    \draw[t,->]  (h12) to [out=82,in=278]  (h8); 
    \draw[->]  (h8) to [out=150,in=-10]  (h7); 
    \draw[t,->]  (h7) to [out=80,in=240]  (h3); 
    \draw[->]  (h3) to [out=-10,in=150]  (h4); 
    \draw[t,->]  (h4) to [out=285,in=75]  (h16); 
    \draw[->]  (h16) to [out=200,in=10]  (h15); 
    \draw[t,->]  (h15) to [out=190,in=-10]  (h14); 
    \draw[->]  (h14) to [out=170,in=-30]  (h13);
    
    \draw[->]  (h1) to [out=-40,in=130]  (h5);
    \draw[t,->]  (h2) to [out=10,in=170]  (h3);  
    \draw[t,->]  (h6) to [out=10,in=170]  (h7);
    \draw[t,->]  (h8) to [out=40,in=230]  (h4);
   
    \draw[->]  (h9) to [out=230,in=40]  (h13);
    \draw[t,->]  (h10) to [out=260,in=60]  (h14);  
    \draw[t,->]  (h11) to [out=280,in=120]  (h15);
    \draw[->]  (h12) to [out=-50,in=150]  (h16);     
    
    \draw[s,->] (h1) to [out=-30, in=180] (h6);
    \draw[s,->] (h8) to [out=180, in=-30] (h6);
    \draw[s,->] (h3) to [out=-60, in=90] (h8);
    \draw[s,->] (h6) to [out=60, in=190] (h3);
    
    \draw[s,->] (h9) to [out=260, in=150] (h14);
    \draw[s,->] (h11) to [out=200, in=60] (h14);
    \draw[sx,->] (h16) to [out=170, in=-60] (h11);
    \draw[s,->] (h16) to [out=120, in=280] (h8);

    \end{pgfonlayer}
}
\newcommand{\PlpVIII}{%
    \definecolor{scarlet}{rgb}{1.0,0.14,0}
    \definecolor{orange}{rgb}{1.0,0.64,0}
    \definecolor{green}{rgb}{0.0,0.5,0.0}
    \definecolor{lgreen}{rgb}{0.2,0.9,0.2}
    \tikzstyle{node_styleb} = [circle,draw=black,fill=blue, inner sep=0pt, minimum size=4pt, line width=.5]
    \tikzstyle{node_styleo} = [rectangle,draw=black,fill=orange, inner sep=0pt, minimum size=4pt, line width=.5]
    \tikzstyle{node_styles} = [circle,draw=black,fill=scarlet, inner sep=0pt, minimum size=4pt, line width=.5]
    
    \tikzstyle{nhidden_style} = [inner sep=0pt, minimum size=0pt]
    \tikzstyle{t} = [draw=black, line width=1.5]
    \tikzstyle{s} = [draw=green, line width=1.5]
    \tikzstyle{sx} = [draw=lgreen, line width=1.5]
    
    \node[nhidden_style](b) at (0,-3) {};

    \node[node_styleb](h1) at (-3.5,2.3) {};
    \node[node_styleo](h2) at (-1.2,3) {};
    \node[node_styleb](h3) at (1.2,3) {};
    \node[node_styleo](h4) at (3.5,2.3) {};
  
    \node[node_styleo](h5) at (-2,0.9) {};
    \node[node_styleb](h6) at (-0.7,1.4) {};
    \node[node_styleo](h7) at (0.7,1.4) {};
    \node[node_styleb](h8) at (2,0.9) {};
  
    \node[node_styleb](h9) at (-2,-0.9) {};
    \node[node_styleo](h10) at (-0.7,-1.4) {};
    \node[node_styles](h11) at (0.7,-1.4) {};
    \node[node_styleo](h12) at (2,-0.9) {};
  
    \node[node_styleo](h13) at (-3.5,-2.3) {};
    \node[node_styleb](h14) at (-1.2,-3) {};
    \node[node_styleo](h15) at (1.2,-3) {};
    \node[node_styleb](h16) at (3.5,-2.3) {};
    
    \begin{pgfonlayer}{bg} 

    \draw[t,->]  (h13) to [out=105,in=255]  (h1);
    \draw[t,->]  (h1) to [out=30,in=190]  (h2);
    \draw[->]  (h2) to [out=300,in=100]  (h6);
    \draw[t,->]  (h6) to [out=190,in=20]  (h5);
    \draw[t,->]  (h5) to [out=262,in=98]  (h9);
    \draw[t,->]  (h9) to [out=-20,in=170]  (h10);
    \draw[->]  (h10) to [out=0,in=180]  (h11);
    \draw[t,->]  (h11) to [out=10,in=210]  (h12); 
    \draw[t,->]  (h12) to [out=82,in=278]  (h8); 
    \draw[->]  (h8) to [out=150,in=-10]  (h7); 
    \draw[t,->]  (h7) to [out=80,in=240]  (h3); 
    \draw[->]  (h3) to [out=-10,in=150]  (h4); 
    \draw[t,->]  (h4) to [out=285,in=75]  (h16); 
    \draw[t,->]  (h16) to [out=200,in=10]  (h15); 
    \draw[t,->]  (h15) to [out=190,in=-10]  (h14); 
    \draw[->]  (h14) to [out=170,in=-30]  (h13);
    
    \draw[->]  (h1) to [out=-40,in=130]  (h5);
    \draw[t,->]  (h2) to [out=10,in=170]  (h3);  
    \draw[t,->]  (h6) to [out=10,in=170]  (h7);
    \draw[t,->]  (h8) to [out=40,in=230]  (h4);
   
    \draw[->]  (h9) to [out=230,in=40]  (h13);
    \draw[t,->]  (h10) to [out=260,in=60]  (h14);  
    \draw[->]  (h11) to [out=280,in=120]  (h15);
    \draw[->]  (h12) to [out=-50,in=150]  (h16);     
    
    \draw[s,->] (h1) to [out=-30, in=180] (h6);
    \draw[s,->] (h8) to [out=180, in=-30] (h6);
    \draw[s,->] (h3) to [out=-60, in=90] (h8);
    \draw[s,->] (h6) to [out=60, in=190] (h3);
    
    \draw[s,->] (h9) to [out=260, in=150] (h14);
    \draw[sx,->] (h11) to [out=200, in=60] (h14);
    \draw[s,->] (h11) to [out=-60, in=170] (h16);
    \draw[s,->] (h16) to [out=120, in=280] (h8);

    \end{pgfonlayer}
}
\newcommand{\PlpIX}{%
    \definecolor{scarlet}{rgb}{1.0,0.14,0}
    \definecolor{orange}{rgb}{1.0,0.64,0}
    \definecolor{green}{rgb}{0.0,0.5,0.0}
    \tikzstyle{node_styleb} = [circle,draw=black,fill=blue, inner sep=0pt, minimum size=4pt, line width=.5]
    \tikzstyle{node_styleo} = [rectangle,draw=black,fill=orange, inner sep=0pt, minimum size=4pt, line width=.5]
    \tikzstyle{node_styles} = [circle,draw=black,fill=scarlet, inner sep=0pt, minimum size=4pt, line width=.5]
    
    \tikzstyle{nhidden_style} = [inner sep=0pt, minimum size=0pt]
    \tikzstyle{t} = [draw=black, line width=1.5]
    \tikzstyle{s} = [draw=green, line width=1.5]
 
    \node[nhidden_style](b) at (0,-3) {};

    \node[node_styleb](h1) at (-3.5,2.3) {};
    \node[node_styleo](h2) at (-1.2,3) {};
    \node[node_styleb](h3) at (1.2,3) {};
    \node[node_styleo](h4) at (3.5,2.3) {};
  
    \node[node_styleo](h5) at (-2,0.9) {};
    \node[node_styleb](h6) at (-0.7,1.4) {};
    \node[node_styleo](h7) at (0.7,1.4) {};
    \node[node_styleb](h8) at (2,0.9) {};
  
    \node[node_styleb](h9) at (-2,-0.9) {};
    \node[node_styleo](h10) at (-0.7,-1.4) {};
    \node[node_styleb](h11) at (0.7,-1.4) {};
    \node[node_styleo](h12) at (2,-0.9) {};
  
    \node[node_styleo](h13) at (-3.5,-2.3) {};
    \node[node_styles](h14) at (-1.2,-3) {};
    \node[node_styleo](h15) at (1.2,-3) {};
    \node[node_styleb](h16) at (3.5,-2.3) {};
    
    \begin{pgfonlayer}{bg} 

    \draw[t,->]  (h13) to [out=105,in=255]  (h1);
    \draw[t,->]  (h1) to [out=30,in=190]  (h2);
    \draw[->]  (h2) to [out=300,in=100]  (h6);
    \draw[t,->]  (h6) to [out=190,in=20]  (h5);
    \draw[t,->]  (h5) to [out=262,in=98]  (h9);
    \draw[t,->]  (h9) to [out=-20,in=170]  (h10);
    \draw[t,->]  (h10) to [out=0,in=180]  (h11);
    \draw[t,->]  (h11) to [out=10,in=210]  (h12); 
    \draw[t,->]  (h12) to [out=82,in=278]  (h8); 
    \draw[->]  (h8) to [out=150,in=-10]  (h7); 
    \draw[t,->]  (h7) to [out=80,in=240]  (h3); 
    \draw[->]  (h3) to [out=-10,in=150]  (h4); 
    \draw[t,->]  (h4) to [out=285,in=75]  (h16); 
    \draw[t,->]  (h16) to [out=200,in=10]  (h15); 
    \draw[t,->]  (h15) to [out=190,in=-10]  (h14); 
    \draw[->]  (h14) to [out=170,in=-30]  (h13);
    
    \draw[->]  (h1) to [out=-40,in=130]  (h5);
    \draw[t,->]  (h2) to [out=10,in=170]  (h3);  
    \draw[t,->]  (h6) to [out=10,in=170]  (h7);
    \draw[t,->]  (h8) to [out=40,in=230]  (h4);
   
    \draw[->]  (h9) to [out=230,in=40]  (h13);
    \draw[->]  (h10) to [out=260,in=60]  (h14);  
    \draw[->]  (h11) to [out=280,in=120]  (h15);
    \draw[->]  (h12) to [out=-50,in=150]  (h16);     
    
    \draw[s,->] (h1) to [out=-30, in=180] (h6);
    \draw[s,->] (h8) to [out=180, in=-30] (h6);
    \draw[s,->] (h3) to [out=-60, in=90] (h8);
    \draw[s,->] (h6) to [out=60, in=190] (h3);
    
    \draw[s,->] (h9) to [out=260, in=150] (h14);
    \draw[s,->] (h14) to [out=60, in=200] (h11);
    \draw[s,->] (h11) to [out=-60, in=170] (h16);
    \draw[s,->] (h16) to [out=120, in=280] (h8);

    \end{pgfonlayer}
}
\newcommand{\Hlpprime}{%
    \definecolor{scarlet}{rgb}{1.0,0.14,0}
    \definecolor{orange}{rgb}{1.0,0.64,0}
    \definecolor{green}{rgb}{0.0,0.5,0.0}
    \tikzstyle{node_styleb} = [circle,draw=black,fill=blue, inner sep=0pt, minimum size=4pt, line width=.5]
    \tikzstyle{node_styleo} = [rectangle,draw=black,fill=orange, inner sep=0pt, minimum size=4pt, line width=.5]
    
    \tikzstyle{nhidden_style} = [inner sep=0pt, minimum size=0pt]
    \tikzstyle{t} = [draw=black, line width=1.5]
    \tikzstyle{s} = [draw=green, line width=1.5]
 
    \node[nhidden_style](b) at (0,-3) {};

    \node[node_styleb](h1) at (-3.5,2.3) {};
    \node[node_styleo](h2) at (-1.2,3) {};
    \node[node_styleb](h3) at (1.2,3) {};
    \node[node_styleo](h4) at (3.5,2.3) {};
  
    \node[node_styleo](h5) at (-2,0.9) {};
    \node[node_styleb](h6) at (-0.7,1.4) {};
    \node[node_styleo](h7) at (0.7,1.4) {};
    \node[node_styleb](h8) at (2,0.9) {};
  
    \node[node_styleb](h9) at (-2,-0.9) {};
    \node[node_styleo](h10) at (-0.7,-1.4) {};
    \node[node_styleb](h11) at (0.7,-1.4) {};
    \node[node_styleo](h12) at (2,-0.9) {};
  
    \node[node_styleo](h13) at (-3.5,-2.3) {};
    \node[node_styleb](h14) at (-1.2,-3) {};
    \node[node_styleo](h15) at (1.2,-3) {};
    \node[node_styleb](h16) at (3.5,-2.3) {};
    
    \begin{pgfonlayer}{bg} 

    \draw[t,->]  (h13) to [out=105,in=255]  (h1);
    \draw[t,->]  (h1) to [out=30,in=190]  (h2);
    \draw[->]  (h2) to [out=300,in=100]  (h6);
    \draw[t,->]  (h6) to [out=190,in=20]  (h5);
    \draw[t,->]  (h5) to [out=262,in=98]  (h9);
    \draw[t,->]  (h9) to [out=-20,in=170]  (h10);
    \draw[t,->]  (h10) to [out=0,in=180]  (h11);
    \draw[t,->]  (h11) to [out=10,in=210]  (h12); 
    \draw[t,->]  (h12) to [out=82,in=278]  (h8); 
    \draw[->]  (h8) to [out=150,in=-10]  (h7); 
    \draw[t,->]  (h7) to [out=80,in=240]  (h3); 
    \draw[->]  (h3) to [out=-10,in=150]  (h4); 
    \draw[t,->]  (h4) to [out=285,in=75]  (h16); 
    \draw[t,->]  (h16) to [out=200,in=10]  (h15); 
    \draw[t,->]  (h15) to [out=190,in=-10]  (h14); 
    \draw[t,->]  (h14) to [out=170,in=-30]  (h13);
    
    \draw[->]  (h1) to [out=-40,in=130]  (h5);
    \draw[t,->]  (h2) to [out=10,in=170]  (h3);  
    \draw[t,->]  (h6) to [out=10,in=170]  (h7);
    \draw[t,->]  (h8) to [out=40,in=230]  (h4);
   
    \draw[->]  (h9) to [out=230,in=40]  (h13);
    \draw[->]  (h10) to [out=260,in=60]  (h14);  
    \draw[->]  (h11) to [out=280,in=120]  (h15);
    \draw[->]  (h12) to [out=-50,in=150]  (h16);     
   
    \end{pgfonlayer}
}
\title[]{Another Hamiltonian cycle in bipartite Pfaffian graphs}
\date{}
\author{Anonymous}
\author{Andreas Bj\"orklund}
\address[AB]{IT University of Copenhagen, Denmark}
\email{anbjo@itu.dk}
\author{Petteri Kaski}
\address[PK]{Aalto University, Finland}
\email{petteri.kaski@aalto.fi}
\author{Jesper Nederlof}
\address[JN]{Utrecht University, the Netherlands}
\email{j.nederlof@uu.nl}
\begin{document}

\maketitle

\begin{abstract}
Finding a Hamiltonian cycle in a given graph is computationally challenging, 
and in general remains so even when one is further given one Hamiltonian 
cycle in the graph and asked to find another.
In fact, no significantly faster algorithms are known for finding
another Hamiltonian cycle than for finding a first one even in the setting
where another Hamiltonian cycle is structurally guaranteed to exist, such 
as for odd-degree graphs. 
We identify a graph class---the bipartite Pfaffian graphs of minimum degree 
three---where it is NP-complete to decide whether a given graph in the class
is Hamiltonian, but when presented with a Hamiltonian cycle as part of 
the input, another Hamiltonian cycle can be found efficiently. 

We prove that Thomason's lollipop method~[Ann.~Discrete Math.,~1978], 
a well-known algorithm for finding another Hamiltonian cycle, 
runs in a linear number of steps in cubic bipartite Pfaffian graphs. This was conjectured 
for cubic bipartite planar graphs by Haddadan [MSc~thesis,~Waterloo,~2015];
in contrast, examples are known of both cubic bipartite graphs and 
cubic planar graphs where the lollipop method takes exponential time.

Beyond the lollipop method, we address a slightly more general 
graph class and present two algorithms, one running in linear-time and 
one operating in logarithmic space, that take as input
(i) a bipartite Pfaffian graph $G$ of minimum degree three, 
(ii) a Hamiltonian cycle $H$ in $G$, and 
(iii) an edge $e$ in $H$, and
output at least three other Hamiltonian cycles through the edge $e$ in $G$.

We also present further improved algorithms for finding optimal traveling 
salesperson tours and counting Hamiltonian cycles in bipartite planar graphs 
with running times that are not known to hold in general planar graphs.

We prove our results by a new structural technique that efficiently witnesses 
each Hamiltonian cycle $H$ through an arbitrary fixed anchor edge $e$ 
in a bipartite Pfaffian graph using a two-coloring of the vertices as advice 
that is unique to $H$. Previous techniques---the 
Cut\&Count technique of Cygan, Nederlof, Pilipczuk, Pilipczuk, Van Rooij, 
and Wojtaszczyk [FOCS'11,~TALG'22] in particular---were able to reduce 
the Hamiltonian cycle problem only to essentially \emph{counting} problems; 
our results show that counting can be avoided by leveraging properties of 
bipartite Pfaffian graphs. Our technique also has purely graph-theoretical 
consequences; for example, we show that every cubic bipartite Pfaffian graph 
has either zero or at least six distinct Hamiltonian cycles; the latter case 
is tight for the cube graph.
\end{abstract}

\thispagestyle{empty}

\clearpage 

\setcounter{page}{1}

\section{Introduction}

\noindent
Finding a Hamiltonian cycle in a given undirected graph is 
a well-known, well-researched, and hard problem. 
This paper studies the question whether knowledge of one Hamiltonian cycle 
helps in finding another one. More precisely, 
the \textsc{Another Hamiltonian Cycle} problem asks, given as input 
(i)~a~graph%
\footnote{We tacitly work with undirected simple loopless graphs 
unless mentioned otherwise, as well as assume knowledge of standard 
graph-theoretic terminology~\cite{West1996}.
Our conventions with graphs can be found 
in Section~\ref{sect:conventions}.}{}~$G$, 
(ii)~a~Hamiltonian cycle $H$ in $G$, and 
(iii)~an~edge $e\in E(H)$, 
to find another Hamiltonian cycle $H'\neq H$ in $G$ with $e\in E(H')$.

Our interest is in the class of bipartite {\em Pfaffian}%
\footnote{We postpone a precise definition and motivation of Pfaffian graphs
to Section~\ref{sect:pfaffian}. Planar graphs are Pfaffian.}{}
graphs, a superclass of the bipartite planar graphs. This focus is partly motivated by the fact that both in cubic bipartite graphs and cubic planar graphs, a well-known general algorithm for finding another Hamiltonian cycle, Thomason's lollipop method~\cite{Thomason1978}, requires exponential time in the worst case, c.f. Section~\ref{sect: earlier work}.
Also, the problem of deciding if the graph contains a Hamiltonian cycle at all remains NP-hard in the family of cubic bipartite planar graphs, as proved by 
Akiyama, Nishizeki, and Saito~\cite{AkiyamaNS1980}. 

As our main result, we show that 
\textsc{Another Hamiltonian Cycle} admits both a linear-time algorithm as 
well as a logarithmic-space algorithm
in bipartite Pfaffian graphs of minimum degree three. 
Further restricted to cubic bipartite Pfaffian graphs, we prove that Thomason's lollipop method runs in a linear number of steps and can be implemented to run in linear time.
This is to our knowledge a first example of a nontrivial graph class
where \textsc{Another Hamiltonian Cycle} is efficiently solvable; such 
an example was solicited by Kintali~\cite{Kintali2009}. By trivial we here intend a graph class in which Hamiltonicity detection is NP-hard but is artificially constructed to ensure a simple local rerouting of any Hamiltonian cycle. Rather, in our case the global properties of bipartiteness, Pfaffianity, and the everywhere-local property of minimum-degree three, interplay to enable an efficient algorithm.
Without the minimum-degree constraint, the problem is NP-hard (see Appendix~\ref{sect: NP-hard}).

Our techniques have also purely graph-theoretic consequences. We show
that every cubic bipartite Pfaffian graph has at least three 
other Hamiltonian cycles through 
any edge of a Hamiltonian cycle. All three Hamiltonian cycles can be found in linear time. We also show that such Hamiltonian graphs must have at least six Hamiltonian cycles.
The $8$-vertex cube graph (\begin{tikzpicture}[baseline=-1mm,scale=0.07] \cubegraph \end{tikzpicture}), the canonical example in this class, is an extremal example to both results. It has precisely six distinct Hamiltonian cycles with every graph edge in exactly four of them.

\subsection{Motivation and earlier work}
\label{sect: earlier work}
While a graph need not be Hamiltonian, and a Hamiltonian graph
need not admit another Hamiltonian cycle, there exist graph families 
with Hamiltonian members where another Hamiltonian cycle is always 
known to exist. 
Perhaps the most prominent such family are the odd-degree graphs, which 
via Smith's Theorem (see~\cite{Tutte1946}) have an even number 
of Hamiltonian cycles through any given edge. 
Thomason~\cite{Thomason1978} 
gave a constructive proof by describing an algorithm 
that solves for another Hamiltonian cycle in cubic graphs. The algorithm is often called Thomason's lollipop method as it transforms a Hamiltonian cycle to another one by a sequence of lollipop graphs, see Section~\ref{sect: lollipop} for a precise description of the algorithm. 
Dropping the requirement that the Hamiltonian cycle should go through a 
specific edge, Bos\'ak~\cite{Bosak1967} proved that every cubic 
bipartite graph has an even number of Hamiltonian cycles.
Thomassen~\cite{Thomassen1996, Thomasson1997} showed that no bipartite graph in 
which every vertex in one of the two parts of the bipartition has degree 
at least three, has a unique Hamiltonian cycle.  A famous 
conjecture due to Sheehan~\cite{Sheehan1975} claims that 
no $4$-regular graph can have a unique Hamiltonian cycle;
Thomassen's result proves Sheehan's conjecture for bipartite graphs.

Papadimitriou~\cite{Papadimitriou1994} popularized 
the \textsc{Another Hamiltonian Cycle} problem and Thomason's algorithm 
by introducing the complexity class PPA and showed the containment of 
the problem in odd-degree graphs; completeness for PPA remains open.
It is also open whether the problem can be solved 
in polynomial time---indeed, the drawback of Thomason's 
algorithm is that it may run for a very long time; 
Krawczyk~\cite{Krawczyk1999} and Cameron~\cite{Cameron2001} showed that 
Thomason's algorithm requires exponential time for a family of cubic planar 
graphs. Later this was shown by Zhong~\cite{Zhong2018} also for cubic 
bipartite graphs. The best bound to date is the recent result 
of Bria\'{n}ski and Szady~\cite{Brianski2022}, which shows that there 
are cubic $3$-connected planar graphs on $n$ vertices, in which 
Thomason's lollipop algorithm runs in $\Omega(1.18^n)$ time.

The above papers reason about Thomason's algorithm specifically, but it 
may of course be other algorithms that solve the problem more efficiently. 
Some progress in this 
direction was provided by Bazgan, Santha, and Tuza~\cite{Bazgan1999}
that showed that one given a cubic graph on $n$ vertices and one of its 
Hamiltonian cycles can find another cycle of length $(1-\epsilon)n$ for 
any fixed constant $\epsilon>0$ in polynomial time. 
Deligkas, Mertzios, Spirakis, and Zamaraev~\cite{Deligkas2020} derived an exponential-time
polynomial-space deterministic algorithm that given a cubic graph along 
with one of its Hamiltonian cycles finds another Hamiltonian cycle;
the algorithm is shown to be faster than the fastest known 
exponential-time polynomial-space deterministic algorithm for finding 
a Hamiltonian cycle in cubic graphs.

\subsection{Main results for bipartite Pfaffian graphs}

\label{sect:main}

Let us now review our main results for bipartite Pfaffian graphs
and the underlying techniques in more detail.
Our main theorems are as follows. 

\begin{restatable}[Main; Linear--time \textsc{Another Hamiltonian Cycle} in minimum degree three]{theorem}{thmahc}
\label{thm:ahc}
There exists a deterministic linear-time algorithm that, given as input
(i) a bipartite Pfaffian graph $G$ with minimum degree three, 
(ii) a Hamiltonian cycle $H$ in $G$, and (iii) an edge $e\in E(H)$, 
outputs a Hamiltonian cycle $H'\neq H$ in $G$ with $e\in E(H')$. 
\end{restatable}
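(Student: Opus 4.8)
The plan is to recast the problem as the search for a second \emph{witness} in the sense of our structural technique, and then to make that search run in linear time. Fix the bipartition $V(G)=A\sqcup B$ and write the anchor edge as $e=a_0b_0$ with $a_0\in A$ and $b_0\in B$. The starting observation is that a Hamiltonian cycle through $e$ is the same datum as an edge-disjoint pair $(M_0,M_1)$ of perfect matchings of $G$ with $e\in M_0$ whose union $M_0\cup M_1$ is connected: alternately two-coloring the edges of a Hamiltonian cycle yields two edge-disjoint perfect matchings that determine it, and conversely a connected union of two edge-disjoint perfect matchings is a single cycle, necessarily even since $G$ is bipartite. In particular the input cycle $H$ already exhibits one such pair, so it suffices to produce a different one in linear time.

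To handle the connectivity requirement I would run the standard Cut\&Count bookkeeping. To a pair $(M_0,M_1)$ associate the two-colorings $\chi\colon V\to\{0,1\}$ with $\chi(a_0)=0$ that are constant on every component of $M_0\cup M_1$; since $e\in M_0$ places $a_0$ and $b_0$ in the same component, every such $\chi$ also has $\chi(b_0)=0$, and a pair whose union has $c$ cycles admits exactly $2^{c-1}$ of them. Hence, writing $N$ for the number of triples $(M_0,M_1,\chi)$ with $e\in M_0$ and $\chi$ as above, one gets $N\equiv\#\{\text{Hamiltonian cycles through }e\}\pmod 2$. For a fixed $\chi$ with classes $V_0\ni a_0,b_0$ and $V_1=V\setminus V_0$, the consistent pairs decouple into independent choices of a perfect matching of $G[V_0]-a_0-b_0$, a perfect matching of $G[V_0]$, and two perfect matchings of $G[V_1]$, so the contribution of $\chi$ to $N$ is a product of perfect-matching counts of these graphs. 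This is where a Pfaffian orientation of $G$ enters: it renders each perfect-matching count sign-coherent, so that the appropriately signed variant of $N$ is governed by a single Pfaffian-type evaluation, and a structural analysis of this evaluation---using bipartiteness and the minimum-degree-three hypothesis to make the ``spurious'' contributions (colorings not arising from a connected pair) cancel---shows that $N$ is even. Thus the number of Hamiltonian cycles through $e$ is even, hence at least two once $H$ is known. (For cubic $G$ the evenness is already Smith's theorem, but there Thomason's algorithm may still take exponential time, so the Pfaffian structure is what is bought; without cubicity it is needed for mere existence.) The same bounded-depth algebraic expression also underlies the logarithmic-space variant.

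It remains to turn this counting statement into an explicit linear-time construction, which is what the witnessing machinery is for. Each Hamiltonian cycle $H'$ through $e$ is recorded by a two-coloring $\chi_{H'}$ of $V$, normalized by $\chi_{H'}(a_0)=0$, that is \emph{unique} to $H'$ and from which $H'$ can be rebuilt in linear time, so that testing a candidate coloring and, on success, reading off its cycle are cheap; moreover the family of witnesses is a coset of an $\mathbb F_2$-linear subspace whose defining relations are read off from $G$, $H$, and $e$ by a single linear-time sparse local computation, without ever materializing the exponential sum of the previous paragraph. Starting from the witness $\chi_H$ of the given cycle---which is computed directly from $H$---one then produces a second witness by exhibiting a nonzero element of the homogeneous solution set lying in the coset of $\chi_H$, equivalently a bounded-size set of vertices whose recoloring carries one valid witness to another; by the parity argument such an element exists, and locating it is a linear-time step. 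Translating the new coloring back yields a Hamiltonian cycle $H'\neq H$ with $e\in E(H')$, as required by Theorem~\ref{thm:ahc}.

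I expect the main obstacle to be the middle step: proving that a Pfaffian orientation makes the Cut\&Count sum faithful---that once the signs are installed no spurious coloring survives---and that this faithfulness is certified \emph{locally}, by a sparse $\mathbb F_2$ system or, equivalently, by a short list of local recoloring moves, so that the pipeline is linear rather than merely polynomial; an unrestricted Pfaffian or determinant evaluation would already be too slow, and the logarithmic-space version rules it out entirely. This is exactly the point at which the three hypotheses---bipartite, Pfaffian, and minimum degree three---must be used in concert, and it is the content carried by the structural results developed before this theorem.
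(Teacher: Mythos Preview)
Your outline has the right flavor---reduce to a witnessing two-coloring and then perturb---but two of its load-bearing claims are not correct, and the mechanism the paper actually uses is quite different from a Cut\&Count parity argument.

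First, the decoupling you write down, ``a perfect matching of $G[V_0]-a_0-b_0$, a perfect matching of $G[V_0]$, and two perfect matchings of $G[V_1]$'', counts consistent cycle covers, not Hamiltonian cycles; you acknowledge this with the phrase ``spurious contributions'' but never show that a Pfaffian orientation makes them cancel. In fact the paper does not attempt any such cancellation. Its coloring $\chi_H$ is \emph{not} the Cut\&Count component-labeling; it is defined by accumulating along $H$ the parity of agreement with a fixed Pfaffian orientation (Section~\ref{sect:chiH}), and its key property is Lemma~\ref{lem:acyclic-hamiltonicity}: the orientation of $G\setminus e$ induced by a good coloring is \emph{acyclic} with a unique Hamiltonian topological order. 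So rather than ``spurious colorings cancel in a signed sum'', one has ``a good coloring determines its Hamiltonian cycle outright, by topological sort''. This is the point at which the paper explicitly departs from Cut\&Count.

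Second, and more seriously for your algorithm, the assertion that ``the family of witnesses is a coset of an $\mathbb F_2$-linear subspace'' is not established and is not how the paper proceeds. The actual structure is combinatorial, not linear: after restricting $\chi_H$ to one side $L$ of the bipartition to get $\lambda$, the good colorings extending $\lambda$ are in bijection with perfect matchings of an auxiliary bipartite graph $F_\lambda$ (Lemmas~\ref{lem:matching-to-extension} and~\ref{lem:extension-to-matching}). Producing a second witness is then the concrete task of finding a second perfect matching of $F_\lambda$, which the paper does by locating an $M_H$-alternating cycle. The minimum-degree-three hypothesis is used precisely here (Lemma~\ref{lem:s-avoiding-cycle-in-d}): it guarantees that in a derived digraph $D_{\lambda,H}$ on $L$ every vertex has out-degree at least one while $s$ is a source, so an $s$-avoiding directed cycle exists and is found by a single walk. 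Switching along the corresponding alternating cycle in $F_\lambda$ yields $H'\neq H$ with $e\in E(H')$, and every step is linear time. Your ``bounded-size recoloring'' is exactly this alternating cycle, but it does not come from solving an $\mathbb F_2$ system; it comes from a walk in $D_{\lambda,H}$, and its existence is not a parity statement but a pigeonhole on out-degrees. Without this concrete mechanism, your pipeline has no linear-time step that actually produces the second cycle.
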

\begin{restatable}[Main; Logarithmic--space \textsc{Another Hamiltonian Cycle} in minimum degree three]{theorem}{thmahclog}
\label{thm:ahc-log}
There exists a deterministic logarithmic-space algorithm that, given as input
(i) a bipartite Pfaffian graph $G$ with minimum degree three, 
(ii) a Hamiltonian cycle $H$ in $G$, and (iii) an edge $e\in E(H)$, 
outputs a Hamiltonian cycle $H'\neq H$ in $G$ with $e\in E(H')$. 
\end{restatable}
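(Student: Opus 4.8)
The plan is to follow the same structural blueprint that underlies Theorem~\ref{thm:ahc} and then verify, stage by stage, that it can be realised within logarithmic work space. Since logarithmic space is closed under composition of a bounded number of functions, and undirected $s$--$t$ connectivity, hence computing connected components, is in logarithmic space (Reingold's theorem), it suffices to check that every stage is either a bounded-state traversal of the input or such a reachability computation; in particular, we must ensure that no determinant or permanent is ever evaluated, so the Pfaffian structure is used only through a single \emph{combinatorially given} Pfaffian orientation.

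Concretely, the construction underlying Theorem~\ref{thm:ahc} attaches to the anchor edge $e$ a two-colouring $c_H\colon V(G)\to\{\textsf{blue},\textsf{orange}\}$ for each Hamiltonian cycle $H$ through $e$, which I would use via three properties: (i) $H\mapsto c_H$ is injective, and $c_H$ is obtained from $H$ by propagating a parity dictated by the fixed Pfaffian orientation along $H$, starting from $e$; (ii) the image $\mathcal{C}_e$ of this map is exactly the set of two-colourings satisfying a family of constraints, one local at each vertex together with a single global constraint at $e$; (iii) from any $c\in\mathcal{C}_e$ the unique Hamiltonian cycle $H$ with $c_H=c$ is recovered by letting the colours of each vertex and of its neighbours single out two incident edges, so that the resulting $2$-regular subgraph is precisely that cycle. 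First I would observe that (i) and (iii) are directly implementable in logarithmic space: reading $c_H$ off $H$ is a single walk along $H$ with an $O(\log n)$-bit parity counter, and recovering and printing $H'$ from a colouring $c'$ is the construction of a $2$-regular subgraph followed by one traversal of it, again with $O(\log n)$ bits of state.

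The remaining task is to produce, in logarithmic space, a second colouring $c'\in\mathcal{C}_e$ with $c'\neq c_H$; the algorithm is then ``compute $c_H$, compute $c'$, recover $H'$'', and $H'\neq H$ follows from the injectivity in (i), while $e\in E(H')$ holds by construction. Here the minimum-degree-three hypothesis is essential: as in the proof of Theorem~\ref{thm:ahc}, the accompanying structural lemma guarantees a \emph{flip}, i.e.\ a nonempty vertex set $S$ such that recolouring $G$ on $S$ keeps the colouring in $\mathcal{C}_e$. I would exhibit $S$ as the connected component of a distinguished vertex in an auxiliary \emph{undirected} graph $D$ on $V(G)$ whose edge set is read off pointwise from $c_H$ and the fixed Pfaffian orientation; then $S$, and hence $c'$ (the result of flipping $c_H$ on $S$), is computable in logarithmic space by Reingold's theorem.

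The hard part is this middle step, and within it two points. First, $\mathcal{C}_e$ is a priori the solution set of a system of $\mathrm{GF}(2)$-linear equations, and producing a second solution of such a system is not known to be logarithmic-space computable in general; the content of the technique must be that for bipartite Pfaffian graphs with a fixed anchor edge the system is ``graph-like'' (equations on vertices, variables on edges), so that moving from one solution to another amounts to adding a member of a cycle/cut-type space, which is locatable by reachability rather than by Gaussian elimination. Second, one must check that the auxiliary graph $D$ can be taken \emph{undirected}, so that Reingold's theorem applies rather than only membership in $\mathsf{NL}$, and that minimum degree three forces the distinguished component of $D$ to be nonempty and to induce a genuine flip, i.e.\ $c'\in\mathcal{C}_e$ and $c'\neq c_H$. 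Getting these to line up---a characterisation of $\mathcal{C}_e$ expressive enough to capture exactly the Hamiltonian cycles through $e$, yet structured enough that a second member is reachable by undirected connectivity---is where I expect the real work to lie, and it is exactly where the bipartite-Pfaffian hypothesis is spent.
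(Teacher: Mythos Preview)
Your proposal has genuine gaps at the two places you yourself flag as delicate, and also at one you treat as routine.

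First, claim (iii) is incorrect as stated. Given a good colouring $\chi$, the induced orientation $\oa G_e^\chi$ does \emph{not} locally single out two incident edges at every vertex; even in the cubic case a degree-three vertex has one in-arc and two out-arcs (or vice versa), and nothing local tells you which of the two is on $H$. The recovery of $H$ from $\chi$ in the paper is via Lemma~\ref{lem:acyclic-hamiltonicity}: one must find the unique directed Hamiltonian path in the DAG $\oa G_e^\chi\setminus(t,s)$, which is a topological-sort task, not a pointwise edge selection. For the same reason claim (ii) fails: the set of good colourings is not the solution set of per-vertex $\mathrm{GF}(2)$ constraints plus one global bit; it is exactly the set of $\chi$ for which $\oa G_e^\chi\setminus(t,s)$ is an acyclic Hamiltonian orientation, which is a global property. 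So your plan of ``flip a component and re-decode locally'' does not get off the ground.

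Second, the auxiliary graph you would need is the directed graph $D_{\lambda,H}$ of Section~\ref{sect:linear-time-solvability}, and there is no evident way to make it undirected while preserving the meaning of a cycle there; invoking Reingold buys nothing and directed reachability would only give $\mathsf{NL}$. The paper sidesteps all of this. It first fixes the \emph{specific} Pfaffian orientation produced from $H$ by Lemma~\ref{lem:pfaffian-orientation-from-hamiltonian-cycle}, so that the orientation of any edge $\{u,v\}$ is decided by comparing the positions $\mathrm{id}(u),\mathrm{id}(v)$ along $H$, computable in $O(\log n)$ space by walking along $H$. It then takes a \emph{functional} subgraph $D'_{\lambda,H}\subseteq D_{\lambda,H}$ with exactly one out-arc per vertex (pick, for each $\ell$, the first non-$H$ neighbour in its adjacency list and follow the $M_H$-edge of matching parity), so a cycle is found deterministically by taking $n$ steps from $s$---no reachability oracle at all. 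Finally, $H'$ is output not by decoding a new colouring but as the symmetric difference of $E(H)$ with the alternating cycle $A$ in $F_\lambda$ determined by that directed cycle; membership of each candidate edge in $A$ is testable by one revolution around the functional cycle, again in $O(\log n)$ space. These three ideas---orientation via $\mathrm{id}$, functional out-degree-one traversal, and printing $H\mathbin{\triangle}A$---are what is missing from your outline.
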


One consequence of the above algorithm is that it indirectly proves that there are efficient parallel circuits for generating another Hamiltonian cycle in this graph class, due to the known complexity class containment $\mbox{L}\subseteq \mbox{NC}^2$ by exponentiation of the state transition matrix by iterated squaring.

The framework underlying our main theorems
can also be used to prove an upper bound on the number of steps needed for Thomason's lollipop method to terminate.

\begin{restatable}[Thomason's lollipop method in cubic bipartite Pfaffian graphs]{theorem}{thmtho}
\label{thm:tho}
Thomason's lollipop method starting from any Hamiltonian cycle $H$ and 
any edge $e\in E(H)$ in an $n$-vertex cubic bipartite Pfaffian graph $G$,
terminates after at most $n$ steps.
\end{restatable}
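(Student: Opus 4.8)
The plan is to run the lollipop method explicitly, record the sequence of Hamiltonian paths it produces, and pin down a potential function bounded by $n$ that strictly decreases at every step; the structural framework behind Theorems~\ref{thm:ahc} and~\ref{thm:ahc-log} enters precisely in controlling this potential.

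First I would fix notation. As $G$ is bipartite and Hamiltonian, $n$ is even; write $V(G)=A\cup B$ for the bipartition with $e=xy$, $x\in A$. In a cubic graph the lollipop method keeps a Hamiltonian path $P$ with fixed endpoint (root) $x$ and first edge $e$; at the free endpoint $w$ the two non-path edges split into a ``backward'' edge (whose insertion recreates the previous lollipop, or, at the first step, $H$ itself) and a ``forward'' edge, and one step inserts the forward edge, deletes the edge dictated by the lollipop rule at the resulting degree-three vertex, and takes the other end of the deleted edge as the new free endpoint. The process halts once the forward edge at $w$ is the edge $wx$, at which point $P+wx$ is a Hamiltonian cycle $H'\neq H$ through $e$. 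Starting from $P_0=H-g$, where $g$ is the second $H$-edge at $x$, we obtain a sequence $P_0,P_1,\dots,P_k$; by the standard argument no $P_i$ repeats, so the process terminates, and the content of the theorem is the quantitative bound that the number of steps is at most $n$.

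The key lemma I would aim for is: in a cubic bipartite Pfaffian graph, every lollipop step deletes an edge of $H$ and inserts a chord of $H$, i.e.\ an edge of $M^\ast:=E(G)\setminus E(H)$ (which is a perfect matching, as $G$ is cubic and $H$ is a $2$-factor). Granting this, $\phi(P_i):=|E(P_i)\cap E(H)|$ decreases by exactly one per step from $\phi(P_0)=n-1$; and since every Hamiltonian path with first edge $e$ still contains $e\in E(H)$, we have $\phi\geq 1$ throughout, so the process closes after at most $n-1$---hence at most $n$---steps. (The same bookkeeping shows $P_i$ carries exactly $i$ chords, giving a fortiori $k\leq|M^\ast|=n/2$; the theorem states the safe bound $n$, which also absorbs the off-by-one in how a ``step'' is counted.) The linear-time implementation claimed in the theorem is then immediate, since a step is a constant-work local surgery at the free endpoint.

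Proving the key lemma is where I expect the real difficulty to lie, and where the framework is indispensable. Each $P_i$, being a Hamiltonian path through $e$ in a bipartite Pfaffian graph, is described by the framework's unique two-colouring advice; a single lollipop step should correspond to flipping the colour of one vertex, and the Pfaffian sign condition should rule out the configuration in which the rule would instead re-insert an $H$-edge rather than a chord (a related statement, which I also expect to be true, is that the sequence of free endpoints has no repetitions). The analogue of all this is false for general cubic graphs---which is exactly why the lollipop method can run for exponentially many steps there, cf.\ Section~\ref{sect: earlier work}---so the argument must genuinely combine bipartiteness, the Pfaffian orientation, and $3$-regularity; the hard part is translating the bare combinatorial rule of the lollipop method into the language of the colouring advice faithfully enough that this monotonicity becomes visible.
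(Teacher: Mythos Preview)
Your key lemma is false: not every lollipop step inserts a chord and deletes an $H$-edge. In the cubic case each vertex of $L$ has a unique non-$H$ edge, so the auxiliary digraph $D_{\lambda,H}$ has out-degree one everywhere, and the walk from the initial free endpoint $u$ traces a rho: a directed path of length $d'$ to a vertex $w_0$, then a directed cycle of length $d$ returning to $w_0$, and then the path \emph{retraced in reverse} back to $u$. During the first $d'+d$ steps your claim does hold, but on the second visit to $w_0$ the chord at $w_0$ is already in the current Hamiltonian path (it was inserted on the first visit), so the only forward lollipop edge now available at $w_0$ is the $H$-edge that was deleted when $w_0$ was first reached; the same reversal happens at each earlier path vertex in turn. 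Hence $\phi$ decreases for $d'+d$ steps and then \emph{increases} for $d'$ steps; it is not monotone. Your related expectation that free endpoints never repeat fails for the same reason---each path vertex $w'_j$ is the free endpoint twice---and the worked example in Figure~\ref{fig:lollipop} has $n=16$ and $h=9>n/2=|M^\ast|$, directly refuting the strengthened bound you derive from the lemma.

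The paper's bound is instead $h=2d'+d\le 2(d'+d)\le 2|L|=n$. The substantive work lies not in a per-step potential but in showing that the lollipop sequence $Q_0,\ldots,Q_h$ \emph{is} this rho-walk: each $Q_i$ is encoded by an almost-perfect matching $M_i^-$ in $F_\lambda$, and one verifies that a single lollipop step swaps one alternating edge-pair along the rho, forward for $i<d'+d$ and backward thereafter. The linear bound is thus a global statement about the trajectory's shape in $D_{\lambda,H}$, not a local monotonicity.
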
 

It was conjectured in a master thesis at Waterloo by Haddadan~\cite{Haddadan2015} that Thomason's lollipop method runs in a linear number of steps in cubic bipartite planar graphs. It was there also proven to hold for the subfamily of such graphs that does not have the wheel graph on six vertices as a minor. However, as the author himself points out, finding a first Hamiltonian cycle in this limited graph family does not seem intractable. We are not aware of any other papers providing a polynomial time bound on Thomason's lollipop method in any graph class.

Our framework also can be used to prove the following structural results for Hamiltonian cycles.

\begin{restatable}[Non-uniqueness in minimum degree three]{corollary}{corahc}
\label{cor:ahc}
For every bipartite Pfaffian graph $G$ of minimum degree three 
and for every edge $e\in E(G)$, it holds that $G$ has either zero or 
at least four distinct Hamiltonian cycles $H$ with $e\in E(H)$.
\end{restatable}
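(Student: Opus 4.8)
The plan is to obtain the corollary directly from Theorem~\ref{thm:ahc} in its strengthened form. Recall (as announced in the abstract) that the linear-time algorithm of Theorem~\ref{thm:ahc} in fact outputs \emph{at least three} Hamiltonian cycles of $G$ that are pairwise distinct, each different from the input cycle $H$, and each passing through the anchor edge $e$; the statement of Theorem~\ref{thm:ahc} above records only the weaker consequence of producing a single other Hamiltonian cycle, because that already suffices for \textsc{Another Hamiltonian Cycle}. It is the stronger ``three others'' guarantee that I would use here: the one-other form would only yield ``at least two'' in the corollary, and one cannot cheaply bootstrap from it, since a priori the algorithm could oscillate between the same two cycles.

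So I would fix a bipartite Pfaffian graph $G$ of minimum degree three and an edge $e\in E(G)$, and split into two cases. If no Hamiltonian cycle of $G$ contains $e$, then there are zero Hamiltonian cycles $H$ with $e\in E(H)$ and there is nothing to prove. Otherwise pick any Hamiltonian cycle $H$ of $G$ with $e\in E(H)$ and feed the triple $(G,H,e)$ to the strengthened algorithm of Theorem~\ref{thm:ahc}; it returns Hamiltonian cycles $H_1,H_2,H_3$ of $G$ that are pairwise distinct, each distinct from $H$, and each containing $e$. Then $\{H,H_1,H_2,H_3\}$ witnesses that $G$ has at least four distinct Hamiltonian cycles through $e$, which is exactly the claim.

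Thus the only real work sits in the strengthened Theorem~\ref{thm:ahc} rather than in the corollary itself: one must show that the advice-by-two-coloring framework, starting from a single Hamiltonian cycle $H$ through $e$ and its associated coloring, always exhibits three further colorings that correspond to Hamiltonian cycles through $e$, with the minimum-degree-three hypothesis supplying the necessary local freedom. I expect this multiplicity step, not the reduction above, to be the main obstacle. The bound is sharp: in the cube graph each of the twelve edges lies on exactly four of the six Hamiltonian cycles, so ``four'' cannot be improved in general.
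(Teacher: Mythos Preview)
Your reduction is logically sound but circular as written: the ``strengthened Theorem~\ref{thm:ahc}'' you invoke is not a separate result in the paper---the three-others guarantee advertised in the abstract is precisely what Corollary~\ref{cor:ahc} establishes. Theorem~\ref{thm:ahc} as stated and proved produces exactly \emph{one} other Hamiltonian cycle $H'\neq H$ through $e$, and the content of the corollary is to bootstrap from that to four. You correctly anticipate that this bootstrapping is the real work and that it should go through the two-coloring framework, but you do not carry it out, and the one-other form alone does not suffice (as you yourself note, iterating it could oscillate).

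The paper's proof supplies the missing argument as follows. The algorithm behind Theorem~\ref{thm:ahc} fixes $\lambda=\chi_H|_L$ and produces $H'$ with $\chi_{H'}|_L=\chi_H|_L$ but $\chi_{H'}|_R\neq\chi_H|_R$. Now swap the roles of $L$ and $R$ (and of $s,t$ and the two colors) and rerun from $H$: this yields $H''$ with $\chi_{H''}|_R=\chi_H|_R$ but $\chi_{H''}|_L\neq\chi_H|_L$. Since $H\mapsto\chi_H$ is injective (Lemma~\ref{lem:acyclic-hamiltonicity}), the three colorings $\chi_H,\chi_{H'},\chi_{H''}$ are pairwise distinct, hence so are the cycles. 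Finally, rerun the swapped-role algorithm from $H'$ (or the original-role algorithm from $H''$) to obtain $H'''$ whose coloring differs from $\chi_{H'}$ on $L$ while agreeing on $R$; one checks this coloring also differs from $\chi_H$ and $\chi_{H''}$, so $H'''$ is a fourth distinct cycle through $e$. This $L/R$-swap together with injectivity of $\chi$ is the key idea your proposal gestures at but does not provide.
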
 

The cube graph is a cubic bipartite Pfaffian 
graph with six distinct Hamiltonian cycles. We show that no Hamiltonian
graph in this class can have fewer Hamiltonian cycles.

\begin{restatable}[Cubic tight lower bound]{corollary}{corchc}
\label{cor:chc}
Every cubic bipartite Pfaffian graph has either zero or at least six
distinct Hamiltonian cycles.
\end{restatable}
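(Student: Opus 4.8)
The plan is to deduce this from Corollary~\ref{cor:ahc} by a short double-counting argument that exploits the rigidity of Hamiltonian cycles in cubic graphs. Let $G$ be a cubic bipartite Pfaffian graph; if $G$ is non-Hamiltonian there is nothing to prove, so assume $G$ is Hamiltonian, write $T$ for the number of distinct Hamiltonian cycles of $G$, and for an edge $e\in E(G)$ write $t(e)$ for the number of Hamiltonian cycles containing $e$. The first step is the identity $\sum_{e\ni v}t(e)=2T$, valid for every vertex $v$, which holds because every Hamiltonian cycle is a spanning $2$-regular subgraph and hence uses exactly two of the three edges at $v$. Call an edge \emph{live} if $t(e)\ge 1$, and let $d(v)$ be the number of live edges at $v$. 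Since a cubic bipartite Pfaffian graph is in particular a bipartite Pfaffian graph of minimum degree three, Corollary~\ref{cor:ahc} applies and upgrades every live edge to $t(e)\ge 4$; combined with the identity this yields $2T=\sum_{e\ni v}t(e)\ge 4\,d(v)$, i.e.\ $T\ge 2\,d(v)$ for every vertex $v$.

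It then remains to produce a vertex with $d(v)=3$, for any such vertex gives $T\ge 6$ at once. Here I would let $F\subseteq E(G)$ be the set of all live edges (equivalently, the union of the edge sets of all Hamiltonian cycles of $G$) and argue by cases on whether $F$ is $2$-regular. Fixing any Hamiltonian cycle $H$ we have $E(H)\subseteq F$ and $H$ is $2$-regular, so $d(v)=\deg_F(v)\ge 2$ for every $v$; hence either some vertex has $d(v)=3$, and we are done, or $\deg_F(v)=2$ for all $v$. In the latter case $|F|=|E(H)|$ forces $F=E(H)$, so every Hamiltonian cycle has edge set contained in the single cycle $E(H)$ and therefore coincides with $H$, giving $T=1$. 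But Corollary~\ref{cor:ahc} applied to any $e\in E(H)$ gives $T\ge t(e)\ge 4$, a contradiction; so the degenerate case cannot occur, and a vertex with $d(v)=3$ always exists.

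The only substantive ingredient is Corollary~\ref{cor:ahc}; everything else is a counting argument, and I expect no real obstacle. The one point that needs care is precisely the degenerate case above: one must observe that the absence of a vertex whose three incident edges are all live is exactly the situation of a unique Hamiltonian cycle, which Corollary~\ref{cor:ahc} itself rules out. Finally, for tightness I would invoke the cube graph, which is cubic, bipartite, and planar---hence Pfaffian---and has exactly six Hamiltonian cycles (as noted in the introduction), so the constant six is best possible.
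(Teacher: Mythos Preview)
Your argument is correct and is genuinely different from the paper's. The paper first invokes Theorem~\ref{thm:ahc} to obtain a second Hamiltonian cycle $H'$ and an edge $f\in E(H')\setminus E(H)$, then applies Corollary~\ref{cor:ahc} at $f$ to get four Hamiltonian cycles through $f$; together with $H$ this yields at least five cycles, and the sixth is obtained by appealing to Bos\'ak's theorem that cubic bipartite graphs have an even number of Hamiltonian cycles. Your route instead uses only Corollary~\ref{cor:ahc} plus the local identity $\sum_{e\ni v}t(e)=2T$ in a cubic graph, exhibiting a vertex with three live edges to conclude $T\ge 6$ directly. The main advantage of your approach is that it is self-contained within the paper's framework and does not require the external parity result of Bos\'ak; the paper's approach, on the other hand, constructs the extra cycles more explicitly (via Theorem~\ref{thm:ahc}) before the final parity step. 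Your handling of the degenerate case (all $d(v)=2$ forcing $F=E(H)$ and hence $T=1$, contradicting Corollary~\ref{cor:ahc}) is the key observation that makes the counting go through, and it is correct. The tightness via the cube graph matches the paper.
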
 

Chia and Ong~\cite{ChiaOA2007} (in the paragraph after Theorem 10) asked whether there exists cubic bipartite planar graphs with exactly four Hamiltonian cycles. The above Corollary rules out that possibility.

\medskip
\noindent
{\em Remarks.}
One consequence of Theorem~\ref{thm:ahc} is that if 
the \textsc{Another Hamiltonian Cycle} problem in general odd degree graphs 
is PPA-complete as hypothesized by 
Papadimitriou~\cite[Open Problem~(4)]{Papadimitriou1994}, then any proof 
cannot carry over to cubic bipartite planar graphs unless also 
$\mathrm{PPA}=\mathrm{FP}$.
Our result also seems related to another well-known conjecture, 
namely Barnette's conjecture (cf.~Tutte~\cite[Unsolved Problem V]{Tutte1969}),
which states that every cubic $3$-connected bipartite planar graph 
({\em Barnette graph}) has a Hamiltonian cycle.
Gorsky, Steiner, and Wiederrecht~\cite{GorskySW2023} recently extended the conjecture by showing that if Barnette's conjecture is true,
it also holds that every cubic bipartite $3$-connected Pfaffian graph has a Hamiltonian cycle.
It is known that if Barnette's conjecture is true, then there is a Hamiltonian cycle 
through every edge in every such graph, see Kelmans~\cite{Kelmans1994}. 
Moreover, it was indirectly shown by Holton, Manvel, and McKay~\cite{Holton1985} 
that any Barnette graph larger than the smallest such graph---namely
the cube graph---can be reduced to a smaller 
Barnette graph in such a way that if the smaller graph has a Hamiltonian 
cycle through every edge, 
then the larger one must also have a Hamiltonian cycle. This means that 
if it was possible given any single Hamiltonian cycle in a Barnette graph 
to generate a Hamiltonian cycle through any specific edge \emph{not} on 
the initial cycle, then Barnette's conjecture would be constructively true. We remark 
that our algorithm is not known to be able to do this, not even indirectly 
by applying it several times in a chain of Hamiltonian cycle transformations.
We do note however, that our algorithms in Theorems~\ref{thm:ahc} and~\ref{thm:ahc-log} not only make sure the edge $e=\{s,t\}$ is part of both Hamiltonian cycles, they
also retain the other edge incident to $s$ on $H$; this can be observed by Lemma~\ref{lem:s-avoiding-cycle-in-d}, that is, no edge incident to $s$ is changed by the algorithms since it is not on the alternating cycle we use. In particular this makes it possible given $H$ and any given edge $f\in E(G)$ \emph{not} on $H$, to find another Hamiltonian cycle $H'$ such that $f$ is also not part of $H'$.

\subsection{Overview of techniques} 

At the heart of our algorithms and structural results is what we believe
to be a new framework for efficiently witnessing a Hamiltonian cycle $H$
through an arbitrary {\em anchor} edge $e$
in a bipartite Pfaffian graph $G$. We associate a (not necessarily proper) two-coloring 
$\chi_H:V(G)\rightarrow\{0,1\}$ to $H$ that is unique to $H$ (but dependent on
a fixed but arbitrary Pfaffian orientation of $G$ as well as $e$) and 
that defines a unique {\em acyclic} Hamiltonian%
\footnote{In precise terms, the orientation is a directed acyclic graph that
contains as a directed subgraph a directed Hamiltonian {\em path} from 
one end of $e$ to the other.}{}
orientation of $G\setminus e$. We will refer to such an $\chi_H$ as a \emph{good coloring}. This acyclicity in particular enables the
unique recovery of $H$ in linear time by standard topological sorting
when given $\chi_H$ as advice. The reader may want to consult 
Figure~\ref{fig:good-colorings} (on page~\pageref{fig:good-colorings})
for an advance illustration at this point; the framework itself is developed
in Section~\ref{sect:hc}. 

We also show that it suffices to know $\chi_H$ in only one of the parts
of a bipartition of $G$ to efficiently extend to a Hamiltonian cycle,
which is not necessarily equal to $H$ however. More precisely, we 
show that a coloring $\lambda$ of one of the parts leads to an auxiliary 
bipartite graph $F_\lambda$ whose perfect matchings correspond to 
the good colorings $\chi_H$ extending $\lambda$, which in turn each define 
a unique Hamiltonian cycle $H$. We refer to Figure~\ref{fig:f-lambda} 
(on page~\pageref{fig:f-lambda}) for an advance illustration of this setting.

Finally, when $G$ has minimum degree three, we observe that we can
use $F_\lambda$ to efficiently switch from one Hamiltonian cycle $H$ in $G$
(described by a perfect matching $M_H$ in $F_\lambda$)
to another Hamiltonian cycle $H'\neq H$ in $G$ by switching along 
an alternating cycle in $F_\lambda$ which can be discovered through 
a directed cycle in an auxiliary directed graph $D_{\lambda,H}$. 
Moreover, we show that this construction and discovery can be executed 
in deterministic linear time. We refer to Figure~\ref{fig:cycle-switching} 
(on page~\pageref{fig:cycle-switching}) for an advance illustration of 
this setting; the switching construction itself is developed
in Section~\ref{sect:ahc}.

\subsection{Further results}

\label{sect:further}
There are further algorithmic consequences of our framework 
on Pfaffian graphs to the problems of deterministically 
finding and counting Hamiltonian cycles
as well as the Traveling Salesperson Problem (TSP) in the bipartite 
Pfaffian/planar setting.

Our framework for witnessing Hamiltonicity should be contrasted with 
the Cut\&Count approach for detecting Hamiltonian cycles by 
Cygan, Nederlof, Pilipczuk, Pilipczuk, Van Rooij, and Wojtaszczyk~\cite{CyganNPPRWO2022}, which reduces the Hamiltonian 
cycle problem to a local problem by showing that a cycle cover of the input
graph is Hamiltonian if and only if the number of the exponentially many 
vertex partitions that are consistent (defined in a certain local way) 
with it is \emph{odd}. This approach therefore necessarily reduces the 
original decision problem to a parity counting problem, which has several 
disadvantages that seem inherent to the approach, including the need for 
randomization and a running time factor that is pseudo-polynomial in 
the integer weights for edge-weighted problem variants. Our framework shows 
that for bipartite Pfaffian graphs there is a more natural way to witness 
that a cycle cover is Hamiltonian using only a \emph{single} vertex 
partition; that is, $\chi_H$.

\medskip
{\em Earlier work.}
Before stating our results, let us set the stage by reviewing pertinent 
earlier work. 
To start with, let us recall that in terms of complexity lower bounds, 
an algorithm for detecting Hamiltonicity in a given $n$-vertex 
planar graph with worst-case $\operatorname{exp}(o(\sqrt{n}))$ running 
time would violate the Exponential Time Hypothesis, as seen by combining the 
Sparsification Lemma of Impagliazzo, Paturi, and Zane~\cite{Impagliazzo2001} 
with the reduction in Garey, Johnson, and Tarjan~\cite{GareyJT1976}, see De\u{\i}neko, Klinz, and Woeginger~\cite{DeinekoKW06}.

Algorithms for detecting and finding a Hamiltonian cycle in a given $n$-vertex planar
graph running in worst-case $\operatorname{exp}(O(\sqrt{n}))$ 
time have also been known since the work of De\u{\i}neko, Klinz, and Woeginger~\cite{DeinekoKW06}.
Further exploiting the properties of planar graphs, Dorn, Penninkx, Bodlaender, and Fomin~\cite{DornPBF10}
improved the base in the exponential running time significantly. 
Subsequently several powerful algorithmic techniques have been developed 
to get faster algorithms for connectivity problems in general and 
Hamiltonicity problems in particular, including the Cut\&Count technique of 
Cygan, Nederlof, Pilipczuk, Pilipczuk, Van Rooij, and Wojtaszczyk~\cite{CyganNPPRWO2022}, the rank-based method 
with representative sets and reductions via fast Gaussian elimination 
in Bodlaender, Cygan, Kratsch, and Nederlof~\cite{Bodlaender2015}, and especially the 
technique of bases of perfect matchings in Cygan, Kratsch, and Nederlof~\cite{CKN18}.
Pino, Bodlaender, and Van Rooij~\cite{Pino2016,PinoMaster} use these techniques 
over branch width to bring the base of the exponential running time 
down as much as possible in the planar case. 

As far as we know though, there have been no studies showing that the 
input graph being bipartite would help getting even faster algorithms 
in planar graphs. One reason to expect it would is that the fastest known 
algorithm for Hamiltonicity detection in a bipartite graph is much faster 
than the fastest for a general graph, see~Bj\"orklund~\cite{Bjorklund2014}. 
Another one is the TSP algorithm restricted to bipartite graphs by 
Nederlof~\cite{Nederlof2020} that has a $O(c^n)$ running time for 
some $c<2$ if matrix multiplication is in quadratic time, as opposed to 
what is known for general graphs.

\medskip
{\em Our results.} 
By effectively testing all possible coloring-advice bits
in our new Hamiltonicity framework, we get improved 
deterministic algorithms parameterized by various graph decompositions' 
width measures in the class of bipartite Pfaffian (or planar) graphs. 
Earlier parameterized algorithms of this type apply to all graphs, and 
it is not obvious how to further exploit a bipartite structure and modify 
these algorithms for improved running time. Our takeaway message is that 
it nevertheless appears to be easier to count and find weighted Hamiltonian 
cycles in bipartite planar graphs than in general planar ones. 
Let us now state each of our results and highlight a comparison 
with pertinent earlier algorithms. 

\begin{restatable}[Bipartite Pfaffian TSP parameterized by path width]{theorem}{thmtsppw}
\label{thm:TSP-pw}
Given an edge-weighted bipartite Pfaffian graph $G$ on $n$ vertices along 
with a path decomposition of width $\operatorname{pw}(G)$, we can compute 
the minimum weight Hamiltonian cycle in $G$ in 
$4^{\operatorname{pw}(G)}\operatorname{poly}(n)$ time.
\end{restatable}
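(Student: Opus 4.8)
The plan is to feed the good-coloring framework of Section~\ref{sect:hc} into a textbook dynamic program over the path decomposition, using the coloring as the principal part of the DP state in place of the connectivity/partition bookkeeping that normally dominates Hamiltonicity DPs. As preprocessing I would fix a Pfaffian orientation of $G$ (computable in polynomial time for bipartite graphs) and refine the given path decomposition into a nice one of the same width. Since a minimum-weight Hamiltonian cycle uses \emph{some} edge, I would then loop over every edge $e\in E(G)$ as the anchor---only a polynomial overhead---and, for each fixed $e$, compute the minimum weight of a Hamiltonian cycle $H$ with $e\in E(H)$, returning the best value over all anchors.

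For a fixed anchor $e$, Section~\ref{sect:hc} gives a bijection between Hamiltonian cycles $H$ through $e$ and good colorings $\chi_H\colon V(G)\to\{0,1\}$ together with (a)~a family of constraints, each involving only a vertex and its neighbourhood, whose simultaneous satisfaction characterizes exactly the good colorings, and (b)~an efficient, locally driven recovery of the edge set $E(H)$ from $\chi_H$ (topological sorting of the acyclic orientation that $\chi_H$ induces on $G\setminus e$). The decisive point is that the ``$H$ is a single cycle'' requirement---the source of the $2^{\Theta(\operatorname{pw}\log\operatorname{pw})}$ partition state in naive pathwidth DPs for Hamiltonicity, and the reason Cut\&Count~\cite{CyganNPPRWO2022} must pass through a parity count---is here subsumed by these vertex-local constraints thanks to Pfaffianity. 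Consequently, minimizing $w(H)=w(e)+\sum_{f\in E(H)\setminus\{e\}}w(f)$ over Hamiltonian cycles through $e$ becomes: minimize, over all colorings $\chi$ satisfying a bounded-arity system of local constraints, a weight that the DP can accumulate as it introduces edges.

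I would then run the standard introduce/forget DP on the nice path decomposition: the state at a bag $B$ stores the restriction of $\chi$ to $B$ together with, for each $v\in B$, a constant amount of bookkeeping (essentially a truncated count of the edges already committed at $v$, with out-of-range branches pruned) sufficient to certify $v$'s local constraints once its last incident edge has been introduced; introduce steps branch over the colors of new vertices, forget steps verify the just-finalized vertex, and each edge introduction folds in the appropriate local weight. Correctness is immediate from the characterization, and one anchor run costs $(\text{number of bags})\times(\text{number of states})$. The work that must be done carefully is the state accounting: to land at $4^{\operatorname{pw}(G)}\operatorname{poly}(n)$ rather than, say, $6^{\operatorname{pw}(G)}\operatorname{poly}(n)$, the per-vertex memory beyond the color bit must be squeezed to a single bit---so that a bag of size at most $\operatorname{pw}(G)+1$ carries at most $4^{\operatorname{pw}(G)+1}$ states---which I expect to require a global edge-count/degree argument so as not to distinguish all of $\{0,1,2,\ge 3\}$ locally at each vertex. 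Verifying that this squeezing goes through, and that the framework's constraints and its edge recovery are indeed as local as needed for the introduce/forget transitions, is the main obstacle; the remainder is routine path-decomposition dynamic programming.
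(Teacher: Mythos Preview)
Your proposal is correct and is essentially the paper's approach: fix a Pfaffian orientation, loop over anchor edges $e$, and run a standard DP over a nice path decomposition whose per-vertex state simultaneously records the coloring and how many Hamiltonian-cycle edges have already been committed at that vertex (the paper phrases this as building a perfect matching in $F_\lambda$ while simultaneously choosing $\lambda$, but it is the same object). Your flagged ``main obstacle'' about squeezing the state to two bits dissolves once you notice that the color bit is only meaningful when exactly one incident edge has been committed, so the four states are simply $\{\text{unmatched},\ \text{matched-once-color-}0,\ \text{matched-once-color-}1,\ \text{matched-twice}\}$; transitions to degree $\geq 3$ are just disallowed in the introduce-edge step, and no global edge-count argument is needed.
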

Bodlaender, Cygan, Kratsch, and Nederlof~\cite{Bodlaender2015} present an algorithm 
running in $(2+2^{\omega/2})^{\operatorname{pw}(G)}\operatorname{poly}(n)$ 
time with $\omega<2.373$ the square matrix multiplication exponent that 
works for \emph{any} graph.
Our algorithm is only asymptotically better if $\omega>2$. 
Our technique also generalizes to other graph decompositions, 
in particular the branch width.

\begin{restatable}[Bipartite Pfaffian TSP parameterized by branch width]{theorem}{thmtspbw}
\label{thm:TSP-bw}
 Given an edge-weighted bipartite Pfaffian graph $G$ on $n$ vertices 
along with a branch decomposition of width $\operatorname{bw}(G)$, 
we can compute the minimum weight Hamiltonian cycle 
in $G$ in $8^{\operatorname{bw}(G)}\operatorname{poly}(n)$ time.
\end{restatable}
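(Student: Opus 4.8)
The plan is to recast the problem via the good coloring framework of Section~\ref{sect:hc} and then to evaluate the resulting optimum by a bottom-up dynamic program along the given branch decomposition. Fix an arbitrary Pfaffian orientation of $G$ and an anchor edge $e=\{s,t\}$. By the framework, the Hamiltonian cycles of $G$ containing $e$ are in bijection with the good colorings $\chi\colon V(G)\to\{0,1\}$, and each good coloring determines, \emph{locally and edge by edge} (from the colors of the two endpoints and their orientation), which edges of $G\setminus e$ lie on the corresponding cycle $H$. Assigning to a good coloring $\chi$ the weight $w(e)+\sum_{f\in E(H)\setminus\{e\}} w(f)$, the minimum-weight Hamiltonian cycle through $e$ equals the minimum-weight good coloring; running this once for each of the $O(n)$ edges $e$ incident to a fixed vertex $v$ and taking the best outcome yields the minimum-weight Hamiltonian cycle of $G$, since every Hamiltonian cycle uses two edges at $v$.

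Given a branch decomposition of width $\operatorname{bw}(G)$, we process its tree from the leaves to the root. For each subtree, spanning an edge set $E'\subseteq E(G)$ with middle set $\partial E'$, we maintain a table indexed by \emph{boundary states}: a coloring of $\partial E'$ together with the $O(1)$-bit per-vertex status recording how the Hamiltonian-path fragments of the partial solution restricted to $E'$ meet each boundary vertex (how many of its at most two cycle edges have already been placed, equivalently the local type of that vertex in the partial acyclic orientation). The entry stores the minimum total weight of the $H$-edges lying in $E'$ over partial selections that realize the given boundary state and are everywhere locally consistent with a good coloring. Leaf tables (single edges) are written down directly; at an internal node we obtain the parent table by iterating over pairs of child boundary states that agree on the shared boundary vertices, adding the weights and imposing the local goodness constraint at each vertex that leaves the boundary. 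The base $8$---against the base $4$ of the path-width algorithm of Theorem~\ref{thm:TSP-pw}---is the usual overhead of merging the three middle sets incident to an internal node of a branch decomposition, each boundary vertex lying in at least two of them; organizing this merge in the standard way keeps the time per node within $8^{\operatorname{bw}(G)}\operatorname{poly}(n)$, and summing over the $O(n)$ nodes of the tree and over the choices of $e$ gives the claimed bound. At the root the boundary is empty and the single entry, plus $w(e)$, is the sought minimum; the optimal cycle itself is recovered by a standard traceback.

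The reason this works---and the reason the rank-based and Cut\&Count machinery needed over branch/path width for general graphs in~\cite{Bodlaender2015,CyganNPPRWO2022} can be avoided, so that the algorithm is deterministic and only polynomially (not pseudo-polynomially) dependent on the weights---is that \emph{goodness of a coloring is a local property}: by Section~\ref{sect:hc}, $\chi$ is good iff a per-vertex constraint, depending only on the colors of $v$ and of its neighbors and on the Pfaffian orientation, holds at every $v$, and the edge set a good coloring then selects is automatically a \emph{single} Hamiltonian cycle through $e$, because the induced orientation of $G\setminus e$ is acyclic and an acyclic orientation cannot contain two or more vertex-disjoint directed cycles. Consequently no information about which path fragments pair up across the boundary has to be retained, which is what makes a table of size $c^{\operatorname{bw}(G)}$ suffice. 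I expect the main obstacle to be making precisely this point rigorous in the branch-decomposition setting: that the characterization of good colorings from Section~\ref{sect:hc} is genuinely \emph{bag-local}, i.e.\ verifiable one vertex (and its incident edges) at a time during a merge; that the constantly many per-vertex statuses suffice both to determine which cross-boundary edges belong to $H$ and, at the root, to certify that a globally consistent assignment is a good coloring; and that a partial boundary state that is locally good at every placed vertex really does extend rather than getting stuck. The remaining items---the precise ternary merge rules, the traceback, and the one extra state bit per boundary vertex over the path-width dynamic program---parallel the proof of Theorem~\ref{thm:TSP-pw} and the switching construction of Section~\ref{sect:ahc}.
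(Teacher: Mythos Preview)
Your dynamic-programming skeleton is close to what the paper does---four states per boundary vertex, merge at internal nodes via the $L_v,R_v,F_v,I_v$ decomposition with $|I_v|+|L_v|+|R_v|+|F_v|\le\tfrac32\operatorname{bw}(G)$, hence $4^{3\operatorname{bw}(G)/2}=8^{\operatorname{bw}(G)}$---but the justification you give for \emph{why} connectivity comes for free is not right, and as stated the argument has a genuine gap.

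You assert two things that are not in Section~\ref{sect:hc} and are in fact false: (a)~that a good coloring $\chi$ determines ``locally and edge by edge'' which edges belong to $H$, and (b)~that goodness of $\chi$ is a per-vertex constraint. Neither holds. Lemma~\ref{lem:acyclic-hamiltonicity} says that for a \emph{good} $\chi$ the orientation $\oa G_e^\chi\setminus(t,s)$ is acyclic and contains a unique directed Hamiltonian path; but whether any particular edge lies on that path is a global property of the DAG (it is read off from the topological sort), not a function of the two endpoint colors alone. And there is no local test for goodness: most colorings $\chi$ with $\chi(s)=0$, $\chi(t)=1$ are bad, and nothing in Section~\ref{sect:hc} gives a vertex-local certificate distinguishing them. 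Your sentence ``an acyclic orientation cannot contain two or more vertex-disjoint directed cycles'' is true but vacuous here---the danger is not cycles in the orientation, it is that the degree-$2$ subgraph you select might decompose into several cycles, none of which need be a directed cycle in $\oa G_e^\chi$.

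What actually makes the four-state DP correct is Lemma~\ref{lem:matching-to-extension} (together with Lemma~\ref{lem:extension-to-matching}). The DP does not enumerate good colorings of $V(G)$; it simultaneously enumerates a coloring $\lambda$ of \emph{one side} $L$ and a $2$-regular spanning subgraph subject to the parity rule~\eqref{eq:f-parity}: at each $\ell\in L$ the two selected edges must carry the same $\lambda(\ell)$-induced parity, at each $r\in R$ the two selected edges must carry opposite parities (this is the content of the transition Table~\ref{table:edge-states}). Any such pair is precisely a perfect matching in $F_\lambda$, and Lemma~\ref{lem:matching-to-extension}---whose proof is where Pfaffianity is actually invoked, to rule out short central cycles---guarantees that every perfect matching of $F_\lambda$ projects to a single Hamiltonian cycle of $G$, not merely a $2$-factor. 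That lemma is the step doing all the work, and it is absent from your proposal; without it you have no argument that the minimizer your DP returns is connected.
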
 
The most striking application of the above result is perhaps that it 
gives an improved running time for \textsc{TSP} in bipartite planar graphs 

\begin{restatable}[Bipartite planar TSP]{corollary}{cortspplanar}
\label{cor:TSP-planar}
Given an edge-weighted bipartite planar graph $G$ on $n$ vertices, we 
can compute the minimum weight Hamiltonian cycle in $G$ 
in $O(2^{6.366\sqrt{n}})$ time.
\end{restatable}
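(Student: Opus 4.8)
The plan is to obtain Corollary~\ref{cor:TSP-planar} as a direct consequence of Theorem~\ref{thm:TSP-bw} together with the classical sublinear bound on the branch width of planar graphs. First, recall that planar graphs are Pfaffian, so a bipartite planar graph $G$ is in particular a bipartite Pfaffian graph; hence Theorem~\ref{thm:TSP-bw} is applicable to $G$ as soon as a branch decomposition of $G$ is available.

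The branch decomposition comes from two well-known ingredients. On the one hand, by the bound of Fomin and Thilikos on the decomposability of planar graphs, every $n$-vertex planar graph satisfies $\operatorname{bw}(G)\le\sqrt{4.5\,n}<2.122\sqrt{n}$. On the other hand, by the ratcatcher algorithm of Seymour and Thomas (with subsequent refinements, e.g.\ by Gu and Tamaki) an optimal branch decomposition of a planar graph can be constructed in polynomial time. Combining these, in $\operatorname{poly}(n)$ time we obtain a branch decomposition of $G$ of width $\operatorname{bw}(G)\le\sqrt{4.5\,n}$.

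Feeding this decomposition into Theorem~\ref{thm:TSP-bw} computes the minimum-weight Hamiltonian cycle of $G$ in time $8^{\operatorname{bw}(G)}\operatorname{poly}(n)\le 8^{\sqrt{4.5\,n}}\operatorname{poly}(n)=2^{3\sqrt{4.5}\,\sqrt{n}}\operatorname{poly}(n)$. Since $3\sqrt{4.5}=4.5\sqrt{2}=6.363\ldots<6.366$, the $\operatorname{poly}(n)=2^{o(\sqrt{n})}$ factor is absorbed into the marginally larger exponent, yielding the stated $O(2^{6.366\sqrt{n}})$ bound.

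There is no genuinely new obstacle here; the substance of the corollary is already contained in Theorem~\ref{thm:TSP-bw}. The only points requiring a little care are (i) invoking a polynomial-time \emph{construction} of the branch decomposition rather than merely the Fomin--Thilikos existence bound (alternatively, one can make the Fomin--Thilikos argument itself constructive, or use a polynomial-time constant-factor branchwidth approximation and round the exponent up accordingly), and (ii) the bookkeeping that absorbs the polynomial overhead into the gap between $4.5\sqrt{2}$ and $6.366$.
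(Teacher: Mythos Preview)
Your proof is correct and follows essentially the same route as the paper: apply Theorem~\ref{thm:TSP-bw} to the bipartite planar (hence bipartite Pfaffian) graph using a polynomial-time computable optimal branch decomposition together with the Fomin--Thilikos bound $\operatorname{bw}(G)<2.122\sqrt{n}$, and absorb the polynomial factor into the exponent. Your extra care about the arithmetic $3\sqrt{4.5}=4.5\sqrt{2}<6.366$ and about the constructive aspect of the branch decomposition is welcome but matches what the paper already notes in passing.
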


This should be contrasted with the best known result for general planar 
graphs which is the $O(2^{6.570\sqrt{n}})$ time bound by 
Pino, Bodlaender, and Van Rooij~\cite{Pino2016} (see also~\cite{PinoMaster} for the 
full argument). Our result matches the running time of the 
randomized \emph{graphic} planar \textsc{TSP} algorithm in 
Cygan, Nederlof, Pilipczuk, Pilipczuk, Van Rooij, and Wojtaszczyk~\cite{CyganNPPRWO2022}.

Our technique also makes it possible to count the Hamiltonian cycles. 
\begin{restatable}[Bipartite Pfaffian counting Hamiltonian cycles parameterized by path width]{theorem}{thmchcpw}
\label{thm:chc-pw}
Given a bipartite Pfaffian graph $G$ on $n$ vertices along with 
a path decomposition of width $\operatorname{pw}(G)$, we can count 
the Hamiltonian cycles in $G$ in 
$4^{\operatorname{pw}(G)}\operatorname{poly}(n)$ time.
\end{restatable}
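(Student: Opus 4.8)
The plan is to reduce the counting of \emph{all} Hamiltonian cycles to the counting of good colorings relative to a fixed anchor edge, and then to count those good colorings by a path-decomposition dynamic program whose state stores only a constant amount of information per bag vertex. First I would fix, once and for all, a Pfaffian orientation $\vec G$ of $G$ (for bipartite $G$ this is obtainable in polynomial time; see Section~\ref{sect:pfaffian}), and I would eliminate the anchor edge from the problem statement as follows: every Hamiltonian cycle of the $n$-vertex graph $G$ uses exactly $n$ edges, so $\sum_{e\in E(G)} N_e = n\cdot\#\{\text{Hamiltonian cycles of }G\}$, where $N_e$ is the number of Hamiltonian cycles through $e$. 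Hence it suffices to compute $N_e$ for each of the $|E(G)|=O(n\cdot\operatorname{pw}(G))$ edges $e$, add the results, and divide by $n$; the counts occupy $\operatorname{poly}(n)$ bits, so this arithmetic is absorbed into the $\operatorname{poly}(n)$ factor.

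Next I would invoke the framework of Sections~\ref{sect:hc}--\ref{sect:ahc}. For a fixed anchor edge $e$, the Hamiltonian cycles $H$ through $e$ are in bijection with the good colorings $\chi_H\colon V(G)\to\{0,1\}$, so $N_e$ equals the number of good colorings; and if $V(G)=A\uplus B$ is the bipartition, then the good colorings extending a fixed coloring $\lambda$ of $B$ are in bijection with the perfect matchings of the auxiliary bipartite graph $F_\lambda$, whence $N_e=\sum_{\lambda\colon B\to\{0,1\}}\operatorname{pm}(F_\lambda)$ with $\operatorname{pm}$ the number of perfect matchings. I would then evaluate this entire sum by a single dynamic program that sweeps the given path decomposition of $G$, branches on the value of $\lambda$ at each $B$-vertex at the moment it is introduced into a bag, and simultaneously runs the textbook ``count the perfect matchings'' dynamic program on $F_\lambda$ (tracking, for each relevant vertex, whether it is already matched).

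For the running time I would use the key structural feature of $F_\lambda$---to be read off its construction in Section~\ref{sect:ahc}---that it is assembled from \emph{local} gadgets: each vertex of $G$ spawns at most two vertices of $F_\lambda$, each edge of $F_\lambda$ joins $F_\lambda$-vertices coming from a single vertex or a single edge of $G$, and the gadget attached to a $B$-vertex $b$ depends only on $\lambda(b)$ and on the $\vec G$-directions at $b$. Consequently the given width-$\operatorname{pw}(G)$ path decomposition of $G$, after each vertex is replaced by its $\le 2$ copies, becomes a path decomposition of $F_\lambda$ of width at most $2\operatorname{pw}(G)+O(1)$, and this holds simultaneously for all $\lambda$. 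The dynamic-program state at a bag then records, for each $F_\lambda$-copy of a bag vertex, whether it is matched, together with $\lambda$ on each $B$-vertex of the bag---and the latter can in fact be recovered from the matching status of that vertex's copies, so it costs nothing extra---for a total of $4^{\operatorname{pw}(G)+O(1)}$ states. The transitions (introduce vertex, introduce edge while checking the local good-coloring/matching constraint, forget vertex) are routine and run in time polynomial in the number of states and $n$; with $O(n)$ bags and $O(n\cdot\operatorname{pw}(G))$ edge introductions this yields $N_e$ in $4^{\operatorname{pw}(G)}\operatorname{poly}(n)$ time, and summing over the $O(n\cdot\operatorname{pw}(G))$ anchor edges keeps the total at $4^{\operatorname{pw}(G)}\operatorname{poly}(n)$.

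The hard part will not be the dynamic-programming bookkeeping but the verification that the construction of $F_\lambda$ is genuinely compatible with path decompositions of $G$: one has to check that no edge of $F_\lambda$ ever connects copies of two vertices of $G$ that never share a bag, and that branching on $\lambda(b)$ when $b$ is introduced and discarding that bit when $b$ is forgotten never tampers with a gadget that is still incomplete. Establishing this locality---and, relatedly, pinning the base of the exponent to exactly $4$ by confirming that each vertex of $G$ gives rise to at most two vertices of $F_\lambda$ and that the $B$-coloring can be recovered from the matching rather than carried as separate advice---is the crux of the argument. This is also exactly where bipartiteness and Pfaffianity earn their keep: they are what allow a \emph{single} coloring $\chi_H$, rather than a parity count over exponentially many colorings as in the Cut\&Count approach of~\cite{CyganNPPRWO2022}, to certify Hamiltonicity, and hence what turn the count of Hamiltonian cycles into an honest count of perfect matchings amenable to a width-$2\operatorname{pw}(G)$ dynamic program.
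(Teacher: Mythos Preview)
Your high-level approach matches the paper's: fix an anchor edge $e$, count the anchored Hamiltonian cycles via good colorings using a $4$-state-per-vertex dynamic program over the path decomposition, then de-anchor by summing over $e\in E(G)$ and dividing by $n$. Two concrete points in your sketch need correcting, however.

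First, good colorings extending a given $\lambda$ are \emph{not} in bijection with perfect matchings of $F_\lambda$. By Lemmas~\ref{lem:matching-to-extension}--\ref{lem:extension-to-matching} and the remark that follows them, each anchored Hamiltonian cycle $H$ with $\chi_H|_L=\lambda$ corresponds to exactly $2^{n/2-1}$ perfect matchings of $F_\lambda$, obtained by transposing ports at the vertices of $L\setminus\{s\}$. Hence $\sum_\lambda\operatorname{pm}(F_\lambda)=2^{n/2-1}N_e$, not $N_e$; you must divide out this factor. The paper sidesteps the issue by never introducing ports into the dynamic program at all: for $\ell\in L$ it tracks only \emph{how many} of the two ports are matched, not which.

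Second, your claim that ``$\lambda(b)$ can be recovered from the matching status of that vertex's copies'' is false for $b\in L$: the two port-copies $[b,0]$ and $[b,1]$ have identical neighborhoods in $F_\lambda$ (for $b\neq s$), so their matched/unmatched bits carry no information about $\lambda(b)$. The reason four states per vertex suffice is different. For $r\in R$ the four states are the four subsets of $\{[r,0],[r,1]\}$ already matched. For $\ell\in L$, port symmetry means one need only record the number $k\in\{0,1,2\}$ of matched ports; the value $\lambda(\ell)$ must be carried explicitly, but only while $k=1$ (when $k=0$ no edge has yet constrained it, and when $k=2$ no future edge will), giving the four states $(0,-),(1,0),(1,1),(2,-)$. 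These are precisely the paper's $\sigma_{00},\sigma_{01},\sigma_{10},\sigma_{11}$, with Table~\ref{table:edge-states} implementing the introduce-edge transition.
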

The best known result for general graphs is the 
$6^{\operatorname{pw}(G)}\operatorname{poly}(n)$ time algorithm by 
Bodlaender, Cygan, Kratsch, and Nederlof~\cite{Bodlaender2015}. It is also known that 
improving this general graph result 
to $(6-\epsilon)^{\operatorname{pw}(G)}\operatorname{poly}(n)$ 
for any $\epsilon>0$ is impossible unless the 
Strong Exponential Time Hypothesis is false, see 
Curticapean, Lindzey, and Nederlof~\cite{Curticapean2018}. There is also a 
general graph $(2^\omega+2)^{\operatorname{tw}(G)}\operatorname{poly}(n)$ 
time algorithm parameterized in the tree width by 
W\l odarczyk~\cite{Wlodarczyk2019}.

\begin{restatable}[Bipartite Pfaffian counting Hamiltonian cycles parameterized by branch width]{theorem}{thmchcbw}
\label{thm:chc-bw}
Given a bipartite Pfaffian graph $G$ on $n$ vertices along with 
a branch decomposition of width $\operatorname{bw}(G)$, we can count 
the Hamiltonian cycles in $G$ in 
$2^{\omega\operatorname{bw}(G)}\operatorname{poly}(n)$ time, 
where $\omega<2.373$.
\end{restatable}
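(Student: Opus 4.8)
The plan is to reduce, in polynomial time, the counting of all Hamiltonian cycles of $G$ to counting perfect matchings in an auxiliary bipartite graph whose branch width is within a constant factor of $\operatorname{bw}(G)$, and then to evaluate that perfect-matching count by dynamic programming over a branch decomposition, using fast matrix multiplication at the internal nodes; this mirrors the proof of Theorem~\ref{thm:chc-pw}, with ``branch decomposition $+$ fast matrix multiplication'' in place of ``path decomposition''. First I would fix an arbitrary anchor edge $e\in E(G)$ and reduce to counting Hamiltonian cycles through $e$: deleting an edge from a bipartite Pfaffian graph leaves it bipartite and Pfaffian (restrict the Pfaffian orientation) and does not increase branch width (delete the corresponding leaf of the branch decomposition and suppress the resulting degree-two node), so
\[
\#\{\text{Ham.\ cycles of }G\}=\#\{\text{Ham.\ cycles of }G\text{ through }e\}+\#\{\text{Ham.\ cycles of }G-e\}
\]
lets me recurse over the $|E(G)|$ edges with only polynomial overhead, each subinstance inheriting a Pfaffian orientation and a branch decomposition of width at most $\operatorname{bw}(G)$.

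Next I would invoke the witnessing framework of Section~\ref{sect:hc}: the Hamiltonian cycles through $e$ are in bijection with the good colorings $\chi_H\colon V(G)\to\{0,1\}$, and for any coloring $\lambda$ of one part $A$ of the bipartition, the good colorings extending $\lambda$ are in bijection with the perfect matchings of the auxiliary bipartite graph $F_\lambda$ (Figure~\ref{fig:f-lambda}). Summing over $\lambda$ therefore gives $\#\{\text{Ham.\ cycles through }e\}=\sum_{\lambda\colon A\to\{0,1\}}\operatorname{pm}(F_\lambda)$. Since $F_\lambda$ differs from a single fixed bipartite graph only by an independent constant-size local modification at each $a\in A$ that selects between $\lambda(a)=0$ and $\lambda(a)=1$, this whole sum equals $\operatorname{pm}(\widehat F)$ for one bipartite graph $\widehat F$ obtained by attaching at each $a\in A$ a constant-size selector gadget with exactly two internal perfect matchings. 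Lifting the given branch decomposition of $G$ to $\widehat F$ (each vertex- and edge-gadget contributes a bounded subtree grafted where the corresponding vertex or edge of $G$ sat) yields $\operatorname{bw}(\widehat F)\le 2\operatorname{bw}(G)+O(1)$, exactly as in the width-blowup bookkeeping behind Theorem~\ref{thm:chc-pw}.

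Finally I would count $\operatorname{pm}(\widehat F)$ by the standard perfect-matching dynamic program over the branch decomposition of $\widehat F$: the table at a middle set $X$ records, for each $S\subseteq X$, the number of edge subsets on the processed side that match exactly $S$ ``downwards''. At an internal node the combine contracts the two child tables over the vertices common to both children (each such vertex must be matched on exactly one side, a ranked subset convolution over the shared set), while the coordinates that persist to the parent pass through; arranged correctly this is a rectangular matrix product of tables of dimension $2^{O(|X|)}$, which I would evaluate with fast matrix multiplication, the source of the exponent $\omega$. The bookkeeping then gives running time $2^{\omega\operatorname{bw}(\widehat F)/2}\operatorname{poly}(n)=2^{\omega\operatorname{bw}(G)}\operatorname{poly}(n)$.

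The hard part will be the last step. One must lay out the children-and-parent middle sets and the matching-state tables so that the combine is genuinely a matrix multiplication whose three dimensions are balanced enough for fast matrix multiplication to deliver exactly the exponent $\omega\operatorname{bw}(\widehat F)/2$, rather than a weaker bound with a larger constant in the exponent (the naive contraction alone only gives something like $2^{\Theta(\operatorname{bw}(\widehat F))}$ with an unfavourable constant); one must also verify that the selector- and edge-gadget construction inflates branch width by a factor of exactly two. This is the branch-decomposition analogue of the fast-matrix-multiplication speedups of Cygan, Kratsch, and Nederlof~\cite{CKN18} and Bodlaender, Cygan, Kratsch, and Nederlof~\cite{Bodlaender2015}, and it is where essentially all of the technical care resides.
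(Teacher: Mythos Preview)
Your high-level strategy---reduce counting anchored Hamiltonian cycles to a perfect-matching-type count via the witnessing framework of Section~\ref{sect:hc}, then speed up the branch-decomposition dynamic program with fast matrix multiplication---is the same as the paper's. But the paper's realization is more direct than yours, and your proposal contains one concrete error and one hand-waved step that hides real work.

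The paper does \emph{not} build an explicit auxiliary graph $\widehat{F}$ and count its perfect matchings with two states per vertex. Instead it runs the dynamic program directly over the given branch decomposition of $G$, with four states $\sigma_{00},\sigma_{01},\sigma_{10},\sigma_{11}$ per vertex of $G$; these simultaneously encode the color (the $\lambda$-value on $L$, the parity on $R$) and how many incident edges have been selected so far. The edge transitions (Table~\ref{table:edge-states}) make the DP implicitly sum over all $\lambda$ while enforcing color consistency at each vertex. There is no gadgetry and no branch-width doubling: the matrices at an internal node are $4^{|L_v|}\times 4^{|F_v|}$ by $4^{|F_v|}\times 4^{|R_v|}$ (one product per state of the intersection set $I_v$), and Dorn's analysis~\cite{Dorn2006}---not~\cite{CKN18} or~\cite{Bodlaender2015}---gives the worst case at $|L_v|=|R_v|=|F_v|=\operatorname{bw}(G)/2$, hence $4^{\omega\operatorname{bw}(G)/2}=2^{\omega\operatorname{bw}(G)}$. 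The anchor is removed simply by summing over all $e$ and dividing by $n$, rather than by your edge-deletion recursion.

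Your claimed bijection between good colorings extending $\lambda$ and perfect matchings of $F_\lambda$ is false: at the end of Section~\ref{sect:finding-a-good-coloring} the paper notes that each anchored Hamiltonian cycle $H$ with $\chi_H|_L=\lambda$ corresponds to exactly $2^{n/2-1}$ perfect matchings of $F_\lambda$ (one per port assignment on $L\setminus\{s\}$), so $\sum_\lambda\operatorname{pm}(F_\lambda)=2^{n/2-1}\cdot\#\{\text{anchored HCs}\}$. This is a fixable global factor, but it should be stated. More seriously, your ``constant-size selector gadget with exactly two internal perfect matchings'' glosses over the fact that changing $\lambda(\ell)$ changes which \emph{edges} of $F_\lambda$ are present at $\ell$ (via~\eqref{eq:f-parity}), not merely some internal state; a correct $\widehat{F}$ needs gadgets that switch between two external edge sets, and designing those while keeping the branch-width blowup to exactly a factor of two is where the real work in your route would lie. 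The paper sidesteps all of this by folding the color bit into the DP state.
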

For this problem we get a much faster algorithm in bipartite planar graphs.

\begin{restatable}[Bipartite planar counting of Hamiltonian cycles]{corollary}{corchcplanar}
\label{cor:chc-planar}
Given a bipartite planar graph $G$ on $n$ vertices, we can compute the 
number of Hamiltonian cycles in $G$ in $O(2^{5.049\sqrt{n}})$ time.
\end{restatable}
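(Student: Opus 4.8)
The plan is to obtain Corollary~\ref{cor:chc-planar} directly from Theorem~\ref{thm:chc-bw} by feeding it a good branch decomposition. Two classical ingredients are needed. First, every planar graph is Pfaffian (so a bipartite planar graph $G$ is in particular a bipartite Pfaffian graph, and a Kasteleyn--Pfaffian orientation of $G$ can be produced in linear time). Second, every $n$-vertex planar graph $G$ satisfies the Fomin--Thilikos bound $\operatorname{bw}(G)\le\sqrt{4.5\,n}<2.122\sqrt{n}$, and an optimal branch decomposition can be computed in $\operatorname{poly}(n)$ time, e.g.\ via the planar ratcatcher algorithm of Seymour and Thomas, or the faster algorithm of Gu and Tamaki.

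So the recipe is: compute a bipartition of $G$, a Pfaffian orientation of $G$, and an optimal branch decomposition of $G$---all in $\operatorname{poly}(n)$ time---and then invoke the algorithm of Theorem~\ref{thm:chc-bw} on this input. Its running time is $2^{\omega\operatorname{bw}(G)}\operatorname{poly}(n)\le 2^{\omega\sqrt{4.5\,n}}\operatorname{poly}(n)$. Since $\omega<2.373$ we have $\omega\sqrt{4.5}<5.04$, and since $\operatorname{poly}(n)=2^{o(\sqrt n)}$ the polynomial overhead is absorbed into the exponent, yielding the claimed $O(2^{5.049\sqrt n})$ bound. (The same argument applied to Theorem~\ref{thm:TSP-bw}, whose running time is $8^{\operatorname{bw}(G)}\operatorname{poly}(n)=2^{3\operatorname{bw}(G)}\operatorname{poly}(n)$, gives $2^{3\sqrt{4.5\,n}}\operatorname{poly}(n)=O(2^{6.366\sqrt n})$, which is Corollary~\ref{cor:TSP-planar}.)

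There is no substantive obstacle here: the corollary is merely a repackaging of off-the-shelf planarity facts together with Theorem~\ref{thm:chc-bw}. The only points that need a moment's attention are that all the preprocessing genuinely runs in $\operatorname{poly}(n)$ time---so that it disappears into the $\operatorname{poly}(n)$ factor rather than into the $\sqrt n$-exponent---the arithmetic $\omega\cdot\sqrt{4.5}<5.04<5.049$, and the degenerate cases ($G$ disconnected or of bounded size), which can be disposed of directly since the Hamiltonian cycle count is then $0$ or computable by brute force, and in any case are already handled by the general algorithm.
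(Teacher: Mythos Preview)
Your argument is correct and follows essentially the same route as the paper: invoke Theorem~\ref{thm:chc-bw} on a bipartite planar (hence Pfaffian) graph, plug in the Fomin--Thilikos bound $\operatorname{bw}(G)<2.122\sqrt{n}$, and compute an optimal branch decomposition in polynomial time via Gu--Tamaki. Your arithmetic $\omega\sqrt{4.5}<5.04$ and the remark that the polynomial factors are absorbed into the exponent are exactly the missing details the paper leaves implicit.
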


The best known bound for general planar graphs as far as we can tell 
is to implicitly build on a result of 
Bodlaender, Cygan, Kratsch, and Nederlof~\cite{Bodlaender2015} to get 
a $O(2^{6.508\sqrt{n}})$ time algorithm.

\subsection{Pfaffian graphs}
\label{sect:pfaffian}

Let us now define and motivate Pfaffian graphs in more detail. 
An {\em orientation} of a graph
$G$ replaces every edge $\{u,v\}\in E(G)$ with 
either the directed arc $(u,v)$ or the directed arc $(v,u)$, thereby 
obtaining a directed graph~$\oa G$. A cycle $C$ in $G$
is {\em central} if the graph $G\setminus V(C)$ admits a perfect matching. 
We say that an orientation of a cycle is {\em consistent} if it is 
strongly connected. 
An orientation $\oa G$ of $G$ is {\em Pfaffian} if for every central cycle $C$
in $G$ it holds that both consistent orientations of $C$ have an odd number
of arcs in common with $\oa G$. A graph is {\em Pfaffian} if it admits
a Pfaffian orientation. 

The bipartite Pfaffian graphs are most famous as the graph class in 
which P\'{o}lya's permanent problem has a solution, the bipartite graphs 
in which one can compute the number of perfect matchings efficiently by
reduction to a matrix determinant, see 
e.g.~Robertson, Seymour, and Thomas~\cite{Robertson1999} and 
McCuaig~\cite{McCuaig2004}. 
This brings us to one of our motivations to study the complexity of the detection (and counting) of Hamiltonian cycles on this graph class: Previous algorithms for Hamiltonian cycles (such as the one by Bj\"orklund~\cite{Bjorklund2014}) use determinant-based methods previously designed for counting matchings (modulo $2$) in polynomial time; given this close connection between the two problems it is natural to ask whether Pfaffianity can be exploited for detecting and counting Hamiltonian cycles, similarly as for counting perfect matchings.

The bipartite Pfaffian graphs were characterized by Little~\cite{Little1975} 
as those graphs $G$ that do not have a vertex set $U\subseteq V(G)$ 
such that $G\setminus U$ has a perfect matching and the induced subgraph
$G[U]$ admits an even subdivision of $K_{3,3}$ as a subgraph. 
McCuaig~\cite{McCuaig2004} and Robertson, Seymour, and 
Thomas~\cite{Robertson1999} gave a structural characterization of bipartite 
Pfaffian graphs and the latter also outlined an $O(n^3)$ time algorithm 
for their recognition; this algorithm can also produce a Pfaffian orientation 
when one exists.

A general Pfaffian graph, as opposed to a bipartite one, can be very dense 
as observed by Norine~\cite{Norine2005}: there is an infinite family 
of $n$-vertex Pfaffian graphs with $\Omega(n^2)$ edges. This construction in 
particular poses obstacles to find characterizations as the ones mentioned 
above for bipartite graphs. Indeed, it is not known how to efficiently 
recognize a general Pfaffian graph.

The most famous Pfaffian graphs are the {\em planar} ones, graphs 
whose vertices can be embedded in the plane with straight lines connecting 
the vertices of every edge without any two lines crossing each other except 
at endpoints. That these graphs are Pfaffian was discovered 
by Kasteleyn~\cite{Kasteleyn1967}; furthermore, there is a linear-time 
algorithm that finds a Pfaffian orientation given a planar graph 
by Little~\cite{Little1974}.

\subsection{Conventions and organization}

\label{sect:conventions}

We assume knowledge of standard graph-theoretic terminology; 
see e.g.~West~\cite{West1996}. 
Graphs in this paper are undirected unless otherwise mentioned; this 
in particular also applies to subgraphs such as paths, cycles, and Hamiltonian
cycles. No graph or directed graph in this paper has loops or multiple edges.
For a graph or directed graph~$G$, we write $V(G)$ for the vertex 
set of $G$ and $E(G)$ for the edge set of $G$. We identify the {\em edges} of a 
graph with two-subsets $\{u,w\}$ where $u$ and $w$ are distinct vertices. 
We call the edges of a directed graph {\em arcs} in what follows, and 
identify each arc with a two-tuple $(u,w)$ where $u$ and $w$ are distinct 
vertices. We recall our conventions with orientations and Pfaffian graphs
from Section~\ref{sect:pfaffian}.

We work with Iverson's bracket notation---for a logical proposition~$P$, 
we define
\[
\iv{P}=\begin{cases}
1 & \text{if $P$ is true}\,;\\
0 & \text{if $P$ is false}\,.
\end{cases}
\]

The rest of this paper is organized as follows.
Section~\ref{sect:hc} presents our novel witnessing technique for 
Hamiltonian cycles in bipartite Pfaffian graphs. 
We prove our main theorems and their indirect structural corollaries in Section~\ref{sect:ahc}.
Our further results on finding and counting Hamiltonian cycles as well
as on TSP are proved in Section~\ref{sect:fhc}.
Finally, in Appendix~\ref{sect: NP-hard} we prove the NP-hardness of \textsc{Another Hamiltonian Cycle} in bipartite Pfaffian graphs without constrained vertex degrees.

\section{Hamiltonian cycles in bipartite Pfaffian graphs}
\label{sect:hc}

This section presents what we believe to be a novel technique to efficiently
witness Hamiltonian cycles in bipartite Pfaffian graphs via
(not necessarily proper) two-colorings of the vertices. 
We also show how to efficiently construct a Pfaffian orientation 
from a known Hamiltonian cycle in a bipartite Pfaffian graph, 
as well as show how to efficiently find a witness by extending 
a given partial witness defined on only one of the parts of a bipartition.

Throughout this section $G$ is an $n$-vertex bipartite Pfaffian graph and 
$\oa G$ is a fixed but otherwise arbitrary Pfaffian orientation of $G$.
Since we are interested in whether $G$ is Hamiltonian, without loss of 
generality we may assume that $n$ is even and $n\geq 4$ in what follows.

Select an arbitrary edge $e\in E(G)$ and call it the {\em anchor} edge. 

\subsection{Preliminaries: The structure of Pfaffian orientations}
\label{sect:structure-pfaffian}

We start by recalling the known structure of Pfaffian orientations of $G$.
Namely, de Carvalho, Lucchesi, and Murty~\cite{deCarvalho2005} observed that 
all Pfaffian orientations of $G$ are obtainable from each other 
by reversals of arcs across vertex cuts. More precisely, for 
any Pfaffian orientation $\oa G$ and any vertex $u\in V(G)$, 
it holds that reversing the arcs incident to $u$ in $\oa G$ results in another 
Pfaffian orientation; furthermore, every Pfaffian orientation of $G$ can be
obtained by starting with an arbitrary Pfaffian orientation of $G$ and 
repeating such operations for different vertices~\cite{deCarvalho2005}.

\subsection{The two-coloring defined by an anchored Hamiltonian cycle}
\label{sect:chiH}

We are interested in characterising 
each Hamiltonian cycle $H$ in $G$ that traverses the selected anchor 
edge $e$---we say that such an $H$ is {\em anchored}---using a function 
$\chi_H:V(G)\rightarrow \{0,1\}$ that is unique%
\footnote{Unique but not canonical; as we will see, the function $\chi_H$ 
will depend not only on the Hamiltonian cycle $H$ but also on the choice of 
our assumed fixed but arbitrary Pfaffian orientation $\oa G$ of $G$.}{}
to $H$ and from which we will (in the next subsection) see $H$ can be efficiently constructed.

Towards this end, let us study the Pfaffian orientation $\oa G$ at 
the anchor $e$. Let $(s,t)\in E(\oa G)$ be the arc in $\oa G$ whose 
underlying edge in $G$ is the anchor edge $e=\{s,t\}$. Construct from 
the Pfaffian orientation $\oa G$ a new orientation $\oa G_e$ of $G$ that 
is otherwise identical to $\oa G$ except that the arc $(s,t)$ has been 
replaced with the arc $(t,s)$. That is, by definition we have
$(t,s)\in E(\oa G_e)$. 

Now consider an arbitrary anchored Hamiltonian cycle $H$ in $G$.
Since $e\in E(H)$, there is a unique consistent 
orientation $\oa H$ of $H$ such that $(t,s)\in E(\oa H)$.
Let us write $v_0,v_1,\ldots,v_{n-1}$ for the vertices of~$G$ indexed 
in the directed $\oa H$-path order from $s$ to $t$; that is, 
\begin{equation}
\label{eq:v-seq}
v_0=s\,,\quad 
v_{n-1}=t\,,\quad\text{and}\quad 
(v_i,v_{(i+1)\bmod n})\in E(\oa H)\quad\text{for all $i=0,1,\ldots,n-1$}\,.
\end{equation}
Associate with $H$ the (not necessarily proper) vertex-coloring function 
$\chi_H:V(G)\rightarrow\{0,1\}$ defined by setting
\begin{equation}
\label{eq:chi-Hst}
\begin{split}
\chi_H(v_0)&=0\quad\text{and}\\
\chi_H(v_{i+1})&\equiv \chi_H(v_i)+\iv{(v_i,v_{i+1})\in E(\oa G_e)}\pmod 2
\quad\text{for all $i=0,1,\ldots,n-2$}\,.
\end{split}
\end{equation}
Because $\oa G$ is a Pfaffian orientation 
and $H$ is a central cycle of $G$, we have 
\begin{equation}
\label{eq:H-odd}
|E(\oa H)\cap E(\oa G)|
=
\sum_{i=0}^{n-1}\iv{(v_i,v_{(i+1)\bmod n})\in E(\oa G)}\equiv 1\pmod 2\,.
\end{equation}
Since $\oa G$ and $\oa G_e$ differ only in the orientation of $e\in E(H)$, 
from \eqref{eq:H-odd} we immediately have
\begin{equation}
\label{eq:H-even}
|E(\oa H)\cap E(\oa G_e)|
=
\sum_{i=0}^{n-1}\iv{(v_i,v_{(i+1)\bmod n})\in E(\oa G_e)}\equiv 0\pmod 2\,.
\end{equation}
We thus conclude
\begin{equation}
\label{eq:chi-Hts}
\begin{split}
\chi_H(v_0)
\stackrel{\eqref{eq:chi-Hst}}{=}
0
\stackrel{\eqref{eq:H-even}}{\equiv}
|E(\oa H)\cap E(\oa G_e)|
&=
\sum_{i=0}^{n-2}\iv{(v_i,v_{i+1})\in E(\oa G_e)}
+\iv{(v_{n-1},v_0)\in E(\oa G_e)}\\
&\stackrel{\eqref{eq:chi-Hst}}{\equiv}
\chi_H(v_{n-1})
+\iv{(v_{n-1},v_{0})\in E(\oa G_e)}\pmod 2\,.
\end{split}
\end{equation}
Since by definition of $\oa G_e$ we have
$(t,s)\in E(\oa G_e)$, from \eqref{eq:v-seq} and~\eqref{eq:chi-Hts} 
we conclude that $\chi_H(t)=1$.
Furthermore, from \eqref{eq:chi-Hst} and \eqref{eq:chi-Hts} we have 
for all $(u,w)\in E(\oa H)$ that
\begin{equation}
\label{eq:chi-H}
\chi_H(w)\equiv \chi_H(u)+\iv{(u,w)\in E(\oa G_e)}\pmod 2\,.
\end{equation}
That is, each arc $(u,w)\in E(\oa H)$ is $\chi_H$-monochromatic (i.e. both endpoints are assigned the same value by 
$\chi_H$) if and only if $(w,u)\in E(\oa G_e)$. 

\subsection{The orientation induced by a good coloring}
\label{sect:induced-orientation}

Suppose now that we do not know anything about the (anchored) 
Hamiltonian cycles of $G$, if any, and have access only to the Pfaffian
orientation~$\oa G$ and the orientation $\oa G_e$; the latter is
easily obtainable from $\oa G$, cf.~Section~\ref{sect:chiH}.

Consider an arbitrary vertex coloring 
$\chi:V(G)\rightarrow\{0,1\}$ with $\chi(s)=0$ and $\chi(t)=1$.
The last observation in Section~\ref{sect:chiH} 
suggests that we should explore reversing exactly 
the $\chi$-monochromatic arcs in $\oa G_e$. Let us make this formal as follows.
Let the orientation $\oa G_e^\chi$ {\em induced} by the 
coloring $\chi$ be the unique orientation of $G$ that for each
edge $\{u,w\}\in E(G)$ satisfies
\begin{equation}
\label{eq:oa-induced}
(u,w)\in E(\oa G_e^\chi)
\quad\text{if and only if}\quad
\chi(w)\equiv\chi(u)+\iv{(u,w)\in E(\oa G_e)}\pmod 2\,.
\end{equation}

To witness the serendipity of \eqref{eq:oa-induced}, suppose that $G$ admits 
an anchored Hamiltonian cycle $H$; it follows immediately 
from \eqref{eq:chi-H} and~\eqref{eq:oa-induced} 
that $E(\oa H)\subseteq E(\oa G_e^{\chi_H})$. Thus, 
if we know only the coloring $\chi_H$ but not $H$, we can search for 
$E(\oa H)$ in $E(\oa G_e^{\chi_H})$; let us next analyse this situation
in more detail from the standpoint of our arbitrary $\chi$. 

Call the coloring $\chi$ {\em good} if there exists an
anchored Hamiltonian cycle $H$ in $G$ with $\chi=\chi_H$; otherwise 
call $\chi$ {\em bad}. The following lemma shows that 
the orientation $\oa G_e^\chi$ for a good $\chi$ enables {\em linear-time} 
and unique algorithmic recovery of $H$ by standard longest-path search
in a directed acyclic graph (DAG); in fact, mere topological sorting suffices,
as is apparent from the proof. See also Figure~\ref{fig:good-colorings} for
an illustration of the concepts involved in a Hamiltonian planar graph.

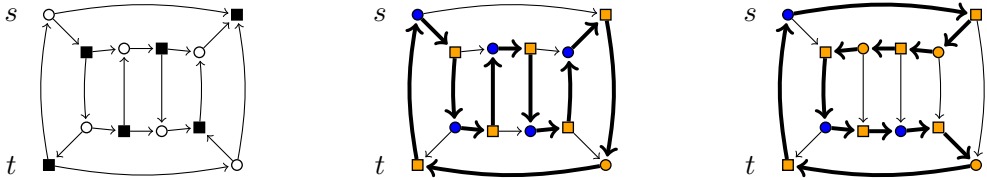
\begin{figure}[!ht]
\[
\begin{tikzpicture}[scale=0.5,shorten >=1pt, auto, node distance=1cm, ultra thick]
\node[rectangle](R) at (-4,2) {$s$};\node[rectangle](R) at (-4,-2) {$t$};\pfaffian
\end{tikzpicture}
\qquad \qquad
\begin{tikzpicture}[scale=0.5,shorten >=1pt, auto, node distance=1cm, ultra thick]
\node[rectangle](R) at (-4,2) {$s$};\node[rectangle](R) at (-4,-2) {$t$};\coloredone
\end{tikzpicture}
\qquad \qquad
\begin{tikzpicture}[scale=0.5,shorten >=1pt, auto, node distance=1cm, ultra thick]
\node[rectangle](R) at (-4,2) {$s$};\node[rectangle](R) at (-4,-2) {$t$};\coloredtwo
\end{tikzpicture}
\]
 \caption{Illustration of orientations induced by good colorings. Left: an undirected bipartite planar graph $G$ drawn in one of its orientations $\oa G_e$ with $e=\{s,t\}$, one arc reversal away from a Pfaffian orientation $\oa G$. Middle and right: two vertex colorings $\chi_H$ and coloring-induced orientations $\oa G_e^{\chi_H}$ for two different Hamiltonian cycles $H$, with the arcs of $\oa H$ drawn in bold in each case. Observe that every monochromatic arc reverses its orientation with respect to $\oa G_e$, whereas bichromatic arcs keep their orientation. Observe also that the removal of the arc $(t,s)$ from $\oa G_e^{\chi_H}$ leaves an acyclic Hamiltonian directed graph, whereby the directed Hamiltonian path and 
hence $H$ can be found, for example, by topological sorting; cf.~Lemma~\ref{lem:acyclic-hamiltonicity}.} 
  \label{fig:good-colorings}
\end{figure}

\begin{lemma}[Acyclic Hamiltonicity of good-coloring-induced orientations]
\label{lem:acyclic-hamiltonicity}
Let $\chi$ be good. Then, the orientation 
$\oa G_e^\chi\setminus (t,s)$ 
of $G\setminus e$ is acyclic with the unique source vertex $s$ and 
the unique sink vertex~$t$. Moreover, the longest directed path 
in $\oa G_e^\chi\setminus (t,s)$ is unique and a directed Hamiltonian path.
\end{lemma}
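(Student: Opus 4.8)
The plan is to fix an anchored Hamiltonian cycle $H$ with $\chi=\chi_H$ and to exhibit an explicit topological order of $\oa G_e^\chi\setminus(t,s)$; acyclicity, the source/sink statements, and the uniqueness of the longest path will all fall out of this. Concretely, let $v_0=s,v_1,\dots,v_{n-1}=t$ be the vertices listed in the order of the directed path obtained from $\oa H$ by deleting the arc $(t,s)$, as in \eqref{eq:v-seq}. I will show that the index order $v_0<v_1<\dots<v_{n-1}$ is a topological order of $\oa G_e^\chi\setminus(t,s)$, i.e.\ every arc of $\oa G_e^\chi\setminus(t,s)$ goes from a smaller to a larger index. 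Recall that $E(\oa H)\subseteq E(\oa G_e^\chi)$ (observed right after \eqref{eq:oa-induced}), so the ``unit steps'' $(v_0,v_1),(v_1,v_2),\dots,(v_{n-2},v_{n-1})$ all lie in $\oa G_e^\chi\setminus(t,s)$ already.

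The easy part is the $H$-edges: for each $k\le n-2$ the arc $(v_k,v_{k+1})$ is in $\oa G_e^\chi$, and since $\oa G_e^\chi$ is an orientation the reverse arc $(v_{k+1},v_k)$ is not; the single ``backward'' $H$-arc $(v_{n-1},v_0)=(t,s)$ is exactly the one we delete. The substantive case is a chord $f=\{v_i,v_j\}\in E(G)\setminus E(H)$ with $i<j$ (necessarily $j\ge i+2$, since $\{v_i,v_{i+1}\}$ is always an $H$-edge): I must rule out $(v_j,v_i)\in E(\oa G_e^\chi)$. This is where the Pfaffian hypothesis enters and is, I expect, the main obstacle; everything else is bookkeeping with \eqref{eq:chi-H} and \eqref{eq:oa-induced}.

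To handle the chord case I would form the cycle $C$ consisting of the subpath $v_i\to v_{i+1}\to\cdots\to v_j$ of $\oa H$ together with the arc $(v_j,v_i)$; this is a strongly connected, hence consistent, orientation $\oa C$ of an undirected cycle of $G$ (of length $\ge 3$). The crux is that $C$ is \emph{central}: the vertices outside $C$ are $v_{j+1},\dots,v_{n-1},v_0,\dots,v_{i-1}$, which form a single subpath of $H$ (the $H$-edge $e=\{v_{n-1},v_0\}$ links the two pieces), and this subpath has $(n-1-j)+i$ vertices; since $G$ is bipartite and $\{v_i,v_j\}\in E(G)$, the indices $i,j$ have opposite parity, so $j-i$ is odd, and as $n$ is even this count is even, whence $G\setminus V(C)$ has a perfect matching. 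Also $e\notin E(C)$ (using $n\ge 4$ and $f\notin E(H)$), so $\oa G$ and $\oa G_e$ agree on $E(C)$, and the Pfaffian property gives $\sum_{k=i}^{j-1}\iv{(v_k,v_{k+1})\in E(\oa G_e)}+\iv{(v_j,v_i)\in E(\oa G_e)}\equiv 1\pmod 2$. Telescoping \eqref{eq:chi-H} along the $H$-path rewrites the sum as $\chi(v_j)-\chi(v_i)$, so $\iv{(v_j,v_i)\in E(\oa G_e)}\equiv 1+\chi(v_i)+\chi(v_j)\pmod 2$. On the other hand, by the definition \eqref{eq:oa-induced} of $\oa G_e^\chi$, the arc $(v_j,v_i)$ lies in $\oa G_e^\chi$ iff $\iv{(v_j,v_i)\in E(\oa G_e)}\equiv\chi(v_i)+\chi(v_j)\pmod 2$ — a parity one off from what the Pfaffian identity forces — so $(v_j,v_i)\notin E(\oa G_e^\chi)$, as required.

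Granting forwardness, the rest is short. Acyclicity of $\oa G_e^\chi\setminus(t,s)$ is immediate since $v_0<\cdots<v_{n-1}$ is a topological order. The vertex $s=v_0$ has no incoming arc (every arc at $v_0$ points forward, and $(t,s)$ is deleted), while every $v_k$ with $k\ge 1$ has the incoming arc $(v_{k-1},v_k)\in E(\oa H)\setminus\{(t,s)\}$, so $s$ is the unique source; a symmetric argument shows $t=v_{n-1}$ is the unique sink. Finally, any directed path has at most $n$ vertices, the directed path $v_0\to v_1\to\cdots\to v_{n-1}$ has exactly $n$ and is Hamiltonian, hence it is a longest directed path; and any directed path on $n$ vertices must visit them in strictly increasing index order (all arcs point forward), which forces it to be exactly $v_0\to v_1\to\cdots\to v_{n-1}$. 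Thus the longest directed path is unique and is a directed Hamiltonian path, completing the proof plan.
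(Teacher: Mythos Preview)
Your proof is correct and follows essentially the same approach as the paper: both arguments form, from a putative ``backward'' arc $(v_j,v_i)$, the cycle $C$ consisting of the $H$-segment $v_i\to\cdots\to v_j$ together with $\{v_i,v_j\}$, show $C$ is central, and obtain a parity contradiction between the Pfaffian condition on $\oa G_e$ and the defining congruence \eqref{eq:oa-induced}. The only difference is packaging: you prove directly that the $H$-order is a topological order of $\oa G_e^\chi\setminus(t,s)$ (ruling out every backward arc), whereas the paper argues acyclicity by contradiction and then deduces the source/sink and uniqueness claims from it; your formulation is slightly more direct but not materially different.
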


\begin{proof}
Since $\chi$ is good, there exists an anchored Hamiltonian cycle $H$
with $\chi=\chi_H$. Furthermore, we can follow the notational conventions
in Section~\ref{sect:chiH} with respect to this $H$, including the 
vertex-indexing $v_0,v_1,\ldots,v_{n-1}$ for $G$ and \eqref{eq:v-seq} 
in particular. From $E(\oa H)\subseteq E(\oa G_e^\chi)$ we thus
conclude that the sequence $v_0,v_1,\ldots,v_{n-1}$ defines a longest
directed path (which is also a directed Hamiltonian path from $s$ to $t$) 
in the directed graph 
$\oa G_e^\chi\setminus (t,s)$. It follows immediately that $s$ is the 
only possible source vertex and $t$ is the only possible sink vertex
in $\oa G_e^\chi\setminus (t,s)$. 

Let us next show that $\oa G_e^\chi\setminus (t,s)$ is acyclic as a directed
graph. To reach a contradiction, suppose that $\oa D$ is a directed cycle in 
$\oa G_e^\chi\setminus (t,s)$. Since $(t,s)=(v_{n-1},v_0)\notin E(\oa D)$
and $\oa D$ is a directed cycle with 
$V(\oa D)\subseteq\{v_0,v_1,\ldots,v_{n-1}\}$, there must 
exist $0\leq i<j\leq n-1$ with at least two proper inequalities among the
three such that $(v_j,v_i)\in E(\oa D)$ and thus $(v_j,v_i)\in E(\oa G_e^\chi)$.
Let $\oa C$ be the directed cycle 
with $V(\oa C)=\{v_i,v_{i+1},\ldots,v_j\}$ and
$E(\oa C)=\{(v_i,v_{i+1}),(v_{i+1},v_{i+2}),\ldots,(v_{j-1},v_j),(v_j,v_i)\}$.
In particular, $\oa C\neq \oa H$ since $(t,s)\notin E(\oa C)$. 
Let $C$ be the underlying undirected cycle of $\oa C$, and observe that 
$C$ is a cycle of $G$. Since $G$ is bipartite, $C$ is even and has at least
four vertices. Thus, the edges of $H\setminus V(C)$ contain
a perfect matching of $G\setminus V(C)$, implying that $C$ is central.
Since $C$ avoids $e$ and $\oa C$ is a consistent orientation of $C$, 
we conclude by Pfaffianity that 
\[
|E(\oa C)\cap E(\oa G_e)|=|E(\oa C)\cap E(\oa G)|\equiv 1\pmod 2\,.
\]
But this is a contradiction since for all $(u,w)\in E(\oa C)$ we have
$(u,w)\in E(\oa G_e^\chi)$, and thus by \eqref{eq:oa-induced} it holds
that $\chi(w)\equiv\chi(u)+\iv{(u,w)\in E(\oa G_e)}\pmod 2$; take the
sum of these congruences over all arcs $(u,w)\in E(\oa C)$ to conclude that
$|E(\oa C)\cap E(\oa G_e)|\equiv 0\pmod 2$, a contradiction. Thus,
$\oa G_e^\chi\setminus (t,s)$ is acyclic as a directed graph.

From acyclicity it also immediately follows that $s$ is a source vertex 
and $t$ is a sink vertex of $\oa G_e^\chi\setminus (t,s)$; indeed, 
any arc into $s$ or any arc out of $t$ would complete a directed cycle 
together with an appropriate proper segment of the directed Hamiltonian 
path $s=v_0,v_1,\ldots,v_{n-1}=t$. 
This longest path (of $n$ vertices) is also seen to be unique in 
$\oa G_e^\chi\setminus (t,s)$; indeed, the existence of any other such path
would again imply an arc that would complete a directed cycle together
with an appropriate proper segment of $v_0,v_1,\ldots,v_{n-1}$.
\end{proof} 

An immediate corollary of the proof Lemma~\ref{lem:acyclic-hamiltonicity} is that there
are at most $2^{n-2}$ Hamiltonian cycles through any fixed edge in 
an $n$-vertex bipartite Pfaffian graph. For comparison, there are planar 
graphs with at least $2.08^n$ Hamiltonian cycles, see~\cite{Buchin2007}.

\subsection{Constructing a Pfaffian orientation from a Hamiltonian cycle}

Next we address the task of constructing a Pfaffian orientation if we know
one Hamiltonian cycle, with the intent of constructing possible further
Hamiltonian cycles with the help of the Pfaffian orientation obtained. 

\begin{lemma}[Constructing a Pfaffian orientation from a Hamiltonian cycle]
\label{lem:pfaffian-orientation-from-hamiltonian-cycle}
There exists a linear-time algorithm that, given as input a bipartite
Pfaffian graph $G$ and a Hamiltonian cycle $H$ in $G$, outputs a 
Pfaffian orientation $\oa G$ of $G$.
\end{lemma}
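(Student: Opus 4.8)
The plan is to have the algorithm output the most naive orientation compatible with $H$, and then to certify its Pfaffianity by running the analysis of Sections~\ref{sect:chiH} and~\ref{sect:induced-orientation} \emph{backwards}, invoking an (unknown but, by hypothesis, existing) Pfaffian orientation only inside the correctness argument. Concretely: traverse $H$ once to label $V(G)=\{v_0,v_1,\ldots,v_{n-1}\}$ in cyclic order along $H$, having first picked an arbitrary edge $\{v_0,v_{n-1}\}\in E(H)$ as the place to cut $H$ open, and output the orientation $\oa G$ that directs every edge $\{v_i,v_j\}\in E(G)$ from its lower-indexed to its higher-indexed endpoint, i.e.\ $(v_i,v_j)\in E(\oa G)$ iff $i<j$. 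The traversal is $O(n)$ and the edge scan is linear in the size of $G$, so the only content is correctness: this $\oa G$ is a Pfaffian orientation.

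For correctness, fix (for the analysis only; the algorithm does not use it) an arbitrary Pfaffian orientation $\oa G^{\circ}$ of $G$, and take the anchor to be $e=\{s,t\}$ with $s=v_0$, $t=v_{n-1}$. Applying the constructions of Sections~\ref{sect:chiH}--\ref{sect:induced-orientation} to $\oa G^{\circ}$ and $e$ yields $\chi_H$ and the induced orientation $\oa G_e^{\chi_H}$; the vertex order of the algorithm coincides with the indexing in \eqref{eq:v-seq} once we set $e=\{v_0,v_{n-1}\}$, and $\chi_H$ is good because it is the coloring of the genuine anchored Hamiltonian cycle $H$. Hence Lemma~\ref{lem:acyclic-hamiltonicity} gives that $\oa G_e^{\chi_H}\setminus(t,s)$ is acyclic with the directed Hamiltonian path $v_0\to v_1\to\cdots\to v_{n-1}$. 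In an acyclic orientation containing this Hamiltonian path every remaining arc must point from a lower to a higher index (an arc $(v_j,v_i)$ with $j>i$ would close a directed cycle with the segment $v_i\to\cdots\to v_j$), and conversely every edge of $G\setminus e$ is present; therefore $\oa G_e^{\chi_H}\setminus(t,s)$ is exactly the lower-to-higher orientation of $G\setminus e$. Since $\chi_H(s)=0\neq 1=\chi_H(t)$ was established in Section~\ref{sect:chiH}, the anchor edge $e$ is $\chi_H$-bichromatic, so $\oa G_e^{\chi_H}$ keeps the $\oa G_e^{\circ}$-orientation $(t,s)=(v_{n-1},v_0)$ on $e$; reversing that single arc turns $\oa G_e^{\chi_H}$ into the pure lower-to-higher orientation $\oa G$ produced by the algorithm.

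It then remains to relate $\oa G$ back to $\oa G^{\circ}$. By \eqref{eq:oa-induced}, $\oa G_e^{\chi_H}$ is obtained from $\oa G_e^{\circ}$ by reversing exactly the $\chi_H$-monochromatic arcs, and $\oa G_e^{\circ}$ is $\oa G^{\circ}$ with the single arc on $e$ reversed. Because $e$ is $\chi_H$-bichromatic, the operations ``reverse the arc on $e$'' and ``reverse the $\chi_H$-monochromatic arcs'' act on disjoint edge sets, hence commute, and reversing $e$ twice is the identity; so $\oa G$ (which is $\oa G_e^{\chi_H}$ with the arc on $e$ reversed) equals $\oa G^{\circ}$ with all $\chi_H$-monochromatic arcs reversed. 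Finally I would observe that reversing all $\chi_H$-monochromatic arcs of an orientation is the same as first reversing \emph{all} arcs of $G$ and then reversing the arcs of the edge cut separating $\chi_H^{-1}(0)$ from $\chi_H^{-1}(1)$ (an edge is flipped a net odd number of times precisely when both its endpoints receive the same $\chi_H$-value). Reversing all arcs of $G$ preserves Pfaffianity, since the defining condition of Section~\ref{sect:pfaffian} is symmetric under it—it merely swaps the two consistent orientations of each central cycle in the intersection count—and reversing a vertex cut preserves Pfaffianity by the de Carvalho--Lucchesi--Murty operation recalled in Section~\ref{sect:structure-pfaffian} (equivalently, a central cycle meets any cut an even number of times). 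As $\oa G^{\circ}$ is Pfaffian, so is $\oa G$.

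I expect the only real friction to be the bookkeeping in the last paragraph: keeping straight that the anchor edge $e$ is bichromatic (so that the two stray reversals of $e$ cancel and the ``reverse-all-arcs then reverse-a-cut'' factorization of ``reverse the monochromatic arcs'' is valid), and verifying that each of those two elementary operations preserves Pfaffianity. Everything else—the linear running time, and the identification of $\oa G_e^{\chi_H}\setminus(t,s)$ with the lower-to-higher orientation—is immediate from Lemma~\ref{lem:acyclic-hamiltonicity}.
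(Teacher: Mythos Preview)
Your proof is correct and essentially the same as the paper's: both output the lower-to-higher orientation along $H$, identify it (up to the arc on $e$) with $\oa G_e^{\chi_H}$ for an unknown Pfaffian orientation via Lemma~\ref{lem:acyclic-hamiltonicity}, and then certify Pfaffianity by expressing the passage from that unknown orientation to the output as Pfaffianity-preserving reversals. The only cosmetic difference is in that last step: the paper walks the coloring from $\chi_H$ to the proper bipartition coloring $\chi$ one vertex at a time (a sequence of single-vertex reversals amounting to the cut across $\{v:\chi_H(v)\neq\chi(v)\}$), whereas you factor ``reverse the $\chi_H$-monochromatic arcs'' as ``reverse all, then reverse the $\chi_H$-bichromatic cut''---but since in a bipartite graph the $\chi_H$-monochromatic edges are exactly the cut across $\{v:\chi_H(v)\neq\chi(v)\}$, your two-step factorization collapses to that same single cut-reversal.
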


\begin{proof}
Let $G$ and $H$ be given as input, and select an arbitrary edge $e\in E(H)$ 
as an anchor edge in the sense of Section~\ref{sect:chiH}. 
Let $\oa H'$ be an arbitrary consistent orientation of $H$;
let $s,t\in V(G)$ such that $(t,s)\in E(\oa H')$ and $e=\{s,t\}$. 

Since $G$ is Pfaffian, it has a Pfaffian orientation. 
Let $\oa G$ be an arbitrary Pfaffian orientation of $G$.
Without loss of generality we can assume that $(s,t)\in E(\oa G)$; 
indeed, reverse all arcs if $(s,t)\notin E(\oa G)$.
Let $\oa G_e$, $\oa H$, and $\chi_H$ be constructed from $\oa G$, $e$, and $H$
as in Section~\ref{sect:chiH}. In particular, we have $\oa H'=\oa H$.

Now observe that $\oa A=\oa G_e^{\chi_H}\setminus(t,s)$ is acyclic 
and Hamiltonian by Lemma~\ref{lem:acyclic-hamiltonicity}. 
Furthermore, since $\oa A$ is acyclic and has the directed Hamiltonian path
$\oa H'\setminus(t,s)$, we can in linear time construct $\oa A$ from $G$, 
$\oa H'$, and $e$ by orienting $G\setminus e$ so that 
(i) the edges of $H\setminus e$ are oriented as in $\oa H'\setminus(t,s)$, 
and (ii) all the other edges of $G\setminus e$ are oriented into arcs in 
directed $\oa H'\setminus(t,s)$-order; indeed, otherwise a directed 
cycle would result.

Next, let $\chi:V(G)\rightarrow\{0,1\}$ be the proper two-coloring of the
vertices of $G$ with $\chi(s)=0$ and $\chi(t)=1$; such a $\chi$ exists and 
is unique because $G$ is bipartite Hamiltonian. Since no edge of $G$ 
is $\chi$-monochromatic, we have $(\oa G')_e=(\oa G')_e^\chi$ for all Pfaffian
orientations $\oa G'$ of $G$. We will construct a sequence  
$\chi_0,\chi_1,\ldots,\chi_n:V(G)\rightarrow\{0,1\}$ of colorings and a
sequence $\oa G_0,\oa G_1,\ldots,\oa G_n$ of Pfaffian orientations of $G$ 
such that 
\begin{equation}
\label{eq:a-chain}
\oa A\cup(t,s)
=
\oa G_e^{\chi_H}
=
(\oa G_0)_e^{\chi_0}
=
(\oa G_1)_e^{\chi_1}
=
\cdots
=
(\oa G_n)_e^{\chi_n}
=
(\oa G_n)_e^{\chi}
=
(\oa G_n)_e\,.
\end{equation}
This concludes that $\oa A\cup (s,t)$ is a Pfaffian orientation of $G$.
Moreover, $\oa A\cup (s,t)$ is constructible in linear time from 
the given input $G$ and $H$.

It remains to construct the sequences $\chi_0,\chi_1,\ldots,\chi_n$ and
$\oa G_0,\oa G_1,\ldots,\oa G_n$ as well as conclude \eqref{eq:a-chain}.
The first two identities in \eqref{eq:a-chain} are immediate when 
we set $\chi_0=\chi_H$ and $\oa G_0=\oa G$. Let $v_1,v_2,\ldots,v_n$ be
an arbitrary enumeration of the $n$ vertices of $G$. For $k=1,2,\ldots,n$, 
define $\chi_k:V(G)\rightarrow\{0,1\}$ for all $j=1,2,\ldots,n$ by the
rule 
\[
\chi_k(v_j)=\begin{cases}
\chi(v_j)       & \text{if $j\leq k$}\,;\\
\chi_{k-1}(v_j) & \text{if $j>k$}.
\end{cases}
\]
It is immediate that $\chi_n=\chi$, which establishes the last two 
identities in \eqref{eq:a-chain}. Furthermore, $\chi_{k-1}$ and $\chi_k$ 
are identical expect possibly at $v_k$. Also observe that 
$\chi_k(s)=\chi_H(s)=\chi(s)=0$ and $\chi_k(t)=\chi_H(t)=\chi(t)=1$
for all $k=0,1,\ldots,n$. 
To define the sequence $\oa G_k$ for $k=1,2,\ldots,n$, 
split into cases as follows: 
when $\chi_k(v_k)=\chi_{k-1}(v_k)$, set
$\oa G_k=\oa G_{k-1}$;
when $\chi_k(v_k)\neq \chi_{k-1}(v_k)$,
set $\oa G_k$ to be otherwise identical to $\oa G_{k-1}$ except reverse all
arcs incident to $v_k$. In both of these cases we observe 
by \eqref{eq:oa-induced} that 
$(\oa G_{k-1})_e^{\chi_{k-1}}=(\oa G_k)_e^{\chi_k}$, which establishes
all the remaining equalities in \eqref{eq:a-chain}. 
Furthermore, $\oa G_k$ is a Pfaffian orientation
since $\oa G_{k-1}$ is a Pfaffian orientation; 
indeed, recall Section~\ref{sect:structure-pfaffian} and that $\oa G_0=\oa G$ 
is a Pfaffian orientation by assumption.
\end{proof}

\subsection{Finding a good coloring}
\label{sect:finding-a-good-coloring}

Let us now study the task of finding a good coloring $\chi$ 
given the bipartite Pfaffian graph $G$, 
the Pfaffian orientation $\oa G$, and the anchor edge $e$ as input;
also recall the conventions and further notation---in particular, 
the vertices $s$ and $t$---from Sections~\ref{sect:chiH} 
and~\ref{sect:induced-orientation}.
In this setting, a natural question to ask 
is how much one needs to reveal from a good coloring $\chi$ to enable 
efficient completion to a good coloring. We now show that it suffices to 
reveal $\chi$ in one of the parts of the bipartition of $G$ by a reduction
to bipartite perfect matching in an auxiliary bipartite graph.

More precisely, let the sets $L$ (``left'') and $R$ (``right'') form 
a partition of the vertices of $G$ such that $s\in L$, $t\in R$, and 
every edge of $G$ has one end in $L$ and the other end in $R$.%
\footnote{This bipartition $(L,R)$ of $G$ is in fact unique unless $G$
is not Hamiltonian. Moreover, $(L,R)$ is computable in linear time from the
given input.}{}
Let $\lambda:L\rightarrow\{0,1\}$ with $\lambda(s)=0$ be a given further input. 
Our task is to find whether there exists a good coloring 
$\chi:V(G)\rightarrow\{0,1\}$ with $\chi(\ell)=\lambda(\ell)$ 
for all $\ell\in L\subseteq V(G)$; 
that is, whether there exists a good coloring that
extends the partial coloring $\lambda$. 

Construct an auxiliary bipartite graph $F_\lambda$ as follows.
Let the vertex set $V(F_\lambda)=V(G)\times\{0,1\}$. 
To avoid notational confusion between arcs and vertices of $F_\lambda$, 
we will use bracketed notation $[u,k]$ for vertices 
of $F_\lambda$ with $u\in V(G)$ and $k\in\{0,1\}$. 
The edge set $E(F_\lambda)$ is defined by the following rule.
For all $\ell\in L$, $r\in R$, $p\in\{0,1\}$, and $\rho\in\{0,1\}$ 
with $\{\ell,r\}\in E(G)$, we have 
\begin{equation}
\label{eq:f-edge}
\{[\ell,p],[r,\rho]\}\in E(F_\lambda)
\end{equation}
if and only if both
\begin{equation}
\label{eq:f-parity}
\rho\equiv\lambda(\ell)+\iv{(\ell,r)\in \oa G_e}\pmod 2
\end{equation}
and
\begin{equation}
\label{eq:f-s}
\ell\neq s
\quad\text{or}\quad 
p\neq 0
\quad\text{or}\quad 
r=t
\,.
\end{equation}
For an edge $\{[\ell,p],[r,\rho]\}\in E(F_\lambda)$, we say that 
the edge $\{\ell,r\}\in E(G)$ is the \emph{projection} of the edge (to $G$) and 
call $p$ the {\em port} at $\ell$ and $\rho$ the {\em parity} at $r$, 
stressing that port and parity have asymmetric roles in our construction 
even though both range in $\{0,1\}$.

Let us now start analysing the structure of $F_\lambda$ in more detail.
First, the parts $L\times\{0,1\}$ and $R\times\{0,1\}$ witness 
by \eqref{eq:f-edge} that $F_\lambda$ is bipartite. In particular, 
$F_\lambda$ has $2n$ vertices with $|L|=|R|=n/2$, where we recall 
that $n\geq 4$ is the number of vertices in $G$ with $n$ even.
Second, recalling that $s\in L$ and $t\in R$, the constraint \eqref{eq:f-s} 
effectively states that $[s,0]$ is adjacent only to $[t,0]$ in $F_\lambda$; 
indeed, recalling that $\lambda(s)=0$ and $(s,t)\notin\oa G_e$, 
from~\eqref{eq:f-parity} we have that $[s,0]$ is not adjacent to $[t,1]$.
Third, for all $\ell\in L\setminus\{s\}$, we observe from~\eqref{eq:f-parity}
and \eqref{eq:f-s}, the latter being trivially true, that the vertices
$[\ell,0]$ and $[\ell,1]$ have identical vertex neighborhoods in $F_\lambda$. 

We are now ready for our first key lemma. Recall the coloring $\chi_H$ 
associated to an anchored Hamiltonian cycle $H$ of $G$ 
from Section~\ref{sect:chiH}. The first lemma shows that every perfect 
matching in $F_\lambda$ gives rise to an anchored Hamiltonian cycle; 
different perfect matchings may give rise to the same 
anchored Hamiltonian cycle however. 
This structure is illustrated in Figure~\ref{fig:f-lambda}.

\begin{figure}[!ht]
\[
\begin{tikzpicture}[scale=0.5,shorten >=1pt, auto, node distance=1cm, ultra thick]  
\node[rectangle](R) at (-4,2) {$s$};\node[rectangle](R) at (-4,-2) {$t$};\pfaffian
\end{tikzpicture}
\qquad \qquad \qquad
\begin{tikzpicture}[scale=0.5,shorten >=1pt, auto, node distance=1cm, ultra thick]
\node[rectangle](R) at (-4,2) {$s$};\node[rectangle](R) at (-4,-2) {$t$};\matching
\end{tikzpicture}
\]
 \caption{Illustration of a perfect matching in the graph $F_\lambda$.
Left: The graph~$G$ drawn in one of its orientations $\oa G_e$ with $e=\{s,t\}$
and the bipartition $(L,R)$ with vertices in $L$ drawn as white circles and
the vertices in $R$ drawn as black boxes.
Right: The graph $F_\lambda$ and the coloring $\lambda$ drawn as an oriented 
overlay of $G$. 
Observe that each vertex $r\in R$ has two copies $[r,\rho]$ in $F_\lambda$, 
one for each parity $\rho\in\{0,1\}$, with a blue square indicating 
parity $0$ and an orange square indicating parity $1$. Although each vertex
$\ell\in L$ has two copies $[\ell,p]$ in $F_\lambda$, one copy for each
port $p\in\{0,1\}$, we contract these two copies into one vertex (circle) 
in the drawing, and display for each vertex its color $\lambda(\ell)\in\{0,1\}$
(blue or orange) instead. Each arc in the drawing is oriented from $L$ 
to $R$ and represents two edges of $F_\lambda$ with opposite ports.
A perfect matching $M$ in $F_\lambda$ is represented by the bold arcs. 
In particular, observe that each circle is incident to two bold arcs; 
these two bold arcs represent two edges in $M$ with opposite ports. 
These opposite ports are otherwise arbitrary except for the edge of $M$ 
that projects to $\{s,t\}$, which must have port~$0$. Also observe that
from the drawn $M$ it is visually intuitive how to obtain a Hamiltonian
cycle in $G$ corresponding to $M$ by turning the bold arcs into the edges
of a Hamiltonian cycle in $G$; this intuition is made rigorous 
in Lemma~\ref{lem:matching-to-extension} by the Hamiltonian cycle $\ham M$
of~$G$ obtained from $M$.}
\label{fig:f-lambda}
\end{figure}
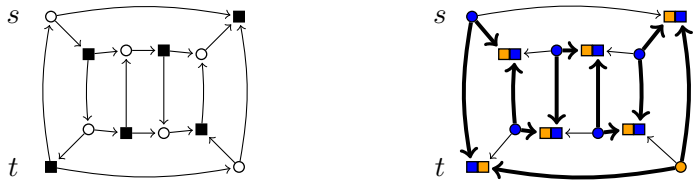

\begin{lemma}[Perfect matchings witness good extensions]
\label{lem:matching-to-extension}
For every perfect matching $M$ in $F_\lambda$, there exists a
Hamiltonian cycle $\ham M$ in $G$ with $\chi_{\ham M}(\ell)=\lambda(\ell)$ 
for all $\ell\in L$.
\end{lemma}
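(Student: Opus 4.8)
The plan is to extract from the perfect matching $M$ a good coloring $\chi$ of all of $V(G)$ that extends $\lambda$, and then invoke Lemma~\ref{lem:acyclic-hamiltonicity} to recover the associated anchored Hamiltonian cycle $\ham M$. Here is the key structural observation that drives everything: since $F_\lambda$ is bipartite with parts $L\times\{0,1\}$ and $R\times\{0,1\}$, and both are perfectly matched by $M$, every vertex $[\ell,p]$ with $\ell\in L$ is matched, and by the parity constraint~\eqref{eq:f-parity} the parity $\rho$ of the $R$-endpoint of the edge matching $[\ell,p]$ is \emph{determined} by $\lambda(\ell)$ and the orientation of $\{\ell,r\}$ in $\oa G_e$; it does not depend on $p$. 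So consider the two edges of $M$ incident to $[\ell,0]$ and $[\ell,1]$: they project to two (possibly equal --- but in a bipartite simple graph they are distinct, since $[\ell,0]$ and $[\ell,1]$ cannot both be matched to the same vertex) edges $\{\ell,r\}$ and $\{\ell,r'\}$ of $G$ at $\ell$. Symmetrically, each vertex $r\in R$ has two copies $[r,0]$ and $[r,1]$ in $F_\lambda$, each matched to some copy of an $L$-vertex, again yielding two distinct edges of $G$ at $r$. First I would argue that the set $N\subseteq E(G)$ of projections of the edges of $M$ forms a $2$-regular spanning subgraph of $G$, i.e.\ a disjoint union of cycles; this follows because each vertex of $G$ receives exactly two incident edges from the above count, and one checks the projection map from $E(M)$ to $N$ is a bijection (two matching edges can project to the same $G$-edge $\{\ell,r\}$ only if they are $\{[\ell,0],[r,\rho_0]\}$ and $\{[\ell,1],[r,\rho_1]\}$; but the $R$-parities $\rho_0,\rho_1$ are both forced equal to $\lambda(\ell)+\iv{(\ell,r)\in\oa G_e}\pmod 2$, so the two matching edges would coincide --- contradiction).

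Next I would show this cycle cover $N$ is in fact a single Hamiltonian cycle. The constraint~\eqref{eq:f-s} is exactly what forces this: it says $[s,0]$ is adjacent only to $[t,0]$, so the edge of $M$ at $[s,0]$ projects to the anchor $e=\{s,t\}$, hence $e\in N$. Now I would analyse the component of $N$ through $e$. Orient $N$ consistently along each cycle; define a coloring $\chi$ on $V(G)$ by propagating from $s$ using the monochromatic/bichromatic rule of~\eqref{eq:chi-Hst} along the component of $N$ containing $s$ and $t$, and --- for every other cycle of $N$ --- by starting from an arbitrary vertex with value $0$ and propagating analogously. The key point: along the edges of $N$, the coloring $\chi$ I just defined agrees, on $L$, with $\lambda$. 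This is because each matching edge $\{[\ell,p],[r,\rho]\}$ encodes precisely the congruence $\rho\equiv\lambda(\ell)+\iv{(\ell,r)\in\oa G_e}\pmod 2$ via~\eqref{eq:f-parity}; reading the two edges of $N$ at an $L$-vertex $\ell$, the incoming and outgoing edges in the $N$-orientation carry ``parities'' $\lambda(\ell)$ on the $\ell$-side and the forced value on the $r$-side, so propagating $\chi$ around $N$ is consistent with setting $\chi(\ell)=\lambda(\ell)$ on all of $L$. Here I would need to be careful about which copy/port is used, and to track the orientation of each edge of $N$ relative to $\oa G_e$ --- this bookkeeping, matching up the port $p$ at $\ell$ with the two $N$-edges at $\ell$ and the consistent cycle orientation, is the fiddliest part.

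With $\chi$ in hand, I would then verify $\chi$ is good, i.e.\ that $N$ is a single Hamiltonian cycle, not a union of several central cycles. Suppose for contradiction $N$ has a cycle $C\neq$ the $s$--$t$ component. Then $C$ avoids $e$ (only the $s$--$t$ component uses $e$), $C$ is even with $\ge 4$ vertices (bipartiteness), and $G\setminus V(C)$ has a perfect matching --- namely, take the $N$-edges not in $C$; wait, $N$ restricted to $V(G)\setminus V(C)$ is again $2$-regular, so that gives a perfect matching of $G\setminus V(C)$ after, say, deleting one edge from each remaining cycle. Thus $C$ is central. But by construction, every edge $\{u,w\}$ of $N$ (with its $N$-orientation $(u,w)$) satisfies $\chi(w)\equiv\chi(u)+\iv{(u,w)\in\oa G_e}\pmod 2$ after the propagation; summing this congruence around $C$ forces $|E(\oa C)\cap E(\oa G_e)|\equiv 0\pmod 2$ for the consistent orientation $\oa C$, while Pfaffianity of $\oa G$ (together with the fact that $\oa G$ and $\oa G_e$ differ only at $e\notin E(C)$) forces $|E(\oa C)\cap E(\oa G_e)|\equiv 1\pmod 2$ --- a contradiction. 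Hence $N$ is a single Hamiltonian cycle $\ham M$ through $e$, and $\chi=\chi_{\ham M}$ (by uniqueness of the propagation with $\chi(s)=0$, which is how $\chi_H$ is \emph{defined} in~\eqref{eq:chi-Hst}), so $\chi_{\ham M}(\ell)=\lambda(\ell)$ for all $\ell\in L$, as required. I expect the main obstacle to be the second paragraph: carefully setting up the correspondence between matching edges, ports, $N$-orientations, and the coloring so that the identity $\chi\restriction L=\lambda$ falls out cleanly; the centrality-plus-Pfaffianity contradiction in the last step is essentially a rerun of the argument already used in the proof of Lemma~\ref{lem:acyclic-hamiltonicity}.
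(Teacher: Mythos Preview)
Your plan is essentially the paper's proof, reorganized: the paper runs an explicit port--and--parity--flipping traversal of $M$ starting from $[s,1]$, simultaneously building the vertex sequence $v_0,v_1,\ldots$ and the coloring $\chi$ (setting $\chi(\ell_i)=\lambda(\ell_i)$ and $\chi(r_i)=\rho_i$ as it goes), and then invokes the identical Pfaffianity contradiction you describe to rule out a second cycle in the cover. You instead project $M$ to the $2$-regular $N$ first and then orient and color. Both routes arrive at the same ``central cycle with the wrong parity'' contradiction already seen in Lemma~\ref{lem:acyclic-hamiltonicity}.

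Two small slips to fix. First, on a non-$s$ cycle you propose to propagate $\chi$ ``starting from an arbitrary vertex with value $0$''; this will not in general give $\chi\restriction L=\lambda$, and then the telescoping sum does not yield parity $0$. Start instead at an $L$-vertex $\ell_0$ of the cycle with value $\lambda(\ell_0)$: then the parity constraint~\eqref{eq:f-parity} together with the fact that the two $M$-edges at each $r\in R$ use opposite parities forces $\chi(\ell)=\lambda(\ell)$ at every subsequent $L$-vertex, and the propagation closes up consistently---which is exactly what makes the sum equal $0\pmod 2$ and contradicts Pfaffianity. (The paper sidesteps this by \emph{defining} $\chi(\ell)=\lambda(\ell)$ rather than propagating.) Second, to witness that the stray cycle $C$ is central you want a perfect matching of $G\setminus V(C)$: take \emph{every other} edge of each remaining even cycle of $N$, not ``delete one edge from each remaining cycle'' (that leaves a path, not a matching).
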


\begin{proof}
Let $M$ be an arbitrary perfect matching in $F_\lambda$. 
For $[\ell,p]\in L\times\{0,1\}$ and 
$[r,\rho]\in L\times\{0,1\}$, let us use functional notation
$M([\ell,p])=[r,\rho]$ or
$M([r,\rho])=[\ell,p]$ to signal that the vertices 
$[\ell,p]$ and $[r,\rho]$ are matched by $M$ in $F_\lambda$.
We construct the anchored Hamiltonian cycle $\ham M$ 
as well as the coloring $\chi=\chi_{\ham M}$
with $\chi(\ell)=\lambda(\ell)$ for all $\ell\in L$
by traversing all the vertices of $F_\lambda$ in an order determined 
by $M$ to yield the Hamiltonian cycle $\ham M$. In particular, we 
will define $\ham M$ in steps by introducing, one vertex at a time, 
a vertex order $v_0,v_1,\ldots,v_{n-1}$ for the vertices of $G$
with $v_0=s$, $v_{n-1}=t$, and 
$(v_i,v_{(i+1)\bmod n})\in E(\overrightarrow{\ham M})\subseteq E(\oa G_e^{\chi})$ for all $i=0,1,\ldots,n-1$. 

Our traversal starts from the vertex $[\ell_0,p_0]$ of $F_\lambda$ defined by 
$\ell_0=s$ and $p_0=1$. The traversal then follows edges in the perfect
matching $M$, changing parity (at vertices in $R\times\{0,1\}$) and port 
(at vertices in $L\times\{0,1\}$) to arrive at subsequent edges; 
for these changes, for $z\in\{0,1\}$ it is convenient to 
write $\overline{z}=(z+1)\bmod 2$ for notational brevity; that is, 
$\overline 0=1$ and $\overline 1=0$. 

In precise terms, the traversal is as follows. 
Assuming we have defined $\ell_j\in L$ and $p_j\in\{0,1\}$
for all $j\in \{0,1,\ldots,i\}$ with $i\geq 0$, we proceed to define 
$\ell_{i+1}\in L$ and $p_{i+1}\in\{0,1\}$ as follows.
Set $v_{2i}=\ell_i$ and $\chi(\ell_i)=\lambda(\ell_i)$.
Define $r_i\in R$ and $\rho_i\in\{0,1\}$ by $M([\ell_i,p_i])=[r_i,\rho_i]$.
Set $v_{2i+1}=r_i$ and $\chi(r_i)=\rho_i$.
Define $\ell_i'\in L$ and $p_i'\in\{0,1\}$ by 
$M([r_i,\overline{\rho_i}])=[\ell_i',p_i']$.
Set $\ell_{i+1}=\ell_i'$ and $p_{i+1}=\overline{p_i'}$
as well as $\chi(\ell_{i+1})=\lambda(\ell_{i+1})$ and $v_{2(i+1)}=\ell_{i+1}$.
We continue this process for $i=0,1,\ldots$ and claim that eventually 
$\ell_{i+1}=\ell_0$ and $p_{i+1}=p_0$ with $i+1=n/2$,
at which point $\{v_0,v_1,\ldots,v_{n-1}\}=V(G)$ and $\ham M$ is 
a Hamiltonian cycle in $G$ 
with $\chi_{\ham M}=\chi$ with $\chi(\ell)=\lambda(\ell)$ for all $\ell\in L$.

Let us now analyse the traversal process in more detail. First, we observe that
$\ell_{i+1}\neq\ell_i$; indeed, suppose that $\ell_{i+1}=\ell_i$ and 
observe that the traversal step from $\ell_i$ to $\ell_{i+1}$ changes parity 
from $\rho_i$ to $\bar\rho_i$ at $r_i$, yet from~\eqref{eq:f-parity} we 
observe that every edge of $F_\lambda$ that projects to the 
edge $\{\ell_i,r_i\}=\{\ell_{i+1},r_i\}$ has the same parity at $r_i$, 
a contradiction. Next, let us observe that 
$(v_{2i},v_{2i+1})=(\ell_i,r_i)\in E(\oa G_e^\chi)$. 
Indeed, the identity is immediate, and membership holds 
by~\eqref{eq:oa-induced} and
\[
\chi(r_i)
=
\rho_i
\stackrel{\eqref{eq:f-parity}}{\equiv} 
\lambda(\ell_i)+\iv{(\ell_i,r_i)\in\oa G_e}
=\chi(\ell_i)+\iv{(\ell_i,r_i)\in\oa G_e}
\pmod 2\,.
\]
Let us then observe that 
$(v_{2i+1},v_{2i+2})=(r_i,\ell_{i+1})\in E(\oa G_e^\chi)$.
Again the identity is immediate, 
and membership holds 
by~\eqref{eq:oa-induced}, the fact that $\oa G_e^\chi$ orients 
$\{\ell_{i+1},r_i\}\in E(G)$ in one of two possible orientations, and
\[
\chi(r_i)=
\rho_i
\neq
\bar\rho_i
\stackrel{\eqref{eq:f-parity}}{\equiv} 
\lambda(\ell_{i+1})+\iv{(\ell_{i+1},r_i)\in\oa G_e}
=\chi(\ell_{i+1})+\iv{(\ell_{i+1},r_i)\in\oa G_e}
\pmod 2\,.
\]

Next let us show that all the vertices $\ell_i$ and $r_i$ traversed by the
process are distinct, until $\ell_{i+1}=\ell_0$ for some $i\geq 1$, noting
that the case $i=0$ has already been excluded earlier.  
Suppose $\ell_1,\ell_2,\ldots,\ell_i$ are distinct; since $M$ contains exactly
two edges (of opposite parities) that project to edges incident to any 
fixed $r\in R$, we observe that these two edges of $M$ have been each 
traversed once by the process for each $r_0,r_1,\ldots,r_i$ since 
$\ell_0,\ell_1,\ldots,\ell_i$ are distinct, implying that 
$r_0,r_1,\ldots,r_i$ are distinct, and thus that $v_0,v_1,\ldots,v_{2i+1}$
are distinct. 
So suppose that $\ell_{i+1}=\ell_j$ for
some $0\leq j\leq i$; also note that this must happen for some $i<|L|=n/2$.
If $j\geq 1$, we have a contradiction since $M$ contains exactly
two edges (of opposite ports) that project to edges incident to any
fixed $\ell\in L$, and for $\ell=\ell_j$ these two edges 
(projecting to $\{\ell_j,r_{j-1}\}$ and $\{\ell_j,r_j\}$) have already 
been traversed; so there is no edge in $M$ that projects to 
$\{\ell_j,r_i\}=\{\ell_{i+1},r_i\}$, a contradiction. So we must have $j=0$.
This implies in particular that 
$(v_k,v_{(k+1)\bmod (2i+2)})\in E(\oa G_e^\chi)$ for all $k=0,1,\ldots,2i+1$.
Furthermore, since the edge of $M$ that is incident to $[\ell_0,p_0]=[s,1]$ 
has already been traversed, we have that the edge 
$\{[\ell_{i+1},\overline{p_{i+1}}],[r_i,\bar\rho_i]\}$ in $M$ must 
be the edge $\{[s,0],[t,0]\}$ (recall our analysis earlier that $[s,0]$ is 
adjacent only to $[t,0]$ in $F_\lambda$); thus we conclude that 
$\chi(t)=\chi(r_i)=\rho_i\neq\bar\rho_i=0$; that is, $\chi(t)=1$. 

Let us next show that $i=n/2-1$. So to reach a contradiction, suppose 
that $i<n/2-1$. In particular, the edges of $G$ underlying the arcs 
$(v_k,v_{(k+1)\bmod (2i+2)})\in E(\oa G_e^\chi)$ for $k=0,1,\ldots,2i+1$
trace a cycle of even length $2i+2<n$ in $G$. This leaves some of the
vertices in $G$, and thus all corresponding vertices of $F_\lambda$ 
regardless of port/parity, unvisited by the traversal process. By starting
the traversal process again from an arbitrary unvisited vertex in 
$L\times\{0,1\}$, we end up tracing a further even-length cycle in $G$, 
and repeating the process until all vertices of $G$ are visited, 
we obtain a vertex-disjoint union of even-length 
cycles that together cover the vertices of $G$, as well as a coloring $\chi$
such that all the cycles (in their consistently oriented form as they were 
traversed) occur as directed subgraphs of $\oa G_e^\chi$. Since $2i+2<n$, 
this cycle cover thus contains a cycle $C$ that does not contain the anchor 
edge $e$ and whose consistent orientation $\oa C$ is a subgraph of 
$\oa G_e^\chi$; observing that $C$ is central---indeed, use every other
edge from each even cycle other than $C$ in the cover to witness a perfect
matching in $G\setminus V(C)$---this leads to a contradiction via 
Pfaffianity by the same argument as was used in the proof of 
Lemma~\ref{lem:acyclic-hamiltonicity}; thus, $i=n/2-1$.

Since $i=n/2-1$, it follows from $(v_k,v_{(k+1)\bmod n})\in E(\oa G_e^{\chi})$ 
for all $k=0,1,\ldots,n-1$ and from 
Lemma~\ref{lem:acyclic-hamiltonicity} we conclude that 
$\chi=\chi_{\ham M}$ for the anchored Hamiltonian cycle $\ham M$ in $G$ 
defined by $V(\ham M)=\{v_0,v_1,\ldots,v_{n-1}\}$ and 
$E(\ham M)=\{\{v_k,v_{(k+1)\bmod n}\}:k=0,1,\ldots,n-1\}$.
\end{proof}

Conversely, we show that every good extension of $\lambda$ is witnessed 
by a perfect matching in $F_\lambda$. 

\begin{lemma}[Good extensions witness perfect matchings]
\label{lem:extension-to-matching}
For every anchored Hamiltonian cycle $H$ in $G$ 
with $\chi_H(\ell)=\lambda(\ell)$ for all $\ell\in L$, 
there exists a perfect matching $M_H$ in $F_\lambda$ with 
$\ham{M_H}=H$.
\end{lemma}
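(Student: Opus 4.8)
The plan is to reverse the construction in the proof of Lemma~\ref{lem:matching-to-extension}: given an anchored Hamiltonian cycle $H$ extending $\lambda$, I will read off the vertex sequence $v_0=s,v_1,\ldots,v_{n-1}=t$ along $H$ (in the orientation $\oa H$ with $(t,s)\in E(\oa H)$) and use the colors $\chi=\chi_H$ to designate, for each matched pair of consecutive vertices, the correct port/parity labels so that the resulting set of $F_\lambda$-edges is a perfect matching $M_H$. Concretely, the even-indexed vertices $v_{2i}$ lie in $L$ and the odd-indexed vertices $v_{2i+1}$ lie in $R$ (since $s\in L$ and $G$ is bipartite), so along $\oa H$ the cycle alternates $\ell_0=s,r_0,\ell_1,r_1,\ldots,\ell_{n/2-1},r_{n/2-1}$ and returns to $\ell_0$. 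For each $r_i$ I need to match its two copies $[r_i,0]$ and $[r_i,1]$ in $F_\lambda$; by \eqref{eq:chi-H} applied to the two arcs of $\oa H$ incident to $r_i$, exactly one of the edges $\{\ell_i,r_i\}$, $\{\ell_{i+1},r_i\}$ satisfies the parity condition \eqref{eq:f-parity} with parity $\chi_H(r_i)$ and the other with parity $\overline{\chi_H(r_i)}$, so I assign the copy $[r_i,\chi_H(r_i)]$ to be matched toward $\ell_i$ and $[r_i,\overline{\chi_H(r_i)}]$ toward $\ell_{i+1}$ — this is exactly the bookkeeping the forward lemma does, run backwards.

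The remaining freedom is the choice of ports at the $\ell_i$'s. Each $\ell_i$ with $i\geq 1$ appears as the right endpoint of the edge toward $r_{i-1}$ and the left endpoint of the edge toward $r_i$, so I need to pick two \emph{opposite} ports $p,\overline p\in\{0,1\}$ for these two $F_\lambda$-edges at $\ell_i$; since (as noted after \eqref{eq:f-s}) the two copies $[\ell_i,0]$ and $[\ell_i,1]$ have identical neighborhoods for $\ell_i\neq s$, either choice of which port goes with which edge produces valid $F_\lambda$-edges, so I simply fix, say, port $0$ on the edge toward $r_{i-1}$ and port $1$ on the edge toward $r_i$. For $\ell_0=s$ I must use port $0$ on the edge $\{[s,0],[t,0]\}$ forced by \eqref{eq:f-s} (this is the edge projecting to the anchor, and I must check $r_{n/2-1}=t$, which holds since $v_{n-1}=t$), and port $1$ on the edge $\{[s,1],[r_0,\rho_0]\}$ toward $r_0$. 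I then need to verify (i) each $F_\lambda$-vertex is covered exactly once: each $[\ell,p]$ is hit by exactly one of the two edges I placed at $\ell$, and each $[r,\rho]$ by exactly one of the two edges I placed at $r$ (using that $\{\chi_H(r),\overline{\chi_H(r)}\}=\{0,1\}$); (ii) each placed pair $\{[\ell,p],[r,\rho]\}$ is genuinely an edge of $F_\lambda$, i.e.\ satisfies \eqref{eq:f-edge}, which reduces to \eqref{eq:f-parity} (just verified via \eqref{eq:chi-H}) and \eqref{eq:f-s} (satisfied because we used port $0$ at $s$ only toward $t$). This gives a perfect matching $M_H$ of $F_\lambda$.

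Finally I must check $\ham{M_H}=H$. Running the traversal of Lemma~\ref{lem:matching-to-extension} on $M_H$ starting from $[\ell_0,p_0]=[s,1]$, the matching edges I installed force the traversal to visit precisely $\ell_0,r_0,\ell_1,r_1,\ldots$ in the same order, producing the same vertex sequence $v_0,v_1,\ldots,v_{n-1}$; since $\ham{M_H}$ is defined by $E(\ham{M_H})=\{\{v_k,v_{(k+1)\bmod n}\}\}$, we get $\ham{M_H}=H$ on the nose, and moreover $\chi_{\ham{M_H}}=\chi_H$ agrees with $\lambda$ on $L$ as required. The main obstacle is the port/parity bookkeeping in step~(i)–(ii): one has to make sure the $2n$ vertices of $F_\lambda$ are each covered exactly once, which hinges on the two observations that (a) at each $r\in R$ the two arcs of $\oa H$ incident to $r$ receive \emph{opposite} parities under \eqref{eq:f-parity} (a direct consequence of \eqref{eq:chi-H}, since the two arcs disagree on the indicator exactly when $r$ is monochromatic with one neighbor and not the other — and being bipartite-adjacent along $\oa H$ they differ), and (b) at each $\ell\neq s$ the port is a free binary choice thanks to the identical-neighborhood observation; handling $\ell=s$ and the forced anchor edge $\{[s,0],[t,0]\}$ is a small separate case that I dispatch using $\lambda(s)=0$, $(s,t)\notin\oa G_e$, and $\chi_H(t)=1$.
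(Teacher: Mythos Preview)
Your proposal is correct and takes essentially the same approach as the paper's proof: both define $M_H$ by the rule ``port $1$ at $\ell_i$ toward $r_i$ with parity $\chi_H(r_i)$, port $0$ at $\ell_{i+1}$ toward $r_i$ with parity $\overline{\chi_H(r_i)}$,'' verify edge validity via \eqref{eq:f-parity} and \eqref{eq:f-s} (with the special case at $s$ handled by $\chi_H(t)=1$ forcing the anchor edge to land at $[t,0]$), and confirm $\ham{M_H}=H$ by rerunning the traversal of Lemma~\ref{lem:matching-to-extension}. The only cosmetic difference is that the paper writes the parity explicitly as $\rho_i=(\chi_H(\ell_i)+\iv{(\ell_i,r_i)\in\oa G_e})\bmod 2$ and then observes $\rho_i=\chi_H(r_i)$, whereas you start from $\chi_H(r_i)$ directly.
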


\begin{proof}
Let $H$ be an anchored Hamiltonian cycle in $G$ with 
$\chi_H(\ell)=\lambda(\ell)$ for all $\ell\in L$. 
Let $\oa H$ be the consistent orientation of $H$ with 
distinct vertices $v_0,v_1,\ldots,v_{n-1}$ so that $v_0=s$, $v_{n-1}=t$, and
$(v_k,v_{(k+1)\bmod n})\in E(\oa H)\subseteq E(\oa G_e^{\chi_H})$ 
for all $k=0,1,\ldots,n-1$.
Set $\ell_i=v_{2i}$, $r_i=v_{2i+1}$, and 
$\rho_i=(\chi_H(\ell_i)+\iv{(\ell_i,r_i)\in \oa G_e})\bmod 2$
for all $i=0,1,\ldots,n/2-1$.
Observe also that by \eqref{eq:oa-induced} and 
$(\ell_i,r_i)\in E(\oa G_e^{\chi_H})$, we have $\chi_H(r_i)=\rho_i$
for all $i=0,1,\ldots,n/2-1$.

Start with an empty matching $M_H$.
It is immediate from \eqref{eq:f-parity} and \eqref{eq:f-s}
that $\{[\ell_i,1],[r_i,\rho_i]\}\in E(F_\lambda)$ for $i=0,1,\ldots,n/2-1$.
Take each of these $n/2$ vertex-disjoint edges into $M_H$. 
Next observe that 
we have $\{[\ell_{(i+1)\bmod (n/2)},0],[r_i,\bar\rho_i]\}\in E(F_\lambda)$ 
for $i=0,1,\ldots,n/2-1$; indeed, from 
$(r_i,\ell_{(i+1)\bmod (n/2)})\in E(\oa G_e^{\chi_H})$ we conclude 
by~\eqref{eq:oa-induced} that 
\[
\rho_i=\chi_H(r_i)\equiv\chi_H(\ell_{(i+1)\bmod (n/2)})+\iv{(r_i,\ell_{(i+1)\bmod (n/2)})\in E(\oa G_e)}\pmod 2\,; 
\]
that is,
\[
\bar\rho_i\equiv\chi_H(\ell_{(i+1)\bmod (n/2)})+\iv{(\ell_{(i+1)\bmod (n/2)},r_i)\in E(\oa G_e)}\pmod 2\,, 
\]
so \eqref{eq:f-parity} holds. Furthermore, \eqref{eq:f-s} holds expect 
possibly when $\ell_{(i+1)\bmod (n/2)}=s$; but then $i=n/2-1$ and 
thus $r_i=t$, so \eqref{eq:f-s} holds also in this case.
Take each of these $n/2$ vertex-disjoint edges into $M_H$ and observe 
that the $n$ edges now in $M_H$ constitute a perfect matching in $F_\lambda$. 

By tracing the traversal process in the proof of 
Lemma~\ref{lem:matching-to-extension} with the definition of the perfect
matching $M_H$ above, we conclude that $\ham{M_H}=H$.
\end{proof}

Thus, $F_\lambda$ has a perfect matching if and only if $\lambda$ has a 
good extension. Moreover, from the proofs of 
Lemma~\ref{lem:matching-to-extension} and Lemma~\ref{lem:extension-to-matching}
we observe that the transformations $M\mapsto\ham M$ and 
$H\mapsto M_H$ are computable in linear time. We also observe that for
every anchored Hamiltonian cycle $H$ in $G$ there
are exactly $2^{n/2-1}$ perfect matchings $M$ in $F_\lambda$ with 
$\ham M=H$; these $M$ are all obtainable from each other by 
transposing ports at zero or more vertices $\ell\in L\setminus\{s\}$. 

\section{Another Hamiltonian cycle in bipartite Pfaffian graphs}

\label{sect:ahc}

This section studies the problem of finding another Hamiltonian cycle 
when given as input (i)~a~bipartite Pfaffian graph $G$ and 
(ii)~a~Hamiltonian cycle $H$ in $G$. Recall from 
Lemma~\ref{lem:pfaffian-orientation-from-hamiltonian-cycle}
that we can in linear time construct a Pfaffian orientation $\oa G$ from 
this input. In what follows we thus tacitly assume that such a $\oa G$ is 
available and fixed together with an arbitrary anchor edge $e\in E(H)$.

\subsection{Linear-time solvability in minimum degree three}
\label{sect:linear-time-solvability}

Our first objective in this section is our main theorem, which we
restate below for convenience. 

\thmahc*

We now proceed to prove Theorem~\ref{thm:ahc}. 
Recall and assume the setting of Section~\ref{sect:finding-a-good-coloring}. 
Observe that from the 
given input, we can in linear total time 
(a)~find a Pfaffian orientation $\oa G$ using $H$,
(b)~compute the orientation $\oa G_e$, 
(c)~compute the coloring $\chi_H$, 
(d)~compute the vertex bipartition $(L,R)$ of~$G$, 
(e)~restrict $\chi_H$ to $L$ to obtain the coloring $\lambda$, 
(f)~construct the graph $F_\lambda$, 
as well as
(g) construct the perfect matching $M_H$ in $F_\lambda$.

Using $M_H$ and $F_\lambda$, introduce the directed graph
$D_{\lambda,H}$ with the vertex set $V(D_{\lambda,H})=L$ and the arc set 
defined for all distinct $\ell,\ell'\in L$ by the rule 
$(\ell,\ell')\in E(D_{\lambda,H})$ if and only if
there exist $p,p'\in\{0,1\}$, $r\in R$, and $\rho\in\{0,1\}$ such 
that
\begin{equation}
\label{eq:d-arc}
\{[\ell,p],[r,\rho]\}\in E(F_\lambda)\setminus M_H
\quad\text{and}\quad
\{[\ell',p'],[r,\rho]\}\in M_H\,.
\end{equation}
That is, an arc $(\ell,\ell')\in E(D_{\lambda,H})$ indicates that 
(disregarding ports $p$ and $p'$) 
we can walk from $\ell$ to $\ell'$ in $F_\lambda$ by 
traversing first an edge not in $M_H$, followed by an edge in $M_H$.
We stress that the traversal \eqref{eq:d-arc} {\em preserves} 
the parity $\rho$ for consecutive edges, whereas the traversal in the 
proof of Lemma~\ref{lem:matching-to-extension} {\em changes} parity 
for consecutive edges; the latter also uses edges only in $M_H$. 

We recall that $G$ has minimum degree at least three; this enables us
to find a Hamiltonian cycle other than $H$ in $G$ with the help of a
directed cycle in $D_{\lambda,H}$ revealed in the following lemma. 

\begin{lemma}[Existence of an $s$-avoiding directed cycle in $D_{\lambda,H}$]
\label{lem:s-avoiding-cycle-in-d}
Suppose that every vertex of $G$ has degree at least three. Then, 
the directed graph $D_{\lambda,H}$ contains at least one directed
cycle that avoids the vertex $s$. 
\end{lemma}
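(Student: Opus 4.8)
The plan is to reduce the statement to a purely digraph-theoretic fact: I will show that the induced subdigraph $D'=D_{\lambda,H}[L\setminus\{s\}]$ has every out-degree at least one. Since $D'$ is finite and nonempty (recall $n\geq 4$, so $|L|=n/2\geq 2$), following out-arcs from any vertex must eventually revisit a vertex, producing a directed cycle; this cycle lies entirely inside $L\setminus\{s\}$, hence it is an $s$-avoiding directed cycle in $D_{\lambda,H}$, as required.

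So fix $\ell\in L\setminus\{s\}$ and aim to produce one out-arc of $D_{\lambda,H}$ at $\ell$ whose head is also in $L\setminus\{s\}$. First I would set up the notation of Section~\ref{sect:chiH}: write $\oa H$ as $v_0=s,v_1,\ldots,v_{n-1}=t$, put $\ell_i=v_{2i}$ and $r_i=v_{2i+1}$, and recall from the proof of Lemma~\ref{lem:extension-to-matching} the explicit form of $M_H$---namely, the edge of $M_H$ at $[\ell,1]$ projects to the $H$-edge joining $\ell$ to its $\oa H$-successor and the edge of $M_H$ at $[\ell,0]$ projects to the $H$-edge joining $\ell$ to its $\oa H$-predecessor. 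Because $\deg_G(\ell)\geq 3$, the vertex $\ell$ has a $G$-neighbour $r$ different from both of its $\oa H$-neighbours. Setting $\rho\equiv\lambda(\ell)+\iv{(\ell,r)\in\oa G_e}\pmod 2$, the edge $\{[\ell,0],[r,\rho]\}$ lies in $E(F_\lambda)$ by \eqref{eq:f-parity} and \eqref{eq:f-s} (the latter holds since $\ell\neq s$), and it is not in $M_H$, because the unique $M_H$-edge at $[\ell,0]$ projects to an $H$-edge at $\ell$ whose other end is not $r$. Let $[\ell',p']=M_H([r,\rho])$. Every edge of $M_H$ projects to an edge of $H$, so $\{\ell',r\}\in E(H)$, i.e.\ $\ell'$ is an $\oa H$-neighbour of $r$; and $\ell'\neq\ell$, since otherwise $r$ would be an $\oa H$-neighbour of $\ell$, contrary to the choice of $r$. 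Hence $(\ell,\ell')\in E(D_{\lambda,H})$ by \eqref{eq:d-arc}.

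It remains to check $\ell'\neq s$, which is the one nontrivial point. If $\ell'=s$ then $r$ is an $\oa H$-neighbour of $s$, so $r=v_1$ or $r=t$. Suppose $r=v_1$. From the explicit form of $M_H$, the two copies $[v_1,\rho_0]$ and $[v_1,\overline{\rho_0}]$ are matched respectively to $[s,1]$ and $[v_2,0]$, where $\rho_0=\chi_H(v_1)$; so $\ell'=s$ would force $\rho=\rho_0=\chi_H(v_1)$, which by \eqref{eq:oa-induced} says precisely that $(\ell,v_1)\in E(\oa G_e^{\chi_H})$. But $\oa G_e^{\chi_H}\setminus(t,s)$ is a DAG carrying the directed Hamiltonian path $v_0\to v_1\to\cdots\to v_{n-1}$ (Lemma~\ref{lem:acyclic-hamiltonicity}), and any arc into $v_1$ other than $(v_0,v_1)$ would close a directed cycle with a prefix of that path; hence $v_1$'s only in-neighbour is $s=v_0$, contradicting $\ell\neq s$. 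The case $r=t$ is symmetric: using that $t$ is the unique sink of $\oa G_e^{\chi_H}\setminus(t,s)$, that $\chi_H(t)=1$, and that $[t,0]$ is matched to $[s,0]$ while $[t,1]$ is matched to $[v_{n-2},1]$, one obtains the same contradiction. Thus $\ell'\in L\setminus\{s\}$, so $(\ell,\ell')$ is an arc of $D'$, which establishes the out-degree bound and hence the lemma.

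The main obstacle is precisely this last step. A priori the ``extra'' neighbour $r$ of $\ell$ could be one of the two $H$-neighbours of $s$, and then the natural candidate step in $F_\lambda$ might loop back to $s$; the resolution is to combine the precise parities of $M_H$ at $v_1$ and at $t$ with the acyclicity of the induced orientation (which pins down the unique in-neighbour of $v_1$ and the unique out-neighbour of $t$) to conclude that the step always lands elsewhere. Everything else is straightforward bookkeeping with the definitions of $F_\lambda$, $M_H$, and $D_{\lambda,H}$.
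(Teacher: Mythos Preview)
Your proof is correct and follows essentially the same approach as the paper. The only organizational difference is that the paper separates the argument into ``every vertex has out-degree at least one'' and ``$s$ has in-degree zero'' (the latter being a slightly stronger statement than you need), whereas you directly verify that every $\ell\neq s$ has an out-arc whose head is also not $s$; the key case analysis on $r\in\{v_1,t\}$ via the acyclicity of $\oa G_e^{\chi_H}\setminus(t,s)$ is identical in both arguments.
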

\begin{proof}
It suffices to show that all vertices of $D_{\lambda,H}$ have out-degree
at least one and that the vertex $s$ is a source; that is, $s$ 
has in-degree zero. Towards this end, since every vertex of $G$ has degree 
at least three, for every $\ell\in L$ there exist distinct $r,r',r''\in R$ 
and three edges $\{\ell,r\},\{\ell,r'\},\{\ell,r''\}\in E(G)$. Furthermore, 
since $\ell$ gives rise to the vertices $[\ell,0]$ and $[\ell,1]$ in $F_\ell$, 
the edges of $M_H$ project to at most two of these three edges; without loss
of generality we may assume that the edge $\{\ell,r\}$ is not in the 
projection of $M_H$. For $p=1$ and the unique $\rho\in\{0,1\}$ 
such that \eqref{eq:f-parity} holds we thus have by~\eqref{eq:f-edge} 
that $\{[\ell,p],[r,\rho]\}\in E(F_\lambda)\setminus M_H$. Since $M_H$ is a
perfect matching in $F_\lambda$, there exist $\ell'\in L$ and $p'\in\{0,1\}$
such that $\{[\ell',p'],[r,\rho]\}\in M_H$; we must have $\ell'\neq\ell$ since
$\{\ell,r\}$ is not in the projection of $M_H$. 
Thus, we have $(\ell,\ell')\in E(D_{\lambda,H})$ by \eqref{eq:d-arc}.
Since $\ell\in L$ was arbitrary, we conclude that every vertex of 
$D_{\lambda,H}$ has out-degree at least one. 

It remains to show that $s$ has in-degree zero in $D_{\lambda,H}$. 
To reach a contradiction, suppose that there exists 
an arc $(\ell,s)\in E(D_{\lambda,H})$. From \eqref{eq:d-arc} we 
thus have that there exist $p,p',\rho\in\{0,1\}$ and $r\in R$ with
$\{[\ell,p],[r,\rho]\}\in E(F_\lambda)\setminus M_H$
and
$\{[s,p'],[r,\rho]\}\in M_H$. 
Let us split into two cases based on the value of $p'$ and obtain a
contradiction in both cases.

In the first case, suppose that $p'=0$. 
Recall that in the graph $F_\lambda$ we have
that the vertex $[s,0]$ is adjacent only to the vertex $[t,0]$.
Since $M_H$ is a perfect matching, we must thus have $\{[s,0],[t,0]\}\in M_H$,
and hence $r=t$ and $\rho=0$. 
From \eqref{eq:f-parity} we thus have that 
$0=\rho\equiv\lambda(\ell)+\iv{(\ell,r)\in \oa G_e}
=\chi_H(\ell)+\iv{(\ell,t)\in \oa G_e}\pmod 2$.
Now recall that $\chi_H(t)=1$. In particular, we have
$\chi_H(t)\not\equiv\chi_H(\ell)+\iv{(\ell,t)\in \oa G_e}\pmod 2$,
implying by \eqref{eq:oa-induced} that $(t,\ell)\in E(\oa G_e^{\chi_H})$.
But then since $\ell\neq s$ we have that the arc $(t,\ell)$ together 
with the directed $(\ell,t)$-subpath of the directed Hamiltonian 
path $\oa H\setminus(t,s)$ 
yields a directed cycle in $\oa G_e^{\chi_H}\setminus(t,s)$, 
a contradiction to acyclicity in Lemma~\ref{lem:acyclic-hamiltonicity}.

In the second case, suppose that $p'=1$. Then, in the construction 
of $M_H$ in Lemma~\ref{lem:extension-to-matching} we observe that
we must have $\ell_0=s$, $r_0=r$, and $\chi_H(r)=\chi_H(r_0)=\rho_0=\rho$.
In particular, we have $(s,r)\in E(\oa H)$. 
From \eqref{eq:f-parity} and $\lambda(\ell)=\chi_H(\ell)$
we thus conclude that
\[
\chi_H(r)=\rho\equiv \lambda(\ell)+\iv{(\ell,r)\in\oa G_e}\equiv\chi_H(\ell)+\iv{(\ell,r)\in\oa G_e} \pmod 2\,, 
\]
implying by \eqref{eq:oa-induced}
that $(\ell,r)\in\oa G_e^{\chi_H}$. But then since $\ell\neq s$ we have
that the arc $(\ell,r)$ together with the directed $(r,\ell)$-subpath of 
the directed Hamiltonian path $\oa H\setminus(t,s)$ 
yields a directed cycle in $\oa G_e^{\chi_H}\setminus(t,s)$, 
a contradiction to acyclicity in Lemma~\ref{lem:acyclic-hamiltonicity}.
\end{proof}

By the previous lemma we thus know that $D_{\lambda,H}$ contains
an $s$-avoiding directed cycle $\oa Q$ with
$V(\oa Q)=\{\ell_0,\ell_1,\ldots,\ell_{k-1}\}\subseteq L$ 
and $(\ell_j,\ell_{(j+1)\bmod k})\in E(\oa Q)$ for each $j=0,1,\ldots,k-1$ 
and $k\geq 2$. We will use $\oa Q$ to construct from $M_H$ another Hamiltonian 
cycle $H'\neq H$ in $G$. Figure~\ref{fig:cycle-switching} illustrates 
the construction.

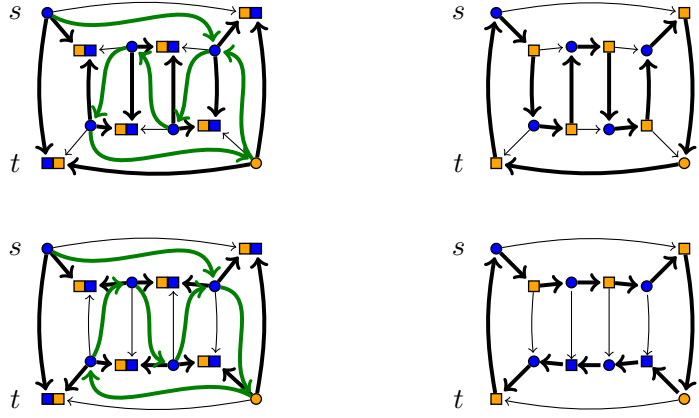
\begin{figure}[!ht]
\[
\begin{tikzpicture}[scale=0.5,shorten >=1pt, auto, node distance=1cm, ultra thick]
\node[rectangle](R) at (-4,2) {$s$};\node[rectangle](R) at (-4,-2) {$t$};\overlayerI
\end{tikzpicture}
\qquad \qquad \qquad
\begin{tikzpicture}[scale=0.5,shorten >=1pt, auto, node distance=1cm, ultra thick]
\node[rectangle](R) at (-4,2) {$s$};\node[rectangle](R) at (-4,-2) {$t$};\solI
\end{tikzpicture}
\]
\[
\begin{tikzpicture}[scale=0.5,shorten >=1pt, auto, node distance=1cm, ultra thick]
\node[rectangle](R) at (-4,2) {$s$};\node[rectangle](R) at (-4,-2) {$t$};\overlayerII
\end{tikzpicture}
\qquad \qquad \qquad
\begin{tikzpicture}[scale=0.5,shorten >=1pt, auto, node distance=1cm, ultra thick]
\node[rectangle](R) at (-4,2) {$s$};\node[rectangle](R) at (-4,-2) {$t$};\solII
\end{tikzpicture}
\]

 \caption{Obtaining another Hamiltonian cycle using a directed cycle in $D_{\lambda,H}$.
 Left: Two perfect matchings $M_H$ in $F_\lambda$ drawn as oriented overlays 
 of $G$ (cf.~Figure~\ref{fig:f-lambda}), with further overlaying
 drawn in green and constituting the arcs of~$D_{\lambda,H}$. 
 Right: The corresponding two orientations $\oa G_e^{\chi_H}$ and colorings
 $\chi_H$. Top and bottom: In both cases we have that $D_{\lambda,H}$ contains 
 a unique $s$-avoiding directed cycle $\oa Q$. Using $\oa Q$ in each case 
 we can switch between the top and bottom Hamiltonian cycles in $G$. 
 Note in particular that the two vertex colorings agree in $L$ but differ 
 in $R$.}
 \label{fig:cycle-switching}
\end{figure}

From \eqref{eq:d-arc} applied to each arc of $\oa Q$ in turn we conclude that 
for $j=0,1,\ldots,k-1$ there exist $r_j\in R$ and $\rho_j,p_j\in\{0,1\}$
with 
\begin{equation}
\label{eq:port-mixed-cycle}
\{[\ell_j,p_j],[r_j,\rho_j]\}\in E(F_\lambda)\setminus M_H
\quad\text{and}\quad
\{[\ell_{(j+1)\bmod k},p_j'],[r_j,\rho_j]\}\in M_H\,.
\end{equation}
We observe that the vertices $[r_j,\rho_j]$ for $j=0,1,\ldots,k-1$ 
are distinct because $M_H$ is a perfect matching and 
$\ell_{(j+1)\bmod k}$ for $j=0,1,\ldots,k-1$ are distinct.
In spite of this, the edges~\eqref{eq:port-mixed-cycle} 
for $j=0,1,2,\ldots,k-1$ need {\em not} form a cycle in $F_\lambda$ 
because we can have $p_j'\neq p_{(j+1)\bmod k}$. Here is where the fact 
that $\oa Q$ is $s$-avoiding pays off. Let $M=M_H$ and
recall that the vertices $[\ell,0]$ and $[\ell,1]$ for 
each $\ell\in L\setminus\{s\}$ have identical vertex neighborhoods 
in $F_\lambda$. Thus, whenever we have $p_j'\neq p_{(j+1)\bmod k}$, 
we can modify $M$ by transposing the vertices 
$[\ell_{(j+1)\bmod k},0]$ and $[\ell_{(j+1)\bmod k},1]$ 
in the edges of $M$; by the identical vertex neighborhoods property, 
the resulting $M$ will still be a perfect matching in $F_\lambda$. 
Moreover, we have $H=\ham{M_H}=\ham M$; indeed, the traversal construction in 
Lemma~\ref{lem:matching-to-extension} is insensitive to the specific values 
of the (opposite) ports. For $j=0,1,\ldots,k-1$ we thus now have 
\begin{equation}
\label{eq:port-uniform-cycle}
\{[\ell_j,p_j],[r_j,\rho_j]\}\in E(F_\lambda)\setminus M
\quad\text{and}\quad
\{[\ell_{(j+1)\bmod k},p_{(j+1)\bmod k}],[r_j,\rho_j]\}\in M\,;
\end{equation}
that is, these edges now form a $2k$-vertex cycle in $F_\lambda$. 
Let us write $A$ for this cycle in $F_\lambda$. 

Observe in particular from \eqref{eq:port-uniform-cycle}
that the edges of $A$ alternate between edges in $M$ and edges 
not in $M$. Thus, we have that the symmetric difference 
$M'=(M\setminus E(A))\cup(E(A)\setminus M)$ is a perfect matching in 
$F_\lambda$. Furthermore, $M'$ and $M$ project to a different set of
edges of $G$; indeed, from \eqref{eq:port-uniform-cycle} we have
that $r_j$ changes adjacency from $\ell_{(j+1)\bmod k}$ in $\ham{M}$
to $\ell_j$ in $\ham{M'}$ for $j=0,1,\ldots,k-1$. 
It follows that $H'=\ham{M'}\neq\ham{M}=H$. 
Thus, we have constructed a Hamiltonian cycle $H'$ in $G$ that is 
different from $H$. Moreover, this construction is computable in 
deterministic linear time. 
This completes the proof of Theorem~\ref{thm:ahc}. $\qedwhite$

\subsection{Logarithmic-space solvability in minimum degree three}

We next describe how we can implement the ideas of the previous 
section in an algorithm that uses little space. We consider as input 
a bipartite Pfaffian graph $G$ of minimum degree three, 
a consistently oriented Hamiltonian cycle $\oa H$ in $G$, and 
an arc $(t,s)\in \oa H$.
More precisely, we assume both graphs $G$ and $\oa H$ are given in the 
input as a list of adjacency lists for each vertex. 
We seek to output a list of edges of another Hamiltonian cycle $H'\neq H$ 
with $\{s,t\}\in E(H')$. We assume that the vertices of $G$ are represented
as $O(\log n)$-bit integers in the input, where $n$ is the number of vertices
in $G$.

\thmahclog*

\begin{proof}
Suppose that $G$ has $n$ vertices. 
We describe an algorithm that uses space that is only logarithmic in $n$; 
however, this algorithm no longer runs in linear time but merely 
in polynomial time.
Recalling the proof of Theorem~\ref{thm:ahc} and the 
alternating cycle $A$ in $F_\lambda$, the algorithm outputs 
(i) all edges in $H$ but not in $A$, and (ii) all edges in $A$ but not on $H$;
the edges (i) and (ii) together form another Hamiltonian cycle $H'$ 
containing the edge $e=\{s,t\}$. The algorithm relies on the following 
logarithmic-space subroutines to accomplish the listing (i) and (ii). 

First, from the given input $\oa H$ we can compute in space logarithmic 
in $n$ the following numerical identifier $\operatorname{id}(v)$ for any 
given vertex $v$. Namely, we set $\operatorname{id}(v)$ to equal the number 
of arcs along the consistently oriented $\oa H$ from $s$ to $v$. From
a given $v$ we can compute $\operatorname{id}(v)$ by keeping track of one 
vertex $w$ (where we currently are) and a counter $c$ (how many edges we 
have traversed along $\oa H$). Starting with $w=s$ and $c=0$, as long 
as $w\neq v$, we traverse arcs of $\oa H$, setting $w$ to the next vertex 
after $w$ on $\oa H$ and increasing $c$ by one, and repeat until we reach $v$,
at which point we return $c=\operatorname{id}(v)$. Also observe that
this identifier subroutine enables us to determine whether a given vertex is
in the set $L$ (even identifier) or the set $R$ (odd identifier) in 
the bipartition $(L,R)$ of $G$ with $s\in L$.

Second, from Lemma~\ref{lem:pfaffian-orientation-from-hamiltonian-cycle} 
applied to $G$ and $H$ with $e=\{s,t\}$ we observe that the Pfaffian 
orientation $\oa G^\chi$, where $\chi$ is the proper coloring of the vertices 
of $G$ with $\chi(s)=0$, has the property that each edge $\{u,v\}\in E(G)$ 
is oriented from $u$ to $v$ in $\oa G^\chi$ if and only if 
$\operatorname{id}(u)<\operatorname{id}(v)$. Thus, using the subroutine
for the vertex identifiers, we can compute in logarithmic space in $n$ 
the orientation of any given edge $\{u,v\}\in E(G)$ in $\oa G^\chi$.

Third, we develop a subroutine for accessing the arcs of a subgraph
$D'_{\lambda,H}$ of $D_{\lambda,H}$ with $V(D'_{\lambda,H})=V(D_{\lambda,H})=L$
and with the property that there is exactly one out-arc from each vertex.
That is, we describe a subroutine that given a vertex $v\in L$ computes
in logarithmic space the end-vertex $u\in L$ of an arc $(v,u)$ in 
$D_{\lambda,H}$. To accomplish this, we compute $\operatorname{id}(v)$, 
find the first vertex $w$ adjacent to $v$ in $G$ from the adjacency list 
for $v$, such that $\{v,w\}\not \in E(H)$. Next, we compute 
$\operatorname{id}(w)$. If $\operatorname{id}(v)<\operatorname{id}(w)$, 
we locate the vertex $u$ as the one immediately preceding $w$ along $\oa H$;
otherwise, that is, when $\operatorname{id}(v)>\operatorname{id}(w)$, we 
locate the vertex $u$ as the one immediately succeeding $w$ along $\oa H$. 
By the structure of $\chi$ and $\oa G^\chi$, this will ensure that 
the parity at $w$ is the same in $F_\lambda$ for the edges projecting 
to $\{v,w\}$ and $\{u,w\}$ in $G$.

Fourth, we can find a vertex on the unique cycle in $D'_{\lambda,H}$ 
by starting from $s$ and walking along the arcs of $D'_{\lambda,H}$ 
for $n$ steps. The subroutine again only needs two additional variables, the current 
vertex and a counter keeping track of how many steps we have taken; each step
is taken with the subroutine in the previous paragraph.
Once we have found a vertex on the cycle, we can easily enumerate the 
vertices in $A$ along a consistent orientation $\oa A$ by traversing 
the arcs on the cycle in $D'_{\lambda,H}$ (again using the subroutine 
from the previous paragraph) until we get back to the starting vertex
on the cycle. This also requires storing only two pointers (vertices).

Finally, the listing (i) (the listing (ii)) can be done by making 
one revolution over $\oa H$ (over $\oa A$) and for each arc traversed using 
the subroutines for one revolution over $\oa A$ (over $\oa H$) to check that 
the underlying edge is not in $A$ (not in $H$). Both enumerations are thus
computable in logarithmic space. 
\end{proof}

\subsection{Thomason's lollipop method in cubic bipartite Pfaffian graphs}
\label{sect: lollipop}

In this section we prove that Thomason's lollipop method runs in a linear number of steps 
in cubic bipartite Pfaffian graphs. 
Let us first set up some preliminaries and then describe 
Thomason's lollipop method.

Let $G$ be a Hamiltonian cubic graph and let $H$ be a Hamiltonian cycle in $G$.
Select an edge $e=\{s,t\}$ in the Hamiltonian cycle $H$. Let us
call $e$ the {\em anchor} edge.
A \emph{lollipop} is a connected graph with one vertex of degree one, 
one vertex of degree three, and all other vertices of degree two. 
All lollipops considered in what follows are subgraphs of $G$
such that $t$ is the unique degree-one vertex and $e$ is 
the edge incident to $t$ on the lollipop. 

The lollipop method is best described as operating on a family of 
Hamiltonian paths in $G$. We say that a Hamiltonian path 
in $G$ that starts at the vertex $t$ and continues via the anchor edge 
$e$ is an $e$-\emph{anchored} Hamiltonian path. Now recall that $G$ is 
cubic, so the vertex $t$ is adjacent to $s$ (via the anchor edge $e$)
and to two other vertices $a$ and $b$. 
The lollipop method transforms a given $e$-anchored Hamiltonian path $P_e$ 
that ends at either $a$ or $b$ into an $e$-anchored 
Hamiltonian path $P_e'\neq P_e$ that ends at either $a$ or $b$.
Observe in particular that both Hamiltonian paths $P_e$ and $P_e'$ can be
completed into $e$-anchored Hamiltonian cycles by adding the missing edge 
$\{a,t\}$ or $\{b,t\}$ into the respective path. 

The transformation from $P_e$ to $P_e'$ is via a sequence
of \emph{lollipop steps}. A lollipop step consists of adding one edge to 
an $e$-anchored Hamiltonian path and removing another one, so that 
another $e$-anchored Hamiltonian path is formed. More precisely, 
let $Q$ be an $e$-anchored Hamiltonian path ending at some vertex $u$.
Since $G$ is cubic, $u$ is adjacent to two other vertices, $x$ and $y$,
such that the edges $\{u,x\}$ and $\{u,y\}$ of $G$ are not in $Q$. 
Assume that $x\neq t$. 
Add the edge $\{u,x\}$ into $Q$ to obtain a lollipop $\Omega$ where the
unique degree-three vertex is $x$. Now observe that among the three
adjacent vertices to $x$ there is a unique vertex $v\notin\{u,t\}$ 
such that both $\{v,x\}\in E(\Omega)$ and removing $\{v,x\}$ from $\Omega$
leaves an $e$-anchored Hamiltonian path $Q'$ ending at $v$.
The transformation from $Q$ to $Q'$ now constitutes one lollipop step.
Observe also that lollipop steps are {\em reversible}; that is, 
we can go back to $Q$ from $Q'$ by performing a lollipop step starting
from $Q'$.

The \emph{lollipop state graph} $\mathcal{L}(G,s,t)$ has as its vertices the 
$e$-anchored Hamiltonian paths in $G$ and two vertices are joined by 
an edge if and only if it is possible to transform between the 
$e$-anchored Hamiltonian paths by one lollipop step.
We observe immediately that $\mathcal{L}(G,s,t)$ 
has no isolated vertices---indeed, from any vertex $Q$ we can arrive at
another vertex $Q'\neq Q$ by a lollipop step---and the degree-one vertices
are exactly the $e$-anchored Hamiltonian paths $Q$ that end at a vertex
$u$ adjacent to $t$ in $G$; that is, $u\in\{a,b\}$; moreover, all other
vertices have degree two. Thus, we can transform 
from $P_e$ to $P_e'\neq P_e$ by tracing a path in $\mathcal{L}(G,s,t)$ 
from~$P_e$ to~$P_e'$.

We now proceed to prove an upper bound on the maximum length of a 
path in $\mathcal{L}(G,s,t)$ on a cubic bipartite Pfaffian graph $G$. 
An example of the lollipop method applied to a cubic bipartite planar graph 
using the terminology in the subsequent proof is given 
in Figure~\ref{fig:lollipop}.

\begin{figure}[!ht]
\[
\begin{tikzpicture}[scale=0.5,shorten >=1pt, auto, node distance=1cm, ultra thick]
\node[rectangle](R) at (0,4) {$H$};\node[rectangle](R) at (-4.5,2.3) {$s$};\node[rectangle](R) at (-4.5,-2.3) {$t$};\Hlp
\end{tikzpicture}
\qquad \qquad 
\begin{tikzpicture}[scale=0.5,shorten >=1pt, auto, node distance=1cm, ultra thick]
\node[rectangle](R) at (0,4) {$Q_0$};\node[rectangle](R) at (-4.5,2.3) {$s$};\node[rectangle](R) at (-4.5,-2.3) {$t$};\PlpO
\end{tikzpicture}
\qquad \qquad 
\begin{tikzpicture}[scale=0.5,shorten >=1pt, auto, node distance=1cm, ultra thick]
\node[rectangle](R) at (0,4) {$Q_1$};\node[rectangle](R) at (-4.5,2.3) {$s$};\node[rectangle](R) at (-4.5,-2.3) {$t$};\PlpI
\end{tikzpicture}
\]
\[
\begin{tikzpicture}[scale=0.5,shorten >=1pt, auto, node distance=1cm, ultra thick]
\node[rectangle](R) at (0,4) {$Q_2$};\node[rectangle](R) at (-4.5,2.3) {$s$};\node[rectangle](R) at (-4.5,-2.3) {$t$};\PlpII
\end{tikzpicture}
\qquad \qquad 
\begin{tikzpicture}[scale=0.5,shorten >=1pt, auto, node distance=1cm, ultra thick]
\node[rectangle](R) at (0,4) {$Q_3$};\node[rectangle](R) at (-4.5,2.3) {$s$};\node[rectangle](R) at (-4.5,-2.3) {$t$};\PlpIII
\end{tikzpicture}
\qquad \qquad 
\begin{tikzpicture}[scale=0.5,shorten >=1pt, auto, node distance=1cm, ultra thick]
\node[rectangle](R) at (0,4) {$Q_4$};\node[rectangle](R) at (-4.5,2.3) {$s$};\node[rectangle](R) at (-4.5,-2.3) {$t$};\PlpIV
\end{tikzpicture}
\]
\[
\begin{tikzpicture}[scale=0.5,shorten >=1pt, auto, node distance=1cm, ultra thick]
\node[rectangle](R) at (0,4) {$Q_5$};\node[rectangle](R) at (-4.5,2.3) {$s$};\node[rectangle](R) at (-4.5,-2.3) {$t$};\PlpV
\end{tikzpicture}
\qquad \qquad 
\begin{tikzpicture}[scale=0.5,shorten >=1pt, auto, node distance=1cm, ultra thick]
\node[rectangle](R) at (0,4) {$Q_6$};\node[rectangle](R) at (-4.5,2.3) {$s$};\node[rectangle](R) at (-4.5,-2.3) {$t$};\PlpVI
\end{tikzpicture}
\qquad \qquad 
\begin{tikzpicture}[scale=0.5,shorten >=1pt, auto, node distance=1cm, ultra thick]
\node[rectangle](R) at (0,4) {$Q_7$};\node[rectangle](R) at (-4.5,2.3) {$s$};\node[rectangle](R) at (-4.5,-2.3) {$t$};\PlpVII
\end{tikzpicture}
\]
\[
\begin{tikzpicture}[scale=0.5,shorten >=1pt, auto, node distance=1cm, ultra thick]
\node[rectangle](R) at (0,4) {$Q_8$};\node[rectangle](R) at (-4.5,2.3) {$s$};\node[rectangle](R) at (-4.5,-2.3) {$t$};\PlpVIII
\end{tikzpicture}
\qquad \qquad 
\begin{tikzpicture}[scale=0.5,shorten >=1pt, auto, node distance=1cm, ultra thick]
\node[rectangle](R) at (0,4) {$Q_9$};\node[rectangle](R) at (-4.5,2.3) {$s$};\node[rectangle](R) at (-4.5,-2.3) {$t$};\PlpIX
\end{tikzpicture}
\qquad \qquad 
\begin{tikzpicture}[scale=0.5,shorten >=1pt, auto, node distance=1cm, ultra thick]
\node[rectangle](R) at (0,4) {$H'$};\node[rectangle](R) at (-4.5,2.3) {$s$};\node[rectangle](R) at (-4.5,-2.3) {$t$};\Hlpprime
\end{tikzpicture}
\]
 \caption{An example of the sequence of steps of Thomason's lollipop method in an $n$-vertex cubic bipartite planar graph viewed as a sequence of arc reversals in the directed graph $D_{\lambda,H}$.
The given input $G$ and $H$ together with $e=\{s,t\}$ is displayed in the
top left; the black arcs are oriented as in $\oa G_e$ obtained from Lemma~\ref{lem:pfaffian-orientation-from-hamiltonian-cycle} on input $H$.
We display the initial Hamiltonian cycle $H$ (top left)
and the final Hamiltonian cycle $H'$ (bottom right) obtained by the method, 
as well as the intermediate $e$-anchored 
Hamiltonian paths $Q_0,Q_1,\ldots,Q_9$ obtained in consecutive lollipop steps; 
the end-vertex of each $Q_i$ is highlighted with red. 
The green arcs in $Q_0$ are the arcs of $D_{\lambda,H}$. 
Observe that each lollipop step from $Q_i$ to $Q_{i+1}$ can be understood as
reversing the light-green arc in $Q_i$; the method terminates
when the end-vertices of $Q_0$ and $Q_{i+1}$ agree. By the structure of
$D_{\lambda,H}$, we must have $i\leq n$; cf.~Theorem~\ref{thm:tho}.}
\label{fig:lollipop}
\end{figure}
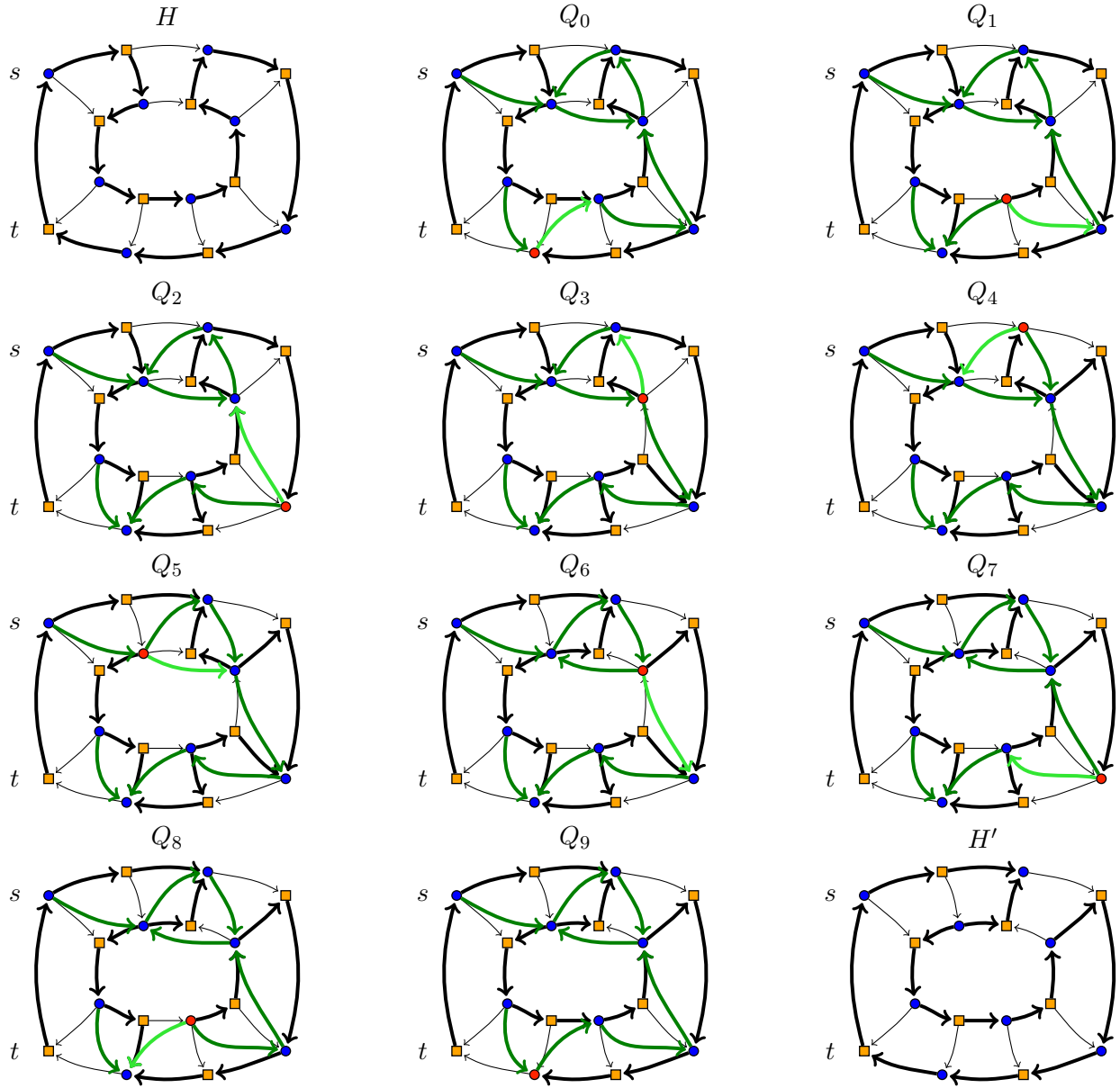

\thmtho*

\begin{proof}
To analyze the lollipop method in cubic bipartite Pfaffian graphs, 
let a cubic bipartite Pfaffian graph $G$, a Hamiltonian cycle $H$ in $G$, 
and $e=\{s,t\}\in E(H)$ be given as input. This input enables us to work
in the setting of Section~\ref{sect:linear-time-solvability}; let 
the vertex bipartition $(L,R)$ of $G$, the coloring $\lambda$ of $L$,
the graph $F_\lambda$, the perfect matching $M_H$ in $F_\lambda$,
and the directed graph $D_{\lambda,H}$ be constructed accordingly. 
Recall that $s\in L$ and $t\in R$.  

The lollipop method starts by removing the edge $\{t,u\}$ 
with $u\neq s$ from $H$ to obtain $e$-anchored Hamiltonian path $P_e$. 
Let $Q_0,Q_1,\ldots,Q_h$ be the sequence of $e$-anchored Hamiltonian 
paths traversed by consecutive lollipop steps in $\mathcal{L}(G,s,t)$ 
with $P_e=Q_0$ and $Q_h=P_e'$. We will show that $P_e'$ ends at $u$ and 
thus we can obtain a Hamiltonian cycle $H'\neq H$ by inserting the 
edge $\{t,u\}$ into $P_e'$. Moreover and crucially, we will show 
that $h\leq n$.

Our analysis of the lollipop method is based on the directed graph 
$D_{\lambda,H}$. 
We recommend consulting Figure~\ref{fig:lollipop} for intuition at this point.
Recall from the proof of Lemma~\ref{lem:s-avoiding-cycle-in-d} that,
in the directed graph $D_{\lambda,H}$, the vertex $s$ has in-degree zero
and every vertex has out-degree at least one. In particular, by traversing
out-arcs from the vertex $u$ in $D_{\lambda,H}$, and traversing the eventual 
directed cycle encountered, as well as traversing backwards to $u$ from the 
directed cycle, in precise terms we observe that there
exist vertices $w_0,w_1,\ldots,w_{d-1}$ with 
$(w_j,w_{(j+1)\bmod d})\in E(D_{\lambda,H})$ 
for $j=0,1,\ldots,d-1$
as well as vertices $w_0',w_1',\ldots,w_{d'}'$ 
with $w_0=w_{d'}'$, $w_0'=u$, 
$\{w_0,w_1,\ldots,w_{d-1}\}\cap\{w_0',w_1',\ldots,w_{d'-1}'\}=\emptyset$,
and 
$(w_j',w_{j+1}')\in E(D_{\lambda,H})$ for $j=0,1,\ldots,d'-1$.
That is, the sequence $w_0',w_1',\ldots,w_{d'}'$ forms
a directed path starting at the vertex $u=w_0'$ and ending at the vertex
$w_{d'}=w_0$, which is on the directed cycle formed by the 
vertices $w_0,w_1,\ldots,w_{d-1}$ in $D_{\lambda,H}$; the directed cycle and
the directed path intersect exactly at the vertex $w_{d'}=w_0$. 
In particular $d+d'\leq |L|=n/2$.

It will be convenient to introduce the following sequence of vertices
visited on the traversal of $D_{\lambda,H}$ from $u$. 
For $i=0,1,\ldots,2d'+d$, define
\begin{equation}
\label{eq:lollipop-vi}
v_i=
\begin{cases}
w_i'         & \text{for $i=0,1,\ldots,d'-1$}\,;\\
w_{i-d'}     & \text{for $i=d',d'+1,\ldots,d'+d-1$}\,;\\
w_{2d'+d-i}' & \text{for $i=d'+d,d'+d+1,\ldots,2d'+d$}\,.
\end{cases}
\end{equation}
We have $(v_i,v_{i+1})\in E(D_{\lambda,H})$ for $i=0,1,\ldots,d'+d-1$;
these arcs are precisely the arcs traversed forward. 
We have $(v_{i+1},v_i)\in E(D_{\lambda,H})$ for $i=d'+d,d'+d+1,\ldots,2d'+d-1$;
these arcs are precisely the arcs traversed backward.
For an arc $(\ell,\ell')\in E(D_{\lambda,H})$, let us write 
$r(\ell,\ell')$, $\rho(\ell,\ell')$, and $p'(\ell,\ell')$, respectively,
for the unique $r\in R$, $\rho\in\{0,1\}$, and $p'\in\{0,1\}$
such that \eqref{eq:d-arc} holds. Also, let us write $p(\ell,\ell')$
for the minimum $p\in\{0,1\}$ such that that \eqref{eq:d-arc} holds.

Let $M^-$ be a matching with $n-1$ edges in $F_\lambda$ such that 
the vertex $[t,1]$ is left unmatched by $M^-$;
we call such matchings {\em almost perfect}---indeed, any perfect matching
$M$ in $F_\lambda$ has $n$ edges. Also observe that the other vertex left 
unmatched by $M^-$ is $[\ell,p]$ for some $\ell\in L$ and $p\in\{0,1\}$.
Recall the parity-and-port-changing traversal of $M$ in 
the proof of Lemma~\ref{lem:matching-to-extension} resulting
in the Hamiltonian cycle $\ham M$. 
Define a similar parity-and-port-changing traversal of $M^-$ by starting 
at the vertex $[\ell,\bar p]$ and observe by a similar argument as in 
the proof of Lemma~\ref{lem:matching-to-extension} that this traversal 
defines an $e$-anchored Hamiltonian path $\hp{M^-}$ from the vertex 
$\ell$ to the vertex $t$ in $G$; in particular, observe that 
$\hp{M^-}$ is $e$-anchored since by the structure of $F_\lambda$
the almost perfect $M^-$ must contain the edge $\{[s,0],[t,0]\}$. 

We now proceed to characterize the $e$-anchored Hamiltonian paths 
$Q_0,Q_1,\ldots,Q_h$ using corresponding almost perfect matchings 
$M_0^-,M_1^-,\ldots,M_h^-$, and conclude that $h=2d'+d\leq n$
in the process. 
For $i=0,1,\ldots,h$, let us write $u_i$ for the end-vertex of $Q_i$ 
other than $t$. Recalling that $Q_0=P_e$ is constructed by deleting the
edge $\{u,t\}$ from the Hamiltonian cycle $H$, let $p\in\{0,1\}$ be the 
port and $f\in E(F_\lambda)$ the edge with $f=\{[u,p],[t,1]\}\in M_H$. 
Take $M_0^-=M_H\setminus \{f\}$. 
In particular, we have $Q_0=P_e=\hp{M_0^-}$ and $u_0=v_0=u$. Let $p_0=p$;
we will fix values $p_i\in\{0,1\}$ for $i=1,2,\ldots,h$ as we progress 
in what follows.

We split the analysis into two ranges based on the parameter $i$. The
first range corresponds to the forward-traversal of arcs 
in $D_{\lambda,H}$. For $i=0,1,\ldots,d'+d-1$, 
we say an almost perfect matching $M^-$ has {\em property} $i$ if 
\begin{enumerate}
\item[(i)]
$[v_i,p_i]$ is left unmatched by $M^-$; and
\item[(ii)]
we have 
$\{[v_j,p(v_j,v_{j+1})],[r(v_j,v_{j+1}),\rho(v_j,v_{j+1})]\}\in M^-$ and\\
$\{[r(v_j,v_{j+1}),\rho(v_j,v_{j+1})],[v_{j+1},p'(v_j,v_{j+1})]\}\notin M^-$
for all $0\leq j\leq i-1$; and
\item[(iii)]
we have 
$\{[v_j,p(v_j,v_{j+1})],[r(v_j,v_{j+1}),\rho(v_j,v_{j+1})]\}\notin M^-$ and\\
$\{[r(v_j,v_{j+1}),\rho(v_j,v_{j+1})],[v_{j+1},p'(v_j,v_{j+1})]\}\in M^-$
for all $i\leq j\leq d'+d$.
\end{enumerate}
The second range corresponds to the backward-traversal of arcs 
in $D_{\lambda,H}$. For $i=d'+d,d'+d+1,\ldots,2d'+d$, we 
we say an almost perfect matching $M^-$ has {\em property} $i$ if 
\begin{enumerate}
\item[(i')]
$[v_i,p_i]$ is left unmatched by $M^-$; and
\item[(ii')]
we have 
$\{[v_j,p(v_j,v_{j+1})],[r(v_j,v_{j+1}),\rho(v_j,v_{j+1})]\}\in M^-$ and\\
$\{[r(v_j,v_{j+1}),\rho(v_j,v_{j+1})],[v_{j+1},p'(v_j,v_{j+1})]\}\notin M^-$\\
for all $d'\leq j\leq d'+d-1$ as well as for all $0\leq j\leq 2d'+d-1-i$; and
\item[(iii')]
we have 
$\{[v_j,p(v_j,v_{j+1})],[r(v_j,v_{j+1}),\rho(v_j,v_{j+1})]\}\notin M^-$ and\\
$\{[r(v_j,v_{j+1}),\rho(v_j,v_{j+1})],[v_{j+1},p'(v_j,v_{j+1})]\}\in M^-$
for all $2d'+d-i\leq j\leq d'-1$.
\end{enumerate}
From previous observations and \eqref{eq:d-arc} we have that $M^-_0$ satisfies 
property $0$. 

Let us now analyse the lollipop step mapping $Q_i$ to $Q_{i+1}$
one value $i=0,1,\ldots,d'+d-1$ at a time.
Suppose that there is an almost perfect matching $M_i^-$ of $F_\lambda$ 
that satisfies property $i$ and that $Q_i=\hp{M_i^-}$. In particular,
we have $u_i=v_i$ by (i) and $Q_i=\hp{M_i^-}$. We claim that the vertex
$r(v_i,v_{i+1})$ is the unique degree-three vertex in the lollipop formed
by the lollipop step transforming $Q_i$ to $Q_{i+1}$. Observe by (iii) that 
$\{[r(v_i,v_{i+1}),\rho(v_i,v_{i+1})],[v_{i+1},p'(v_i,v_{i+1})]\}\in M_i^-$,
implying that $\{r(v_i,v_{i+1}),v_{i+1}\}$ is an edge in $Q_i=\hp{M_i^-}$. 
Recalling that $M_i^-$ is almost perfect, all vertices in $R\times\{0,1\}$ are
matched, so $r(v_i,v_{i+1})\in R$ is in fact adjacent to another vertex 
$\omega_i\neq v_{i+1}$ along an edge in $Q_i=\hp{M_i^-}$. 
By (iii) and \eqref{eq:d-arc} we have $\{v_i,r(v_i,v_{i+1})\}$ is an edge
in $G$ but not in $Q_i=\hp{M_i^-}$, and $Q_i$ ends at $v_i$. 
Thus, $r(v_i,v_{i+1})$ is the unique degree-three vertex in the lollipop.
Next, the lollipop step proceeds to delete an edge adjacent to the degree-three 
vertex $r(v_i,v_{i+1})$ in the lollipop. This edge is 
$\{v_{i+1},r(v_i,v_{i+1})\}$ by the previous analysis. It follows
that $Q_{i+1}$ is obtained from $Q_i$ by deleting $\{v_{i+1},r(v_i,v_{i+1})\}$
and inserting $\{v_i,r(v_i,v_{i+1})\}$. Thus, $Q_{i+1}$ ends
at $u_{i+1}=v_{i+1}$. Define $M_{i+1}^-$ by
starting with $M_i^-$ and deleting the edge 
$\{[r(v_i,v_{i+1}),\rho(v_i,v_{i+1})],[v_{i+1},p'(v_i,v_{i+1})]\}$
as well as inserting the edge
$\{[v_i,p(v_i,v_{i+1})],[r(v_i,v_{i+1}),\rho(v_i,v_{i+1})]\}$.
Fix $p_{i+1}=p'(v_i,v_{i+1})$.
From (i), (ii), and (iii) we have that $M_{i+1}^-$ is an almost perfect
matching that satisfies property $i+1$. Furthermore, $Q_{i+1}=\hp{M_{i+1}^-}$.

Analysis of the lollipop step mapping $Q_i$ to $Q_{i+1}$
for $i=d'+d,d'+d+1,\ldots,2d'+d-1$ is now similar, but relying on
properties (i'), (ii'), (iii') instead. From the existence of an
almost perfect matching $M_i^-$ of $F_\lambda$ that satisfies 
property $i$ and $Q_i=\hp{M_i^-}$, by a similar analysis 
we conclude that there exists an almost perfect matching
$M_{i+1}^-$ of $F_\lambda$ that satisfies 
property $i+1$ and $Q_{i+1}=\hp{M_{i+1}^-}$. 
Since $v_{2d'+d}=u$ and $u$ is adjacent to $t$ in $G$, from (i') we 
conclude in particular that $P_e'=Q_{2d'+d}$ and thus $h=2d'+d$. 
Since $2d'+d\leq n$, we have shown 
that the lollipop method terminates in at most $n$ lollipop steps.
\end{proof}

We note that the algorithm implicit in the proof not only uses at most a linear number of lollipop steps, but also can be implemented with the guidance of $D_{\lambda,H}$ to run in linear time.

\subsection{Graph-theoretic structural corollaries}

\label{sect:structural}

We now restate and prove Corollaries~\ref{cor:ahc} and~\ref{cor:chc}.

\corahc*

\begin{proof}
The proof of Theorem~\ref{thm:ahc} in the previous section shows how 
to generate one other Hamiltonian cycle $H'\neq H$ such that 
the vertex colorings $\chi_{H'}$ and $\chi_H$ 
differ in at least one vertex in $R$ but agree for all vertices in $L$.
We can change the roles of $L$ and $R$, $s$ and $t$, and the two colors in the construction, and 
compute another Hamiltonian cycle $H''$ from $H$ with $\chi_{H''}$ 
different from $\chi_H$ in $L$ but the same in $R$. 
Clearly, as the vertex coloring is unique for a Hamiltonian cycle, 
$H'$ and $H''$ must be different Hamiltonian cycles that are 
also different from $H$. Also, taking either of $H'$ or $H''$ as 
the source, again flipping the roles of $L$ and $R$, we can generate 
a fourth Hamiltonian cycle $H'''$ whose coloring $\chi_{H'''}$ is different 
from the colorings of $H$, $H'$, and $H''$. This way we have generated at least 
three new distinct Hamiltonian cycles $H',H'',$ and $H'''$ from $H$.
\end{proof}

\corchc*

\begin{proof}
Given a Hamiltonian cycle $H$, we can use Theorem~\ref{thm:ahc} to obtain 
another Hamiltonian cycle $H'$ that in particular must have an edge 
$f\in E(H')\setminus E(H)$. From Corollary~\ref{cor:ahc} we have 
thus that there exist at least four distinct Hamiltonian cycles in $G$
with $f$ as the anchor edge. This means we have generated at least five 
distinct Hamiltonian cycles in $G$. 
From Bos\'ak's theorem~\cite{Bosak1967} we know every cubic bipartite graph 
must have an even number of distinct Hamiltonian cycles, which shows that there 
must be at least six distinct Hamiltonian cycles in every cubic bipartite 
Pfaffian graph. 
\end{proof}

\section{Faster algorithms for TSP and counting Hamiltonian cycles}
\label{sect:fhc}
In this section we describe and analyze our new algorithms for TSP and counting Hamiltonian cycles restricted to bipartite Pfaffian graphs, in particular bipartite planar ones.

\subsection{Algorithmic preliminaries and conventions}
Our algorithms in this section differ from earlier ones on planar graphs 
in that we begin by computing and fixing a Pfaffian orientation $\oa G$ of the 
input graph $G$. This can be done in linear time in the case of 
planar graphs, and in $O(n^3)$ time with the algorithm 
in~\cite{Robertson1999} for any bipartite Pfaffian graph. 
We also fix an arbitrary edge $e\in E(G)$ with $e=\{s,t\}$ for distinct 
$s,t\in V(G)$ and compute the orientation $\oa G_e$ of $G$ as well as 
the vertex bipartition $(L,R)$ of $G$ with $s\in L$ and $t\in R$. 

We will first consider the anchored problems of finding the shortest traveling 
salesperson tour through $e$, and counting the Hamiltonian cycles 
through $e$. To get the full solution, we can consider each edge in turn 
and in the case of counting, divide the total amount by $n$ in the end. 
This only incurs a polynomial overhead on the running time.

Apart from this initial step of fixing an edge and a Pfaffian orientation, 
our algorithm follows and uses well-known material for dynamic programming 
over graph decompositions and is only repeated here for the reader's 
convenience. The only new parts are what we count, how to represent 
partial solutions of what we count, and how to update them. 

\subsection{Graph decompositions}
We consider some of the most familiar graph decompositions, originally proposed in Robertson and Seymour~\cite{Robertson1986, Robertson1991}. We will not explicitly consider tree width as a parameter in our algorithms but define it anyway along the way to explain a path decomposition.

\medskip
\noindent
{\em Tree and path decompositions.}
A \emph{tree decomposition} of an undirected graph $G$ is a tree $T_\td$
in which each vertex $x\in V(T_\td)$ is associated with a set of 
vertices $B_x\subseteq V(G)$, called a \emph{bag}, 
such that $\cup_{x\in V(T_\td)} B_x=V(G)$ with the properties
\begin{enumerate}
\item 
for every $\{u,v\}\in E(G)$, 
there exists an $x\in V(T_\td)$ such that both $u,v\in B_x$, and
\item 
if $v\in B_x$ and $v\in B_y$, then $v\in B_z$ for all $z\in V(T_\td)$ on 
the path joining $x$ and $y$ in $T_\td$.
\end{enumerate}

The width of a tree decomposition is measured in the size of its largest bag. 
More precisely, we say that the {\em width} of the decomposition is equal to 
$\max_{x\in V(T_\td)} |B_x|-1$. A \emph{path decomposition} is a tree 
decomposition where the tree $T_\td$ is a path. We also say that the 
{\em tree width} $\mbox{tw}(G)$, and the {\em path width} $\mbox{pw}(G)$, 
of the 
graph $G$ equals the minimum width of any such decompositions for the graph.

A \emph{nice} tree decomposition as defined in~\cite{Bodlaender2015} is 
a tree decomposition $T_\td$ with one special bag $r$ called the
{\em root} and in which each bag is one of the following types:
\begin{enumerate}
\item 
Leaf bag: a leaf $x\in V(T_\td)$ with $B_x = \emptyset$.
\item 
Introduce vertex bag: an internal vertex $x\in V(T_\td)$ with 
one child vertex $y$ for which $B_x = B_y \cup \{v\}$ for some 
$v \not \in B_y$. This bag is said to {\em introduce} $v$.
\item 
Introduce edge bag: an internal vertex $x\in V(T_\td)$ labeled with 
an edge $\{u,v\} \in E(G)$ with one child bag $y$ for 
which $u, v \in B_x = B_y$. This bag is said to {\em introduce} $\{u,v\}$.
\item 
Forget bag: an internal vertex $x\in V(T_\td)$ with one child 
bag $y$ for which $B_x = B_y \setminus \{v\}$ for some $v \in B_y$. 
This bag is said to {\em forget} $v$.
\item 
Join bag: an internal vertex $x\in V(T_\td)$ with two child 
vertices $p$ and $q$ with $B_x = B_p = B_q$.
\end{enumerate}
Moreover, every edge $\{u,v\}\in E(G)$ is introduced exactly once. 
Proposition~2.2~in~\cite{Bodlaender2015}~shows how to compute a nice tree 
decomposition from an arbitrary tree decomposition of the same tree 
width in polynomial time. A nice path decomposition is defined similarly 
and has no join bags. It is preferable to put the root vertex at one 
of the two leaf bags.

\medskip
\noindent
{\em Branch decompositions.}
A \emph{branch decomposition} of $G$ is an unrooted binary tree 
$T_\bd$ in which edges are associated with subsets of $V(G)$, 
called \emph{middle sets}. Each leaf in $T_\bd$ is associated with a unique
edge of $G$. With each edge $x\in E(T_\bd)$ we associate the
{\em middle set} $B_x$ consisting of every vertex $v\in V(G)$ that 
appears in both subtrees obtained by removing $x$ from $T_\bd$. That is, 
for each vertex $v\in V(G)$ both of the subtrees have at least 
one leaf associated with a graph edge in $G$ incident to $v$.
The {\em width} of the branch decomposition is $\max_{x\in E(T_\bd)} |B_x|-1$. 
The {\em branch width} $\mbox{bw}(G)$ of a graph is the minimum width 
of any branch decomposition of $G$. An optimal branch decomposition 
in a planar graph can be found in polynomial time, 
see Gu and Tamaki~\cite{GuT2008}.

\subsection{Vertex states in a graph decomposition}

The correspondence between perfect matchings and good extensions 
(Lemma~\ref{lem:matching-to-extension} and 
Lemma~\ref{lem:extension-to-matching}) makes it possible to count and 
detect optimally edge-weighted Hamiltonian cycles by considering 
all colorings in a dynamic programming fashion over graph decompositions. 
That is, recalling Section~\ref{sect:finding-a-good-coloring}, 
we consider all ways to {\em simultaneously} produce 
a coloring $\lambda:L\rightarrow \{0,1\}$ and as well as a 
perfect matching $M$ in the corresponding bipartite graph $F_\lambda$. 
However, there will be no notion of ports of the vertices in $L$ in 
our dynamic programming. Indeed, we will just make sure by our vertex 
states that every vertex in $L$ is mapped twice. Since edges of the graph 
are considered in a fixed order across the dynamic programming over 
the graph decomposition, we may interpret the matching in $F_\lambda$ we 
build to consistently use port $0$ for the first edge we choose
incident to any $\ell\in L\setminus\{s\}$, and port $1$ for the
second edge we choose incident to that $\ell$. For $\ell=s$, to meet
~\ref{eq:f-s}, we treat a chosen edge $\{s,u\}$ with $u\neq t$ 
to use port $1$, and the second edge $\{s,t\}$ to use port $0$. 
To make sure that we indeed pick $\{s,t\}$ as one of the two
edges incident to $s$, we can use an extra global state bit.

One only needs four states per vertex to keep track of partial solutions. 
For each vertex in the original graph $G$, that now represents two 
vertices in the graph $F_\lambda$ under a partial coloring of the vertices 
in $L$, we have the states 
\begin{enumerate}
\item[$(\sigma_{00})$] 
  we have not matched this vertex,
\item[$(\sigma_{01})$] 
  we have matched this vertex once, and we have colored it $0$ (blue),
\item[$(\sigma_{10})$] 
  we have matched this vertex once, and we have colored it $1$ (orange),
\item[$(\sigma_{11})$] 
  we have matched this vertex twice.
\end{enumerate}
We do however need to handle vertices in $L$ and $R$ differently. 
If we have a vertex $\ell\in L$ that is matched once with color $c$, 
we have to make sure that the next matching edge uses the same color $c$ 
for the vertex $\ell$. That is, we have already assigned a color to 
this vertex with $\lambda$ and we need to stick to it. But for 
a vertex $r\in R$ that is matched once with color $c$, we have to make 
sure the next matching edge uses the {\em opposite} color $\bar c=1-c$,
as we seek two edges that match differently colored vertices in $R$ 
in our perfect matching in $F_\lambda$. We also only use the states $\sigma_{00},\sigma_{01},$ and $\sigma_{11}$ for the vertex $s$ to make sure it gets the color $0$ (blue).

\subsection{Dynamic programming over a path decomposition}

We will now see how we can solve minimum TSP over a path decomposition 
in Theorem~\ref{thm:TSP-pw}. 

\thmtsppw*

For a variable $A$, let $A \mineq B$ be shorthand for $A:=B$ if $A$ was 
not previously assigned a value, and $A:=\min(A,B)$ otherwise. 
Similarly, let $A \acceq B$ be shorthand for $A:=B$ if $A$ was not 
previously assigned a value, and $A:=A+B$ otherwise.

As input we are given an edge-weighted graph $G$ with an 
edge weight function $w:E(G)\rightarrow \mathbb{Q}_{>0}$, and a path 
decomposition $T_\pd$ of width $\operatorname{pw}(G)$.  
We will compute the optimum TSP tour from below up to the root node 
$r$ of $T_\pd$. We say a node $y$ is \emph{below} another node $x\neq r$ 
if the path from $r$ to $y$ in $T_\pd$ goes through $x$.
We define for each vertex $x\in V(T_\pd)$ a 
map $M_x:\{\sigma_{00},\sigma_{01},\sigma_{10},\sigma_{11}\}^{B_x}\rightarrow \mathbb{Q}_{>0}$.
The mapping captures for each set of states $S$ on the vertices in 
the bag $B_x$, the smallest possible sum of edge weights for any matching 
such that 
\begin{enumerate}
\item 
all vertices that are not in the bag $B_x$, but are in some bag for some 
node $y$ below $x$, have been matched to the state $\sigma_{11}$,
\item 
all vertices in the bag $B_x$ have been matched to the corresponding state in $S$, and
\item 
all remaining vertices are unmatched and have state $\sigma_{00}$.
\end{enumerate}

We evaluate the nodes in the path decomposition one at a time, 
starting with the leaf node on the other side of the root. The last node 
processed is the root (the leaf node on the other side of the path). 
If $\emptyset \in M_r$, we return $M_r(\emptyset)$ as the length of 
the shortest traveling salesperson tour through the edge $e$, otherwise 
we return that there is not any tour through $e$. We update the node's 
states as follows according to type of node.
Let $y$ be the current node and $x$ its predecessor (if any):
\begin{enumerate}
\item
{\em Leaf node.}
If $y$ is not the root, we set $M_y(\emptyset):=0$. If $y$ is the root, we set $M_y=M_x$ with $x$ the forget bag adjacent to
this leaf node $y$, 
\item
{\em Introduce vertex node.}
$B_y=B_x\cup\{v\}$.
For every state $S\in M_x$, we add the state $S'=S+\{v\leftarrow \sigma_{00}\}$ to $M_y$ with $M_y(S')\mineq M_x(S)$.
\item
{\em Introduce edge node.}
$\{\ell,r\}\in B_y=B_x$, with $\ell\in L$ and $r\in R$.
For every state $S\in M_x$, we first add the state $S \in M_y$ 
with $M_y(S) \mineq M_x(S)$. Next, we add the state $S'$ to $M_y$ for 
every matching pair in Table~\ref{table:edge-states} where we have replaced 
the states for $\ell$ and $r$ in $S$ with the ones given in the tables. 
We set $M_y(S') \mineq M_x(S)+w(\{\ell,r\})$ to account for the weight 
of the added edge. Note that there are different state transitions 
depending on whether or not $(\ell,r)\in E(\oa G_e)$.

\begin{table}[]
 \caption{Mapping from vertex states for the endpoints of an edge $\{\ell,r\}$ to the new states, if any. Left: The states' mapping when $(\ell,r)\notin E(\oa G_e)$. Right: The states' mapping when $(\ell,r)\in E(\oa G_e)$.} 
  \label{table:edge-states}

\begin{tabular}{l || c | c | c | c}
 $\ell\backslash r$ & $\sigma_{00}$ & $\sigma_{01}$ & $\sigma_{10}$ & $\sigma_{11}$ \\
 \hline
 \hline
 $\sigma_{00}$  & $(\sigma_{01},\sigma_{01})$ & $(\sigma_{10},\sigma_{11})$ & $(\sigma_{01},\sigma_{11})$  & - \\
 & $(\sigma_{10},\sigma_{10})$ & & & \\
 \hline
 $\sigma_{01}$  & $(\sigma_{11},\sigma_{01})$  & -  & $(\sigma_{11},\sigma_{11})$  & -  \\
 \hline
 $\sigma_{10}$ & $(\sigma_{11},\sigma_{10})$  & $(\sigma_{11},\sigma_{11})$ & -  & -  \\
 \hline
 $\sigma_{11}$ & - & - & - & - 
\end{tabular}
\qquad
\begin{tabular}{l || c | c | c | c}
 $\ell\backslash r$ & $\sigma_{00}$ & $\sigma_{01}$ & $\sigma_{10}$ & $\sigma_{11}$ \\
 \hline
 \hline
 $\sigma_{00}$  & $(\sigma_{01},\sigma_{10})$ & $(\sigma_{01},\sigma_{11})$ & $(\sigma_{10},\sigma_{11})$  & - \\
 & $(\sigma_{10},\sigma_{01})$ & & & \\
 \hline
 $\sigma_{01}$  & $(\sigma_{11},\sigma_{10})$  & $(\sigma_{11},\sigma_{11})$  & -  & -  \\
 \hline
 $\sigma_{10}$ & $(\sigma_{11},\sigma_{01})$  & - & $(\sigma_{11},\sigma_{11})$   & -  \\
 \hline
 $\sigma_{11}$ & - & - & - & - 
\end{tabular}
\end{table}

\item
{\em Forget vertex node.}
$B_y=B_x\setminus\{v\}$.
For each state $S\in M_x$ such that $S(v)=\sigma_{11}$, add $S'=S-\{v\}$ 
to $M_y$ with $M_y(S')\mineq M_x(S)$.
\end{enumerate}
This concludes the description of the algorithm and 
the proof of Theorem~\ref{thm:TSP-pw}. 

We next turn to the proof of Theorem~\ref{thm:chc-pw}.

\thmchcpw*

To modify the algorithm to count Hamiltonian cycles in 
Theorem~\ref{thm:chc-pw}, we let $M$ map to the ring $\mathbb{Z}[\xi]$ 
instead of $\mathbb{Q}_{>0}$ with $\xi$ a formal polynomial indeterminate.  
We set the initial state to $M(\emptyset)=1$ and replace 
$M(S')\mineq M(S)+w(\{\ell,r\})$ in the introduce-edge nodes with
$M(S')\acceq \xi M(S)$, and also replace the equality operator 
$\mineq$ with $\acceq$ in all steps. Finally, we look at the 
coefficient of $\xi^n$ in $M_r(\emptyset)$ 
to see how many Hamiltonian cycles go through the edge $e$.
Summing over all edges $e$ and dividing by $n$ gives us the final count.

Note that since there are at most $4^{\operatorname{pw}(G)+1}$ states 
to consider at each bag, the running time follows.
This completes the proof of Theorem~\ref{thm:chc-pw}.

\subsection{Dynamic programming over a branch decomposition}

We will now see how we can solve minimum TSP over a branch decomposition 
to obtain Theorem~\ref{thm:TSP-bw}.

\thmtspbw*

Given an edge weighted graph $G$ with weights $w:E\rightarrow \mathbb{Q}_{>0}$ 
and a branch decomposition $T_\bd$ of width $\operatorname{bw}(G)$ as input, 
we first insert a new root node $r$ in $T_\bd$ at an arbitrary edge 
$\{u,v\}\in E(T_\bd)$ by subdividing it as $\{u,r\}$ and $\{r,v\}$. 
We will compute the optimum TSP tour from the leaves up to the root $r$ 
in $T_\bd$.
We say an edge $y\in E(T_\bd)$ is \emph{below} another edge $x\in E(T_\bd)$
with $x\neq r$ if the path joining $r$ and $y$ in $T_\bd$ goes through $x$.
We define for each edge $x\in E(T_\bd)$ 
a map $M_x:\{\sigma_{00},\sigma_{01},\sigma_{10},\sigma_{11}\}^{B_x}\rightarrow \mathbb{Q}_{>0}$.
The mapping captures for each set of states $S$ on the vertices in 
the middle set $B_x$, the smallest possible sum of edge weights for 
any matching such that 
\begin{enumerate}
\item 
all vertices that are not in the middle set $B_x$, but are in some 
middle set for some node $y$ below $x$, have been matched to the state $\sigma_{11}$,
\item 
all vertices in the middle set $B_x$ have been matched to the 
corresponding state in $S$, and
\item 
all remaining vertices are unmatched and have state $\sigma_{00}$.
\end{enumerate}

For any internal node $v\neq r$ in $T_\bd$ we have three incident 
edges $x$, $y$, and $z$; let $z$ be the unique edge closest 
to $r$ in $T_\bd$. We define the four pairwise disjoint sets
\begin{enumerate}
\item the Left set: $L_v=(B_x \cap B_z) \setminus B_y$, 
\item the Right set: $R_v=(B_y\cap B_z) \setminus B_x$,
\item the Forget set: $F_v=(B_x\cap B_y) \setminus B_z$, and
\item the Intersection set: $I_v=B_x\cap B_y \cap B_z$.
\end{enumerate}

By identifying the sets 
$B_x=L_v\cup F_v \cup I_v$, 
$B_y=R_v\cup F_v \cup I_v$,
and 
$B_z=L_v \cup R_v \cup I_v$, we get a bound of $\operatorname{bw}(G)$ 
on the size of each of these three unions. In particular
\begin{equation}
\label{eq:middle-set-bound}
|I_v|+|L_v|+|R_v|+|F_v|\leq \frac{3}{2}\operatorname{bw}(G).
\end{equation}

\begin{table}[]
 \caption{Matching states from $B_x$ and $B_y$ in the forget set marked with an ``x''. Left: For vertices in $L$. Right: For vertices in $R$.} 
  \label{table:states-in-forget-sets}
\begin{tabular}{l || c | c | c | c}
 $L$ & $\sigma_{00}$ & $\sigma_{01}$ & $\sigma_{10}$ & $\sigma_{11}$ \\
 \hline
 \hline
 $\sigma_{00}$  &  &  &  &  x\\
 \hline
 $\sigma_{01}$  &  & x  &   &  \\
 \hline
 $\sigma_{10}$ &  &  & x  &  \\
 \hline
 $\sigma_{11}$ & x &  &  &  
\end{tabular}
\qquad \qquad
\begin{tabular}{l || c | c | c | c}
  $R$ & $\sigma_{00}$ & $\sigma_{01}$ & $\sigma_{10}$ & $\sigma_{11}$ \\
 \hline
 \hline
 $\sigma_{00}$  &  &  &  & x  \\
 \hline
 $\sigma_{01}$  &  &  &  x &  \\
 \hline
 $\sigma_{10}$ &  & x &  &  \\
 \hline
 $\sigma_{11}$ & x &  &  & 
\end{tabular}
\end{table}

We compute the mapping $M$ for all edges in the branch decomposition tree 
in order from furthest to nearest the root $r$.
For any leaf node $v$ and its only incident edge $z$ representing 
an edge $\{\ell,r\}\in E(G)$, we 
set $M_z(\ell\leftarrow \sigma_{00},r\leftarrow \sigma_{00})=0$ 
and either
\begin{enumerate}
\item 
$M_z(\ell\leftarrow \sigma_{01},r\leftarrow \sigma_{01})=M_z(\ell\leftarrow \sigma_{10},r\leftarrow \sigma_{10})=w(\{\ell,r\})$ if $(\ell,r)\notin E(\oa G_e)$, or
\item 
$M_z(\ell\leftarrow \sigma_{01},r\leftarrow \sigma_{10})=M_z(\ell\leftarrow \sigma_{10},r\leftarrow \sigma_{01})=w(\{\ell,r\})$ if $(\ell,r)\in E(\oa G_e)$.
\end{enumerate}

For any internal node $w$ with three edges $x,y,$ and $z$, with $z$ closest 
to $r$, we do the following:
For every state $S_x\in M_x$ and every state $S_y\in M_y$ such that 
vertices in $I_v$ have the same state in $S_x$ and $S_y$, and vertices 
in $F_v$ have \emph{matching} states in $S_x$ and $S_y$ according to 
Table~\ref{table:states-in-forget-sets}, we define a new state $S'$ by
\begin{enumerate}
\item 
taking $I_v$ vertices' state values from both $S_x$ and $S_y$ (as they agree), 
\item 
taking $L_v$ vertices' state values from $S_x$, and
\item 
taking $R_v$ vertices' state values from $S_y$. 
\end{enumerate}

We set $M_z(S')\mineq M_x(S_x)+M_y(S_y)$.

Finally, for the two edges $x$ and $y$ incident to the root $r$, we list 
all pairs of states with $S_x\in M_x$ and $S_y\in M_y$ such that all 
vertices match according to Table~\ref{table:states-in-forget-sets}. 
The minimum of $M_x(S_x)+M_y(S_y)$ over all such pairs is the length of 
the shortest Hamiltonian cycle in $G$.

This completes the description of the algorithm in Theorem~\ref{thm:TSP-bw}. 
We observe that the running time is bounded by 
$4^{|I_v|+|L_v|+|R_v|+|F_v|}$ at every internal node $v$. This 
is at most $8^{\operatorname{bw}(G)}$ by the bound 
in \eqref{eq:middle-set-bound}. The computation needed at the root vertex 
is at most $4^{\operatorname{bw}(G)}\operatorname{poly}(n)$ if we iterate 
over all states $S_x\in M_x$ and compute what the unique $S_y\in M_y$ is 
that matches $S_x$ and look it up in an efficient dictionary structure 
for $M_y$. This completes the proof of Theorem~\ref{thm:TSP-bw}.

We next turn to the proof of Theorem~\ref{thm:chc-bw}.

\thmchcbw*

To modify the algorithm to count Hamiltonian cycles in 
Theorem~\ref{thm:chc-bw}, we again let $M$ map to $\mathbb{Z}[\xi]$ with
$\xi$ a formal polynomial indeterminate.
We set the initial states at the leaves to $M(\emptyset)=1$ and either
\begin{enumerate}
\item $M_z(\ell\leftarrow \sigma_{01},r\leftarrow \sigma_{01})=M_z(\ell\leftarrow \sigma_{10},r\leftarrow \sigma_{10})=\xi$ if $(\ell,r)\notin E(\oa G_e)$, or
\item $M_z(\ell\leftarrow \sigma_{01},r\leftarrow \sigma_{10})=M_z(\ell\leftarrow \sigma_{10},r\leftarrow \sigma_{01})=\xi$ if $(\ell,r)\in E(\oa G_e)$.
\end{enumerate}

We also replace $M_z(S')\mineq M_x(S_x)+M_y(S_y)$ for $M_z(S')\acceq M_x(S_x)M_y(S_y)$ at the internal nodes.
Finally, for the root $r$ we compute the sum $T$ over all pairs of 
states $S_x\in M_x$ and $S_y\in M_y$ such that vertices match according 
to to Table~\ref{table:states-in-forget-sets} of $M_x(S_x)M_y(S_y)$. 
The coefficient of $\xi^n$ in $T$ is the number of Hamiltonian cycles 
through $e$.
Summing this contribution over all edges $e$ and dividing by $n$ gives us 
the final count.

Contrary to the TSP case, there is a known technique to speed up the 
counting computation at every internal node proposed by Dorn~\cite{Dorn2006}. 
As we in this case want to compute a sum--product formula for each triple of 
states in $I_v,L_v$, and $R_v$ at an internal node $v$, we can invoke 
fast matrix multiplication to get the counts faster than through 
a na\"ive enumeration. For each state $\iota \in I_v$, we define two 
matrices $A_{v,\iota}$ that is a $4^{|L_v|}\times 4^{|F_v|}$ matrix with 
rows representing states in $L_v$ and columns states in $F_v$, 
and $B_{v,\iota}$ that is a  $4^{|F_v|}\times 4^{|R_v|}$ matrix with 
columns representing states in $R_v$ and rows representing states in $F_v$. 
However, the rows are \emph{permuted} so that matching states 
according to Table~\ref{table:states-in-forget-sets} are paired up. 
That is, the forget set at column index $i$ in $A_{v,\iota}$ and the 
forget set at row $i$ in $B_{v,\iota}$ are such that each pair of vertex 
states match. This enables us to compute the matrix product 
$C_{v,\iota}=A_{v,\iota}B_{v,\iota}$ to get the 
state $\iota \cup \lambda \cup \rho$ by mapping it to the 
entry $C_{v,\iota}[\lambda,\rho]$. Dorn~\cite{Dorn2006} proved 
that the worst computation times occurs when 
$|L_v|=|R_v|=|F_v|=\frac{\operatorname{bw}(G)}{2}$, which gives us 
the running time in Theorem~\ref{thm:chc-bw}.

Finally, we turn to Corollary~\ref{cor:TSP-planar} 
and Corollary~\ref{cor:chc-planar}, restated below for convenience.

\cortspplanar*

\corchcplanar*

These follow from Theorem~\ref{thm:TSP-bw} and~\ref{thm:chc-bw}, 
respectively, after applying a bound on the branch width for planar 
graphs by Fomin and Thilikos~\cite{FominT2006}. 
They prove $\operatorname{bw}(G)<2.122\sqrt{n}$ in a planar graph, and 
as mentioned above, such a branch decomposition can be found in 
polynomial time. This concludes the description of our algorithmic results.

\section*{Acknowledgment}

\noindent
We thank the anonymous reviewers of an earlier version of this manuscript 
for their comments, in particular for pointing out~\cite{GorskySW2023}. \ifdefined\A \else AB is supported by the VILLUM Foundation, Grant 16582. JN is supported by the project CRACKNP that has received funding from the European Research Council (ERC) under the European Union's Horizon 2020 research and innovation programme (grant agreement No 853234).\fi

\newpage
\appendix

\begin{center}
\textsc{\large Appendix}
\end{center}

\section{NP-completeness in planar graphs}
\label{sect: NP-hard}
In this section we prove that the degree requirement is necessary in Theorem~\ref{thm:ahc} and~\ref{thm:ahc-log}, i.e., when we drop the minimum 
degree constraint, just detecting if the graph has another Hamiltonian cycle becomes NP-hard.

\begin{theorem}[Hardness of detecting another Hamiltonian cycle]
\label{thm:hard}
Given a bipartite Pfaffian graph $G$ and a Hamiltonian cycle $H$ in $G$
as input, it is NP-complete to decide whether $G$ has a Hamiltonian 
cycle $H'\neq H$.
\end{theorem}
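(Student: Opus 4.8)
The plan is to prove Theorem~\ref{thm:hard} by a polynomial-time reduction from a known NP-hard Hamiltonicity problem, engineered so that the constructed graph comes equipped with one ``trivial'' Hamiltonian cycle and so that any other Hamiltonian cycle encodes a solution to the source problem. The natural source is \textsc{Hamiltonian Cycle} in bipartite planar graphs, or even better in cubic bipartite planar graphs, which is NP-complete by Akiyama, Nishizeki, and Saito~\cite{AkiyamaNS1980} (already cited in the introduction); planarity guarantees Pfaffianity and bipartiteness is preserved by the gadgets below. Membership in NP is immediate: a certificate is the edge set of $H'$, and one checks in polynomial time that it is a Hamiltonian cycle of $G$ distinct from $H$.

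For the hardness direction, given an instance $G_0$ of bipartite planar Hamiltonicity, the idea is to build a graph $G$ together with an ``obvious'' Hamiltonian cycle $H$ of $G$ such that $G$ has a second Hamiltonian cycle if and only if $G_0$ is Hamiltonian. The cleanest way I would attempt this: pick an arbitrary edge $\{u_0,v_0\}$ of $G_0$ and attach a pendant-like gadget that creates a long path $P$ all of whose internal vertices have degree two, so that every Hamiltonian cycle of $G$ must traverse $P$ as a single segment; route $P$ so that it connects to the rest of $G$ at $u_0$ and $v_0$. Now add to the picture a second ``default'' routing through a parallel gadget that allows a Hamiltonian cycle of $G$ to exist trivially — e.g.\ by adding a new path through a set of extra degree-two vertices that together with $P$ and a spanning structure of $G_0$ always forms one fixed Hamiltonian cycle $H$ regardless of whether $G_0$ is Hamiltonian. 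The point of the construction is that the \emph{only} way to produce a Hamiltonian cycle other than $H$ is to route through the ``$G_0$-part'' in a way that forces a Hamiltonian cycle of $G_0$ itself. Concretely, a standard trick is: take $G_0$, and for a chosen edge $e_0=\{u_0,v_0\}\in E(G_0)$ add a new vertex $w$ of degree two adjacent to $u_0$ and $v_0$, obtaining $G$; then any Hamiltonian cycle of $G$ either uses both edges $\{w,u_0\},\{w,v_0\}$ (and deleting $w$ yields a Hamiltonian path of $G_0$ from $u_0$ to $v_0$, hence a Hamiltonian cycle of $G_0$ via $e_0$) or — but there is no other option for $w$. That alone does not give a guaranteed first cycle $H$; so I would instead subdivide and add a small controllable gadget with a ``switch'' edge $f$ such that setting $f$ one way produces the canonical cycle $H$ and setting it the other way forces traversal of $G_0$; the known NP-hardness of \textsc{Another Hamiltonian Cycle} (referenced in the introduction as Appendix material) essentially follows this template.

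The main obstacle, and the step I expect to consume most of the work, is simultaneously enforcing \emph{three} properties: (a) $G$ is bipartite and planar (hence Pfaffian), (b) $G$ genuinely has the prescribed Hamiltonian cycle $H$ no matter what $G_0$ is, and (c) every Hamiltonian cycle of $G$ other than $H$ decodes to a Hamiltonian cycle of $G_0$. Bipartiteness is the delicate one: each gadget must add paths of the correct parity (even length between the two bipartition classes), so I would carefully track the two color classes through every added vertex and, where a parity mismatch arises, insert a length-two path (a single subdivision vertex) to fix it — this is harmless since subdivisions preserve planarity and bipartiteness and do not change Hamiltonicity behaviour of degree-two paths. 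Planarity is maintained by attaching all gadgets inside a single face of a fixed planar embedding of $G_0$ incident to the chosen edge $e_0$. Once the gadget is fixed, verifying (b) and (c) is a finite case analysis on how a Hamiltonian cycle of $G$ can behave on the constant-size gadget, together with the standard equivalence between Hamiltonian paths of $G_0$ with prescribed endpoints and Hamiltonian cycles of $G_0$ through $e_0$.

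I would then conclude: the reduction is polynomial-time; $G$ is bipartite and planar, hence a bipartite Pfaffian graph by Kasteleyn~\cite{Kasteleyn1967}; $H$ is a Hamiltonian cycle of $G$ constructible explicitly; and $G$ admits $H'\neq H$ if and only if $G_0$ is Hamiltonian. Combined with membership in NP, this establishes NP-completeness and shows the minimum-degree-three hypothesis in Theorems~\ref{thm:ahc} and~\ref{thm:ahc-log} cannot be dropped.
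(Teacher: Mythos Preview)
Your high-level strategy---reduce from a known NP-hard problem, plant one ``free'' Hamiltonian cycle, and argue every other one decodes a solution---is exactly right, and membership in NP is fine. But the proposal has a genuine gap at the crucial step: you never actually construct the gadget. You try adding a degree-two vertex $w$, correctly note this does not give a guaranteed $H$, and then write ``I would instead \ldots\ add a small controllable gadget with a `switch' edge $f$'' without saying what that gadget is. That sentence is precisely where the proof has to happen. The difficulty you are glossing over is real: if $G_0$ may fail to be Hamiltonian while $G$ must be, then $H$ has to visit every vertex of $G_0$ using extra structure you add; but any extra structure rich enough to guarantee $H$ risks creating spurious second Hamiltonian cycles that do not encode a Hamiltonian cycle of $G_0$. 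Controlling that one-to-one correspondence, while keeping the graph bipartite and planar, is the whole content of the argument, and ``a finite case analysis on the constant-size gadget'' cannot be invoked before the gadget exists.

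For comparison, the paper sidesteps this tension by reducing from \textsc{Exact Set Cover} rather than from Hamiltonicity. It adds the full universe $U$ as an extra set so that $\mathcal{S}=\{U\}$ is always a solution, which yields the planted cycle $H$. The instance graph is built from an ``avoid-one'' gadget per universe element and a two-edge choice per set, linked by xor-gadgets (with Garey--Johnson--Tarjan crossover gadgets to restore planarity); a case analysis on these fixed gadgets establishes a bijection between exact covers and Hamiltonian cycles. This route buys exactly the control you are missing: the global structure of an exact cover forces each gadget into one of a small number of local states, so the one-to-one correspondence is checkable. If you want to salvage your Hamiltonicity-to-Hamiltonicity approach, you would need a comparably explicit construction and correspondence proof; as written, the proposal is a plan rather than a proof.
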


The problem is clearly in NP as one can guess a Hamiltonian cycle in polynomial time and 
check that it is a Hamiltonian cycle and that it differs from the given one.
Our hardness--reduction borrows several parts of the original proof by Garey, Johnson, and Tarjan~\cite{GareyJT1976} that \textsc{Hamiltonian Cycle} is NP-complete in planar graphs. In particular we use the xor-gadget and the cross-over construction from these reductions, see Figure~\ref{fig:xor} and Figure~\ref{fig:uncrossing}. We do however reduce from another problem than \textsc{SAT} as we find it easier to argue one-to-one correspondence between a Hamiltonian cycle and a solution to our NP-hard problem.
We reduce from the \textsc{Exact Set Cover} problem, that given an $n$-element universe $U$ and a family $\mathcal{F}$ of subsets of $U$ asks if there is a subset $\mathcal S\subseteq \mathcal{F}$ such that $\dot{\cup}_{S\in \mathcal S} S=U$. Karp~\cite{Karp72} showed that \textsc{Exact Set Cover} is NP-complete.
Given an instance $(U,\mathcal F)$ to  \textsc{Exact Set Cover}, we build a slightly larger instance $(U,\mathcal F')$, with $\mathcal F'=\mathcal F \bigcup \{U\}$, that is, we add a subset covering all of the universe by itself. With foresight, we will encode $(U,\mathcal F')$ in a graph so that every solution to the  \textsc{Exact Set Cover} problem corresponds to a unique Hamiltonian cycle in the graph. We will then provide the Hamiltonian cycle corresponding to the solution $\mathcal S=\{U\}$ and ask for another Hamiltonian cycle. Any other Hamiltonian cycle will then by definition correspond to a solution to the original   \textsc{Exact Set Cover} instance $(U,\mathcal F)$.

A common component in our gadgets is the forced edge, drawn as a thick black edge in the figures. It is encoded by a path on three edges with two intermediate additional degree-two vertices.
We will encode the condition for each $u\in U$ that exactly one subset in $F$ covers it with a gadget called avoid-one. The gadget encodes a set of \emph{choices} $c_1,\ldots,c_k$. It is depicted in Figure~\ref{fig:avoid-one}.

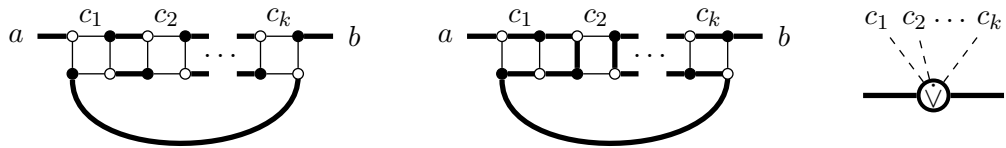
\begin{figure}[!ht]
\[
\begin{tikzpicture}[scale=0.5,shorten >=1pt, auto, node distance=1cm, ultra thick]
\avoidone
\end{tikzpicture}
\qquad 
\begin{tikzpicture}[scale=0.5,shorten >=1pt, auto, node distance=1cm, ultra thick]
\avoidoneex
\end{tikzpicture}
\qquad 
\begin{tikzpicture}[scale=0.5,shorten >=1pt, auto, node distance=1cm, ultra thick]
\avoidoneschematic
\end{tikzpicture}
\]
 \caption{The avoid-one gadget. Left: The gadget itself. Thick edges must be used by a Hamiltonian cycle. This is accomplished by replacing each thick edge with a path on three edges and two additional degree-two vertices. The key property of the gadget is that in any Hamiltonian cycle in our final construction, exactly one of the $k$ choice edges $c_1,c_2,\ldots, c_k$ is not part of the cycle. Middle: Any Hamiltonian cycle enters and exits only at the terminals $a$ and $b$ avoiding exactly one of the choices, here $c_2$. It will follow from the rest of the construction that the Hamiltonian cycle could not enter and exit also via a choice edge, since that part would then form a cycle of its own. Right: A schematic version of the gadget used in subsequent figures.} 
  \label{fig:avoid-one}
\end{figure}

We also need the xor-gadget that given a pair of edge-connected vertex pairs $a_1,a_2$ and $b_1,b_2$ ensures that a Hamiltonian cycle either goes through $a_1,a_2$ or $b_1,b_2$ but not both, see Figure~\ref{fig:xor}.

\begin{figure}[!ht]
\[
\begin{tikzpicture}[scale=0.5,shorten >=1pt, auto, node distance=1cm, ultra thick]
\xor
\end{tikzpicture}
\qquad \qquad \qquad
\begin{tikzpicture}[scale=0.5,shorten >=1pt, auto, node distance=1cm, ultra thick]
\xorschematic
\end{tikzpicture}
\]
 \caption{The xor-gadget. Left: The gadget itself. Thick edges must be used by a Hamiltonian cycle. The key property of the gadget is that in any Hamiltonian cycle in our final construction, either the cycle enters and leaves through $a_1$ and $a_2$ or through $b_1$ and $b_2$, but not both. Right: A schematic version of the gadget used in subsequent figures.} 
  \label{fig:xor}
\end{figure}
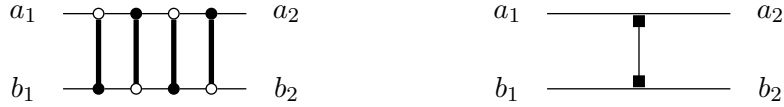

Our instance graph will have one strain of subset choice gadgets at the top, and one strain of avoid-one gadgets for each element in the universe $U$ at the bottom. The two strains are connected with each other at both the left and right end. Between the two strains run several xor-gadgets that ensure consistence of the subset selection and the universe cover. 
See Figure~\ref{fig:instance}.
Note that the xor-gadgets may cross each other but any such crossing can be resolved by a well-known uncrossing gadget from~\cite{GareyJT1976}, see Figure~\ref{fig:uncrossing}. This completes our construction.

\begin{figure}[!ht]
\[
\begin{tikzpicture}[scale=0.5,shorten >=1pt, auto, node distance=1cm, ultra thick]
\instance
\end{tikzpicture}
\]
 \caption{An instance corresponding to the \textsc{Exact Set Cover} instance $U=\{u_1,u_2,\ldots,u_6\}$ and $\mathcal{F}=\{S_1,S_2,\ldots,S_6\}$. Only two subsets $S_3=\{u_1,u_4,u_5\}$ and $S_4=\{u_2,u_3,u_4\}$ are drawn for clarity.} 
  \label{fig:instance}
\end{figure}
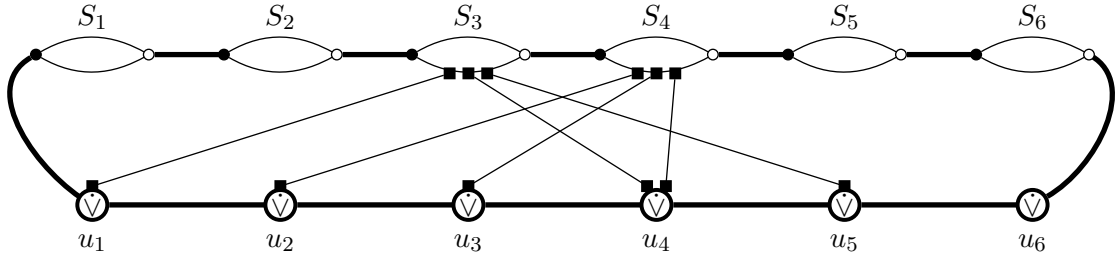

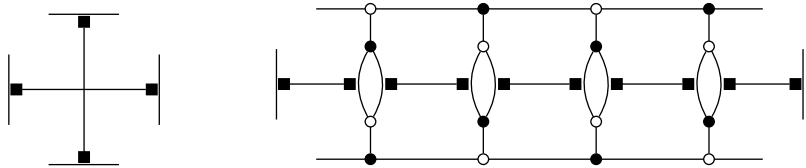
\begin{figure}[!ht]
\[
\begin{tikzpicture}[scale=0.5,shorten >=1pt, auto, node distance=1cm, ultra thick]
\xorcrossing 
\end{tikzpicture}
\qquad \qquad
\begin{tikzpicture}[scale=0.5,shorten >=1pt, auto, node distance=1cm, ultra thick]
\xoruncrossing
\end{tikzpicture}
\]
 \caption{Crossing and uncrossing. Left: Two crossing xor-gadgets. Right: Uncrossing via the cross-over construction from~\cite{GareyJT1976}.}
\label{fig:uncrossing}
\end{figure}

We now proceed with a proof Theorem~\ref{thm:hard}.
We will first argue that the set of solutions to the \textsc{Exact Cover} instance $(U,\mathcal F')$ corresponds one-to-one to the Hamiltonian cycles in the constructed graph. First consider any solution $\mathcal{S}\subseteq \mathcal{F}'$ to the \textsc{Exact Cover} instance $(U,\mathcal F')$. We can construct a Hamiltonian cycle through the constructed graph by taking the upper path (c.f. Figure~\ref{fig:instance}) on each subset choice gadget for each subset $S_i\in \mathcal{F}\setminus \mathcal{S}$, and the lower path for each subset choice gadget $S_i\in \mathcal{S}$, including covering all vertices in the associated xor-gadgets. Since this is a solution to the \textsc{Exact Cover}  problem, this part of the cycle will block exactly one of the choice edges in each avoid-one gadget with the corresponding xor-gadget, and we can take the unique path from $a$ to $b$ in the gadget that passes every vertex of the gadget and avoids using the one choice edge which is blocked from the subset choice gadget string side.

In the other direction, any Hamiltonian cycle must cover all choice edges in an avoid-one gadget except precisely one, since the xor-gadgets assure us that we cannot have paths entering and exiting at both sides. That means in particular that it is impossible for the Hamiltonian cycle to leave an avoid-one gadget in a choice edge to continue in a subset choice gadget (and vice versa). The path has to return to the other end of the choice edge. The omitted choice edge in each avoid-one gadget must be covered from the subset choice gadget string side, which forces the Hamiltonian cycle to take the lower path exactly for the subsets needed to cover $U$ in a solution to the \textsc{Exact Cover} instance $(U,\mathcal F')$.

The proof now follows by providing the unique Hamiltonian cycle that represents the planted solution $\mathcal S=\{U\}$ and ask for another one. By the above one-to-one relationship, that other Hamiltonian cycle encodes a solution to $(U,\mathcal F)$, which is NP-hard to find.
This completes the proof of Theorem~\ref{thm:hard}. \qedwhite

\newpage
\bibliographystyle{alphaurl}
\bibliography{pfaffian}

\end{document}